\newcommand{\ket}[1]{| #1 \rangle}
\newcommand{\bra}[1]{\langle #1 |}
\newcommand{\braket}[2]{\langle #1 | #2 \rangle}
\newcommand{\proj}[2]{| #1 \rangle\!\langle #2 |}
\newcommand{\id}{\ensuremath{\mathds{1}}}
\def\beq{\begin{equation}}
\def\eeq{\end{equation}}
\def\bq{\begin{quote}}
\def\eq{\end{quote}}
\def\ben{\begin{enumerate}}
\def\een{\end{enumerate}}
\def\bit{\begin{itemize}}
\def\eit{\end{itemize}}
\def\ra{\rightarrow}
\def\lb{\left(}
\def\rb{\right)}
\def\lset{\lbrace}
\def\rset{\rbrace}
\def\l|{\left|}
\def\r|{\right|}
\def\lbr{\left[}
\def\rbr{\right]}
\def\ident{\textnormal{id}}
\def\one{\id}
\newcommand\C{\mathbbm{C}}
\newcommand\R{\mathbbm{R}}
\newcommand\N{\mathbbm{N}}
\newcommand\M{\mathcal{M}}
\newcommand{\Lm}{\mathcal{L}}
\newcommand{\ketbra}[1]{|#1\rangle\langle#1|}
\theoremstyle{plain}
\newtheorem{thm}{Theorem}[section]
\newtheorem{lem}{Lemma}[section]
\newtheorem{cor}{Corollary}[section]
\newtheorem{prop}{Proposition}[section]
\newtheorem{defn}{Definition}[section]
\newtheorem{conj}{Conjecture}[section]
\theoremstyle{definition}
\begin{document}

\title[When Do Composed Maps Become Entanglement Breaking?]{When Do Composed Maps \\ Become Entanglement Breaking?}

\author{Matthias Christandl}
\email{christandl@math.ku.dk}
\affiliation{QMATH, Department of Mathematical Sciences, University of Copenhagen, Universitetsparken 5, 2100 Copenhagen, Denmark}

\author{Alexander M\"uller-Hermes}
\email{muellerh@posteo.net, muellerh@math.ku.dk}
\affiliation{QMATH, Department of Mathematical Sciences, University of Copenhagen, Universitetsparken 5, 2100 Copenhagen, Denmark}

\author{Michael M. Wolf}
\email{m.wolf@tum.de}
\affiliation{Zentrum Mathematik, Technische Universit\"{a}t M\"{u}nchen, 85748 Garching, Germany}

\date{\today}

\begin{abstract}

For many completely positive maps repeated compositions will eventually become entanglement breaking. To quantify this behaviour we develop a technique based on the Schmidt number: If a completely positive map breaks the entanglement with respect to any qubit ancilla, then applying it to part of a bipartite quantum state will result in a Schmidt number bounded away from the maximum possible value. Iterating this result puts a successively decreasing upper bound on the Schmidt number arising in this way from compositions of such a map. By applying this technique to completely positive maps in dimension three that are also completely copositive we prove the so called PPT squared conjecture in this dimension. We then give more examples of completely positive maps where our technique can be applied, e.g.~maps close to the completely depolarizing map, and maps of low rank. Finally, we study the PPT squared conjecture in more detail, establishing equivalent conjectures related to other parts of quantum information theory, and we prove the conjecture for Gaussian quantum channels. 

\end{abstract}

\maketitle

\vspace{-1.2cm}

\section{Introduction}

Let $\M_d$ denote the set of complex $d\times d$ matrices and let $\M_d^+$ denote the cone of positive matrices. We call a linear map $T:\M_{d_1}\ra \M_{d_2}$
\begin{itemize}
\item \textbf{positive} if $T(X)\in \M_{d_2}^+$ for any $X\in \M_{d_1}^+$.
\item \textbf{completely positive} if $\ident_n\otimes T : \M_n\otimes \M_{d_1}\ra \M_n\otimes \M_{d_2}$ is a positive map for all $n\in \N$, where $\ident_d$ denotes the identity map on $\M_d$.
\item \textbf{completely copositive} if $\vartheta_{d_2}\circ T$ is completely positive, where $\vartheta_d:\M_d\ra\M_d$ denotes the matrix transposition $\vartheta_d(X) = X^T$ in the computational basis of $\C^d$ (the definition does not depend on this choice).
\item \textbf{entanglement breaking} if for any positive matrix $X\in(\M_{d_2}\otimes \M_{d_1})^+$ the matrix $(\ident_{d_2}\otimes T)(X)$ is separable (i.e.~belongs to the convex cone generated by product matrices $A\otimes B$ with $A\in\M_{d_2}^+$ and $B\in\M_{d_1}^+$).
\end{itemize}
We will call a completely positive map a \emph{quantum channel} if it preserves the trace, i.e.~if $\text{Tr}\lbr T(X)\rbr=\text{Tr}\lbr X\rbr$ for any $X\in\M_{d_1}$. While quantum channels represent general physical processes in the context of quantum information theory, entanglement breaking quantum channels represent such processes that cannot be used to distribute entanglement, and they are useless for any non-classical communication task~\cite{horodecki2003entanglement}. Completely copositive quantum channels represent physical processes which are too noisy to be used for some information processing tasks (e.g.~quantum communication~\cite{holevo2001evaluating}). However, unlike entanglement breaking channels they can sometimes be used for certain non-classical information processing tasks (e.g.~private communication~\cite{horodecki2005secure}). 

\subsection{Motivation and previous results}

In this article we study when compositions of completely positive maps become entanglement breaking, which is a way of quantifying how ``noisy'' the physical process associated to the map is. There are various forms and related questions which can be (and have been) studied, mostly in the special case of quantum channels: Given a single quantum channel $T:\M_d\ra\M_d$ its \emph{entanglement breaking index} $n(T)$ has been defined in \cite{lami2015entanglement} as the smallest number $N\in\N$ such that all compositions of the form
\[
T\circ S_{N-1}\circ T\circ S_{N-2}\circ\cdots\circ S_1\circ T
\]
with quantum channels $S_i:\M_d\ra\M_d$ are entanglement breaking, or $n(T)=\infty$ if such a number does not exist. Related quantities have also been studied earlier in~\cite{de2012quantifying,de2013amendable}. In general it is not easy to compute the entanglement breaking index of a given quantum channel, and we refer to \cite{lami2015entanglement} for some examples where it is known. Related to the case of quantum channels $T:\M_d\ra\M_d$ with $n(T)=\infty$, it has been studied when \emph{none} of the compositions $T^n:=T\circ T\circ \cdots \circ T$ are entanglement breaking. In \cite{lami2016entanglement} such quantum channels were called \emph{entanglement saving}, and the subset of such quantum channels with full rank (as a linear map) has been characterized. Even stronger one may ask for which quantum channels $T:\M_d\ra\M_d$ \emph{none} of the limit points of the sequence $\lb T^n\rb_{n\in\N}$ are entanglement breaking (see \cite{lami2016entanglement} for details). Such quantum channels were called \emph{asymptotically entanglement saving}, and they have been characterized in \cite{lami2016entanglement} in terms of their fixed point algebra. It should be noted that the set of asymptotically entanglement saving channels is very small, e.g.~the unital quantum channels that are \emph{not} asymptotically entanglement saving are dense in the set of all unital quantum channels (see~\cite[Theorem 5.1.]{rahaman2018eventually}). 

We will be particularly interested in the case of completely positive maps that are also completely copositive (sometimes referred to as PPT maps). One of the authors has proposed the following conjecture~\cite{christandl2012PPT}.

\begin{conj}[PPT squared conjecture -- Version 1]
For any linear map $T:\M_d\ra\M_d$ that is completely positive and completely copositive its square $T\circ T$ is entanglement breaking.
\label{conj:PPT21}
\end{conj}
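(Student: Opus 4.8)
The conjecture is open in general; my plan is to prove it for $d\le 3$ and to identify precisely where the argument stops. For $d=2$ there is nothing to iterate: if $T\colon\M_2\to\M_2$ is completely positive and completely copositive, its Choi matrix is positive (complete positivity) and has positive partial transpose (complete copositivity), hence is separable by the Horodecki criterion for $2\times2$ systems, so $T$ itself---and a fortiori $T\circ T$---is entanglement breaking. The genuinely new case is $d=3$, and I would attack it through a Schmidt-number argument.

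Call a completely positive map $S\colon\M_m\to\M_m$ \emph{entanglement breaking with respect to a qubit ancilla} if $(\ident_2\otimes S)(X)$ is separable for every $X\in(\M_2\otimes\M_m)^+$. The first step is to observe that \emph{every} completely positive and completely copositive map $T\colon\M_3\to\M_3$ has this property. Indeed, for $X\in(\M_2\otimes\M_3)^+$ the matrix $(\ident_2\otimes T)(X)$ is positive, and its partial transpose on the second tensor factor is
\[
(\ident_2\otimes\vartheta_3)\bigl((\ident_2\otimes T)(X)\bigr)=\bigl(\ident_2\otimes(\vartheta_3\circ T)\bigr)(X)\ \ge\ 0,
\]
since $\vartheta_3\circ T$ is completely positive; thus $(\ident_2\otimes T)(X)$ has positive partial transpose in $\M_2\otimes\M_3$ and is separable by the Horodecki criterion.

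The technical core---and, I expect, the main obstacle---is a Schmidt-number reduction lemma: if $S\colon\M_m\to\M_m$ is completely positive and entanglement breaking with respect to a qubit ancilla, then for every ancilla dimension $n$ and every $\rho\in(\M_n\otimes\M_m)^+$ the matrix $(\ident_n\otimes S)(\rho)$ has Schmidt number at most $m-1$, i.e.\ strictly below the maximal value. I would first note that the hypothesis is equivalent to the compression $S|_W$ being entanglement breaking for every two-dimensional subspace $W\subseteq\C^m$ of the domain (for ``$\Rightarrow$'' apply $\ident_2\otimes S$ to a maximally entangled vector on $\C^2\otimes W$; for ``$\Leftarrow$'' decompose a general $X\in(\M_2\otimes\M_m)^+$ into Schmidt-rank-$\le2$ pure components). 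The substance is then to convert ``$S$ creates no rank-$m$ entanglement when probed by a qubit'' into ``$S$ lowers the Schmidt number globally,'' which I would attempt either through the duality between the Schmidt-number cones and $(m-1)$-positive maps (equivalently, showing $\Lambda\circ S$ is completely positive for every $(m-1)$-positive $\Lambda$), or by a direct Kraus-rank/witness estimate on the compressed Choi matrices $(P_W\otimes\ident_m)\,C_S\,(P_W\otimes\ident_m)$ as $P_W$ ranges over rank-two projectors.

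Granting the lemma, the case $d=3$ would follow from two applications of $T$. Let $T\colon\M_3\to\M_3$ be completely positive and completely copositive, let $\rho\in(\M_n\otimes\M_3)^+$ be arbitrary, and put $\sigma=(\ident_n\otimes T)(\rho)$. By the first step $T$ is entanglement breaking with respect to a qubit ancilla, so the lemma with $m=3$ gives that $\sigma$ has Schmidt number at most $2$; write $\sigma=\sum_i p_i\,\ketbra{\psi_i}$ with $p_i\ge0$ and each $\ket{\psi_i}\in\C^n\otimes\C^3$ of Schmidt rank at most $2$. Then $\ket{\psi_i}$ is supported on $V_i\otimes W_i$ with $\dim V_i,\dim W_i\le2$, so $(\ident_n\otimes T)(\ketbra{\psi_i})$ agrees, up to the embedding $\C^2\hookrightarrow\C^n$, with $(\ident_2\otimes T)$ applied to the positive operator $\ketbra{\psi_i}$ on $\C^2\otimes\C^3$, and is therefore separable by the qubit-ancilla property of $T$. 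Summing over $i$, $(\ident_n\otimes(T\circ T))(\rho)=(\ident_n\otimes T)(\sigma)$ is separable; as $\rho$ and $n$ were arbitrary, $T\circ T$ is entanglement breaking. For $d\ge4$ this scheme breaks down already at the first step---positive-partial-transpose states in $\M_2\otimes\M_d$ need not be separable---and, even granting an analogue of the reduction lemma, iterating it would only lower the Schmidt number one level per application of $T$, yielding entanglement breaking after roughly $d-1$ compositions rather than after two; closing this gap (or refuting the conjecture) appears to require new ideas for $d\ge4$.
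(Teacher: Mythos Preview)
Your plan matches the paper's approach essentially step for step: your ``entanglement breaking with respect to a qubit ancilla'' is the paper's $2$-entanglement breaking (Definition~\ref{defn:kEB}), your first step is Theorem~\ref{thm:charact2EBd3}, your iteration is Theorem~\ref{thm:SNIteration}, and the $d=3$ conclusion is Corollary~\ref{thm:PPT2Qutrit}; your assessment of the $d\ge 4$ obstruction also coincides with the paper's. The one step you defer---the Schmidt-number reduction lemma---is Lemma~\ref{lem:SNtrimmingFrom2EB}, and the paper proves it not via either of your suggested routes (cone duality or compressed-Choi estimates) but through a principal-sub-block argument (Theorem~\ref{thm:SubBlockEntSN}, based on the Choi decomposition of~\cite{yang2016all}): if $(\ident_l\otimes S)(X)$ had Schmidt number $\ge l$, then some $2\times 2$ principal sub-block on the $l$-dimensional factor would already be entangled, contradicting the qubit-ancilla hypothesis on $S$.
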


If true, this conjecture would imply limitations of using physical processes represented by completely positive and completely copositive maps in repeater scenarios (see~\cite{bauml2015limitations,christandl2017private} for more details), where the map effectively acts more than once. While the PPT squared conjecture is still open, there has been some recent progress on asymptotic versions of this conjecture: In \cite{kennedy2017composition} it has been shown that for a completely positive map $T:\M_d\ra\M_d$ that is also completely copositive and in addition unital\footnote{i.e.~mapping the identity matrix to itself} or trace-preserving, the limit points of the sequence $\lb T^n\rb_{n\in\N}$ are entanglement breaking. Moreover, it has been shown in \cite{rahaman2018eventually} that for any unital and completely copositive quantum channel $T:\M_d\ra\M_d$ there exists a finite $N\in\N$ such that $T^N$ is entanglement breaking. 

\subsection{Summary}

Contrary to most of the aforementioned results, we will not focus on asymptotic properties of compositions of completely positive maps. Instead we aim at results where given a map $T:\M_d\ra\M_d$ the composition $T^N$ is entanglement breaking for a finite $N\in\N$. Moreover, we will identify classes of completely positive maps, where this $N$ does not depend on the specific completely positive map, but only on the dimension $d$ of the input space. This will be a step towards statements similar to the PPT squared conjecture.  

Our article can be divided roughly into two parts on general techniques (Sections \ref{sec:SNTech}, \ref{sec:Examples}) and on the PPT squared conjecture (Sections \ref{sec:PPTsquareEquiv}, \ref{sec:ExPPT2}). In the first part we introduce new techniques for studying questions of the above kind based on the Schmidt number~\cite{terhal2000schmidt}, an integer valued entanglement measure for positive bipartite matrices. In Section \ref{sec:SNTech} we show that any completely positive map $T:\M_d\ra\M_d$ breaking the entanglement between its input and any $2$-dimensional reference system (called $2$-entanglement breaking in the following) will become fully entanglement breaking after $d-1$ compositions. Unfortunately, these techniques do not apply to linear maps that are completely positive and completely copositive without assumptions on the dimension. Assuming a conjectured bound on the Schmidt number of states with positive partial transpose we introduce a different iteration technique that would apply to general completely positive maps that are also completely copositive. In Section \ref{sec:Examples} we prove the PPT squared conjecture for $d=3$, and discuss some examples of $2$-entanglement breaking maps (that are not entanglement breaking) in higher dimensions, e.g.~positive maps close enough to the completely depolarizing map, and $2$-positive maps of low rank. The second part of our article is on the PPT squared conjecture itself. In Section \ref{sec:PPTsquareEquiv} we state three equivalent formulations of the PPT squared conjecture linking it to other parts of quantum information theory and questions related to positive maps between matrix algebras. Finally, in Section \ref{sec:ExPPT2} we present some examples where the PPT squared conjecture holds true, and most importantly we prove the conjecture for Gaussian channels.

\subsection{Notation and preliminaries}

We will denote by $\mathcal{U}_d\subset \M_d$ the subset of unitary $d\times d$ matrices. Throughout this article we will denote the computational basis by $\ket{i}\in \C^d$, i.e. the vector with a single $1$ in the $i$th component and zeros in the other ones. A natural basis of $\M_d$ is then given by the matrix units $\proj{i}{j}\in\M_d$ having a single $1$ in their $(i,j)$ entry. We will denote the (unnormalized) $d$-dimensional maximally entangled state by $\ket{\Omega_d} = \sum^d_{i=1}(\ket{i}\otimes \ket{i})\in \C^d\otimes \C^d$ and the corresponding matrix by $\omega_d := \proj{\Omega_d}{\Omega_d}$. The $d$-dimensional identity matrix will be denoted by $\one_d$. For a linear transformation $Y:\C^{d_1}\ra\C^{d_2}$ we will denote by $\text{Ad}_Y:\M_{d_1}\ra\M_{d_2}$ the completely positive map defined as $\text{Ad}_Y(X) = YXY^\dagger$ for $X\in\M_d$. 

Given a linear map $L:\M_{d_1}\ra\M_{d_2}$ we call the matrix 
\[
C_L :=(\ident_{d_1}\otimes L)(\omega_{d_1})\in\M_{d_1}\otimes \M_{d_2}
\] 
the Choi matrix associated to $L$ (see~\cite{choi1975completely}). The map $L\mapsto C_L$ defines an isomorphism, called the Choi-Jamiolkowski isomorphism, between linear maps $L:\M_{d_1}\ra\M_{d_2}$ and matrices in $\M_{d_1}\otimes \M_{d_2}$. Completely positive maps correspond to positive matrices under this isomorphism~\cite{choi1975completely}, and occasionally we will define completely positive maps from positive matrices in this way. 

The following lemma collects some frequently used and well-known techniques involving the maximally entangled state and linear maps that can be proven by direct computation.

\begin{lem}[Maximally entangled state]\hfill
\begin{enumerate}
\item For any $d_2\times d_1$-matrix $X$ we have $\lb \id_{d_1}\otimes X\rb\ket{\Omega_{d_1}} = \lb X^T\otimes \id_{d_2} \rb\ket{\Omega_{d_2}}$.
\item For any map $\Lm:\M_{d_1}\ra\M_{d_2}$ that is Hermiticity-preserving (i.e. maps Hermitian matrices to Hermitian matrices), we have $\lb\ident_{d_1}\otimes \Lm\rb\lb \omega_{d_1}\rb = \lb \vartheta_{d_1}\circ\Lm^*\circ\vartheta_{d_2}\otimes \ident_{d_2}\rb\lb \omega_{d_2}\rb$. 
\end{enumerate}
In the above $\Lm^*$ denotes the adjoint with respect to the Hilbert-Schmidt inner product.
\label{Lemma:tricks}
\end{lem}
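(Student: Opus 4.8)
The plan is to verify both identities by expanding everything in the computational basis; no structural input beyond the definitions is needed, which is why the lemma is stated as provable ``by direct computation''.

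For part~(1), I would write $\ket{\Omega_{d_1}}=\sum_{i=1}^{d_1}\ket{i}\otimes\ket{i}$ and compute
\[
(\id_{d_1}\otimes X)\ket{\Omega_{d_1}}=\sum_{i=1}^{d_1}\ket{i}\otimes X\ket{i}=\sum_{i=1}^{d_1}\sum_{j=1}^{d_2}\bra{j}X\ket{i}\,\ket{i}\otimes\ket{j}.
\]
Since $X^T$ is the $d_1\times d_2$ matrix with $\bra{i}X^T\ket{j}=\bra{j}X\ket{i}$, the same kind of expansion of $\ket{\Omega_{d_2}}=\sum_{j=1}^{d_2}\ket{j}\otimes\ket{j}$ gives $(X^T\otimes\id_{d_2})\ket{\Omega_{d_2}}=\sum_{i,j}\bra{j}X\ket{i}\,\ket{i}\otimes\ket{j}$, which is the same vector. (It is worth noting that both sides lie in $\C^{d_1}\otimes\C^{d_2}$, even though the two maximally entangled states involved live in different spaces.)

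For part~(2), I would expand $\omega_{d_1}=\sum_{i,j}\proj{i}{j}\otimes\proj{i}{j}$ in matrix units $E_{ij}=\proj{i}{j}$, so that $(\ident_{d_1}\otimes\Lm)(\omega_{d_1})=\sum_{i,j}E_{ij}\otimes\Lm(E_{ij})$, while on the other side $(\vartheta_{d_1}\circ\Lm^*\circ\vartheta_{d_2}\otimes\ident_{d_2})(\omega_{d_2})=\sum_{k,l}\bigl(\Lm^*(E_{lk})\bigr)^T\otimes E_{kl}$. Expanding each $\Lm(E_{ij})$ and each $(\Lm^*(E_{lk}))^T$ again in matrix units and comparing the coefficient of $E_{ij}\otimes E_{kl}$, the identity reduces to the scalar equality $\bra{k}\Lm(E_{ij})\ket{l}=\bra{j}\Lm^*(E_{lk})\ket{i}$. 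Using $\bra{j}M\ket{i}=\tr(E_{ij}M)$ together with $E_{ij}=E_{ji}^\dagger$, the right-hand side equals $\tr\bigl(E_{ji}^\dagger\Lm^*(E_{lk})\bigr)=\tr\bigl(\Lm(E_{ji})^\dagger E_{lk}\bigr)$ by definition of the Hilbert--Schmidt adjoint. Finally, a linear map is Hermiticity-preserving exactly when $\Lm(A)^\dagger=\Lm(A^\dagger)$ for all $A$; taking $A=E_{ji}$ turns the previous expression into $\tr\bigl(\Lm(E_{ij})E_{lk}\bigr)=\bra{k}\Lm(E_{ij})\ket{l}$, which is the left-hand side.

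I do not expect a genuine obstacle here: the only care needed is bookkeeping — keeping straight which transpose, adjoint, or complex conjugate appears where, fixing one convention for the Hilbert--Schmidt inner product (I would use $\langle A,B\rangle=\tr(A^\dagger B)$), and not conflating $\omega_{d_1}$ with $\omega_{d_2}$ since $\Lm$ changes the dimension. As an alternative for part~(2), one can diagonalize the (Hermitian) Choi matrix of $\Lm$ to write $\Lm=\sum_k\lambda_k\,\text{Ad}_{A_k}$ with $\lambda_k\in\R$, observe that then $\Lm^*=\sum_k\lambda_k\,\text{Ad}_{A_k^\dagger}$ and $\vartheta_{d_1}\circ\text{Ad}_{A_k^\dagger}\circ\vartheta_{d_2}=\text{Ad}_{A_k^T}$, and apply part~(1) to each summand; summing over $k$ reassembles the claimed identity. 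Either route is routine.
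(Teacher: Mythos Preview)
Your proposal is correct and is exactly the ``direct computation'' the paper alludes to; the paper itself gives no proof beyond that remark, so your basis-expansion argument (and the Choi-diagonalization alternative for part~(2)) fills in precisely what was left to the reader.
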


The following definition is well-known:

\begin{defn}[Schmidt rank]
The Schmidt-rank of a bipartite vector $\ket{\psi}\in \C^{d_1}\otimes \C^{d_2}$ is defined as 
\[
\text{SR}\lb\ket{\psi}\rb := \text{rk}\lb (\ident_{d_1}\otimes \text{tr})\lb \proj{\psi}{\psi}\rb\rb.
\]
\label{defn:SchmidtR}
\end{defn}

The following lemma is a version of the well-known Schmidt decomposition (equivalent to the singular value decomposition):

\begin{lem}[Schmidt decomposition]
Any vector $\ket{\psi}\in \C^{d_1}\otimes \C^{d_2}$ with Schmidt-rank $k=\text{SR}\lb\ket{\psi}\rb$ can be decomposed as 
\[
\ket{\psi} = (\one_{d_1}\otimes V)(\ket{\psi'}) = (W\otimes \one_{d_2})(\ket{\psi''})
\]
with isometries $V:\C^{k}\ra \C^{d_2}$ and $W:\C^{k}\ra \C^{d_1}$ and vectors $\ket{\psi'}\in \C^{d_1}\otimes \C^{k}$ and $\ket{\psi''}\in \C^{k}\otimes \C^{d_2}$.
\label{lem:SchmidtDecomp}
\end{lem}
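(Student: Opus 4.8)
The plan is to obtain both decompositions directly from the singular value decomposition, by viewing $\ket{\psi}$ as a matrix and then absorbing the Schmidt basis of one tensor factor into an isometry from $\C^{k}$.

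First I would set up the matrix picture. Writing $\ket{\psi} = \sum_{i=1}^{d_1}\sum_{j=1}^{d_2} M_{ij}\,\ket{i}\otimes\ket{j}$ identifies $\ket{\psi}$ with a matrix $M\in\C^{d_1\times d_2}$, and a direct computation gives $(\ident_{d_1}\otimes\tr)(\proj{\psi}{\psi}) = MM^\dagger$; hence by Definition \ref{defn:SchmidtR} we have $k = \text{SR}(\ket{\psi}) = \text{rk}(M)$, and in particular $k\leq\min(d_1,d_2)$. Applying the reduced singular value decomposition to $M$ then yields orthonormal systems $\{\ket{e_1},\dots,\ket{e_k}\}\subset\C^{d_1}$ and $\{\ket{f_1},\dots,\ket{f_k}\}\subset\C^{d_2}$ together with strictly positive numbers $\lambda_1,\dots,\lambda_k$ such that $\ket{\psi} = \sum_{i=1}^k\sqrt{\lambda_i}\,\ket{e_i}\otimes\ket{f_i}$, i.e.\ the standard Schmidt decomposition (the complex conjugation that appears when translating the SVD into the tensor expansion is absorbed into the definition of the $\ket{f_i}$).

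Next I would build the isometries. Let $V:\C^{k}\ra\C^{d_2}$ be the linear map determined by $V\ket{i} = \ket{f_i}$ for $i=1,\dots,k$; since it sends the computational basis of $\C^{k}$ to an orthonormal system it is an isometry (here we use $k\leq d_2$). Setting $\ket{\psi'} := \sum_{i=1}^k\sqrt{\lambda_i}\,\ket{e_i}\otimes\ket{i}\in\C^{d_1}\otimes\C^{k}$ one checks $(\one_{d_1}\otimes V)\ket{\psi'} = \sum_{i=1}^k\sqrt{\lambda_i}\,\ket{e_i}\otimes\ket{f_i} = \ket{\psi}$. Symmetrically, with $W:\C^{k}\ra\C^{d_1}$ defined by $W\ket{i} = \ket{e_i}$ (an isometry since $k\leq d_1$) and $\ket{\psi''} := \sum_{i=1}^k\sqrt{\lambda_i}\,\ket{i}\otimes\ket{f_i}\in\C^{k}\otimes\C^{d_2}$ we get $(W\otimes\one_{d_2})\ket{\psi''} = \ket{\psi}$, which completes the argument.

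I do not expect a genuine obstacle here: the statement is essentially a repackaging of the singular value decomposition. The only points deserving a line of justification are that the rank appearing in Definition \ref{defn:SchmidtR} equals the number of nonzero singular values of $M$ (so that ``$k$'' is well defined and coincides with the number of Schmidt coefficients), and that $k\leq\min(d_1,d_2)$, which is what makes $V$ and $W$ legitimate isometries into their respective target spaces; both follow immediately from the SVD. One could equally well quote the standard Schmidt decomposition as known input and then only carry out the last displayed step.
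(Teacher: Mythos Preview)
Your argument is correct and matches the paper's treatment: the paper does not give a proof of this lemma at all, merely stating that it is ``a version of the well-known Schmidt decomposition (equivalent to the singular value decomposition)'', which is exactly the route you take. Your write-up simply fills in the details the paper leaves implicit.
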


The following lemma can be obtained easily from the Schmidt decomposition: 

\begin{lem}
Any vector $\ket{\psi}\in \C^{d_1}\otimes \C^{d_2}$ can be written as 
\[ 
\ket{\psi} = (\one_{d_1}\otimes A)(\ket{\Omega_{d_1}}) = (B\otimes \one_{d_2})(\ket{\Omega_{d_2}})
\]
with linear maps $A:\C^{d_1}\ra \C^{d_2}$ and $B:\C^{d_2}\ra \C^{d_1}$. 
\label{lem:YAVD}
\end{lem}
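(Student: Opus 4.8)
The plan is to construct the maps $A$ and $B$ explicitly from the coefficients of $\ket{\psi}$ and then invoke part~(1) of Lemma~\ref{Lemma:tricks} to pass between the two representations. First I would expand $\ket{\psi} = \sum_{i=1}^{d_1}\sum_{j=1}^{d_2} c_{ij}\,\ket{i}\otimes\ket{j}$ in the computational bases and set $A := \sum_{i,j} c_{ij}\,\proj{j}{i} : \C^{d_1}\ra\C^{d_2}$. Using $(\one_{d_1}\otimes A)\ket{\Omega_{d_1}} = \sum_{i=1}^{d_1}\ket{i}\otimes A\ket{i}$, a one-line computation gives $(\one_{d_1}\otimes A)\ket{\Omega_{d_1}} = \sum_{i,j} c_{ij}\,\ket{i}\otimes\ket{j} = \ket{\psi}$, which is the first claimed identity.

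For the second identity I would take $B := A^T = \sum_{i,j} c_{ij}\,\proj{i}{j} : \C^{d_2}\ra\C^{d_1}$ (transposition in the computational bases) and apply Lemma~\ref{Lemma:tricks}(1) with $X = A$, which yields $\ket{\psi} = (\one_{d_1}\otimes A)\ket{\Omega_{d_1}} = (A^T\otimes\one_{d_2})\ket{\Omega_{d_2}} = (B\otimes\one_{d_2})\ket{\Omega_{d_2}}$. This finishes the argument. An equivalent route, closer to the remark preceding the statement, starts from a Schmidt decomposition $\ket{\psi} = \sum_{m=1}^k \lambda_m\,\ket{e_m}\otimes\ket{f_m}$ (Lemma~\ref{lem:SchmidtDecomp}), sets $A := \sum_m \lambda_m\,\proj{f_m}{\bar e_m}$ with $\ket{\bar e_m}$ the entrywise complex conjugate of $\ket{e_m}$, and uses the elementary identity $(\one_{d_1}\otimes\proj{u}{v})\ket{\Omega_{d_1}} = \ket{\bar v}\otimes\ket{u}$; then $B := A^T$ again works.

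I do not anticipate a genuine obstacle: the statement is essentially a reformulation of the Choi--Jamio{\l}kowski correspondence at the level of vectors. The only thing to be careful about is the bookkeeping --- which tensor leg each of $A$, $B$ acts on, and where the transpose sits --- and this is exactly what part~(1) of Lemma~\ref{Lemma:tricks} takes care of.
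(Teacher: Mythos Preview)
Your proposal is correct. The paper does not give an explicit proof of this lemma; it merely states that the result ``can be obtained easily from the Schmidt decomposition'' and leaves it at that. Your primary argument --- the direct construction $A=\sum_{i,j}c_{ij}\proj{j}{i}$ together with Lemma~\ref{Lemma:tricks}(1) to obtain $B=A^T$ --- is valid and in fact slightly more elementary than what the paper hints at, since it avoids the Schmidt decomposition entirely. Your alternative route via Lemma~\ref{lem:SchmidtDecomp} matches the paper's suggestion, so you have covered both approaches.
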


\section{Schmidt number techniques}
\label{sec:SNTech}

We will start with the following generalization of the Schmidt rank to arbitrary positive matrices first studied in \cite{terhal2000schmidt}. 

\begin{defn}[Schmidt number]
The Schmidt number of a bipartite positive matrix $X\in \M(\C^{d_1}\otimes \C^{d_2})^+$ is the minimal number $k\in \N$ such that 
\[
X = \sum^l_{i=1} \proj{\psi_i}{\psi_i}
\] 
for some $l\in \N$ and vectors $\ket{\psi_i}\in \C^{d_1}\otimes \C^{d_2}$ with $\text{SR}\lb\ket{\psi_i}\rb\leq k$. 
\end{defn}

We denote the Schmidt number of a matrix $X\in (\M_{d_1}\otimes \M_{d_2})^+$ by $\text{SN}(X)$. Note that $\text{SN}(X)\in \lset 1,\ldots , \min(d_1,d_2)\rset$ with $\text{SN}(X)=1$ if and only if $X$ is separable.

\subsection{Schmidt number iteration}

To introduce the Schmidt number iteration technique we need to define a certain class of linear maps. To make our statements as general as possible, we will formulate them in terms of $k$-positive maps, i.e.~linear maps $T:\M_{d_1}\ra\M_{d_2}$ such that $\ident_k\otimes T:\M_k\otimes \M_{d_1}\ra\M_k\otimes \M_{d_2}$ is positive. However, for most applications we only consider completely positive maps (i.e.~$\min(d_1,d_2)$-positive maps).   

\begin{defn}[$n$-entanglement breaking maps]
A $k$-positive map $T:\M_{d_1}\ra\M_{d_2}$ is called $n$-entanglement breaking for some $n\leq k$ when
\[
(\ident_n\otimes T)(X)\text{ is separable }
\]
for any $X\in(\M_n\otimes \M_{d_1})^+$. 
\label{defn:kEB}
\end{defn}

Note that for $n\geq d_1$ any $n$-entanglement breaking map is entanglement breaking in the usual sense~\cite{horodecki2003entanglement}, and every $n$-entanglement breaking map is also $n'$-entanglement breaking for any $n'\leq n$. The relevance of $n$-entanglement breaking maps lies in the following lemma that is central for our technique.

\begin{lem}[Schmidt number trimming]
A $k$-positive map $T:\M_{d_1}\ra\M_{d_2}$ is $n$-entanglement breaking for $n\leq \min(k,d_2)$ if and only if for any $l\leq \min(k,d_2)$ we have
\begin{equation}
\text{SN}\lb (\ident_{l}\otimes T)(X)\rb \leq \max(l-n+1,1)
\label{equ:SNTrim}
\end{equation}
for any $X\in(\M_l\otimes \M_{d_1})^+$.
\label{lem:SNtrimmingFrom2EB}
\end{lem}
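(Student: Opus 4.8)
The plan is to separate the trivial implication from the substantive one and then to funnel the substantive direction into a single linear-algebraic claim about positive matrices. The ``if'' direction is immediate: specializing \eqref{equ:SNTrim} to $l=n$ forces $\text{SN}\big((\ident_n\otimes T)(X)\big)\leq\max(1,1)=1$ for every $X\in(\M_n\otimes\M_{d_1})^+$, which is exactly the statement that $T$ is $n$-entanglement breaking.

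For the ``only if'' direction I would argue as follows. Assume $T$ is $n$-entanglement breaking and fix $l\leq\min(k,d_2)$ and $X\in(\M_l\otimes\M_{d_1})^+$. Since $\ident_l\otimes T$ is positivity preserving (as $l\leq k$) and since the Schmidt number is subadditive on sums of positive matrices---one simply concatenates the Schmidt decompositions---it suffices to bound $\text{SN}\big((\ident_l\otimes T)(\proj{\psi}{\psi})\big)$ for an arbitrary $\ket{\psi}\in\C^l\otimes\C^{d_1}$. Put $r=\text{SR}(\ket{\psi})\leq l$. If $r\leq n$ I would restrict to the left Schmidt span of $\ket{\psi}$ and use that $T$, being $n$-entanglement breaking, is also $r$-entanglement breaking; the image is then separable, of Schmidt number $1\leq\max(l-n+1,1)$. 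If $r>n$ (so $\max(l-n+1,1)=l-n+1$), I would let $F_0\subseteq\C^l$ be the span of $n$ of the left Schmidt vectors of $\ket{\psi}$; then $(\Pi_{F_0}\otimes\one)\ket{\psi}$ has Schmidt rank $n$, hence equals $(\iota_{F_0}\otimes\one_{d_1})\ket{\phi}$ for some $\ket{\phi}\in\C^n\otimes\C^{d_1}$ with $\iota_{F_0}:\C^n\hookrightarrow\C^l$ the inclusion, so the compression of $\rho:=(\ident_l\otimes T)(\proj{\psi}{\psi})$ to $F_0\otimes\C^{d_2}$ equals $(\iota_{F_0}\otimes\one_{d_2})\big[(\ident_n\otimes T)(\proj{\phi}{\phi})\big](\iota_{F_0}^\dagger\otimes\one_{d_2})$ and is separable, again because $T$ is $n$-entanglement breaking. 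Everything is then reduced to the following claim: \emph{if $\rho\in(\M_l\otimes\M_{d_2})^+$ admits an $n$-dimensional subspace $F_0\subseteq\C^l$ such that $(\Pi_{F_0}\otimes\one)\rho(\Pi_{F_0}\otimes\one)$ is separable, then $\text{SN}(\rho)\leq l-n+1$.}

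To prove this claim I would write $\rho$ as a $2\times 2$ block matrix with respect to $\C^l=F_0\oplus F_0^\perp$, with corner block $\rho_0:=(\Pi_{F_0}\otimes\one)\rho(\Pi_{F_0}\otimes\one)$ and off-diagonal block $\rho_{01}$, and perform a Schur-complement splitting $\rho=\sigma+\eta$: with $N:=\rho_{01}^\dagger\rho_0^{+}$ (Moore--Penrose inverse) set $\sigma:=\begin{pmatrix}\one\\ N\end{pmatrix}\rho_0\begin{pmatrix}\one & N^\dagger\end{pmatrix}\geq 0$ and $\eta:=\rho-\sigma$. Positivity of $\rho$ guarantees that $\sigma$ and $\rho$ share all blocks meeting $F_0$, so $\eta$ is positive and supported on $F_0^\perp\otimes\C^{d_2}$, whence $\text{SN}(\eta)\leq\dim F_0^\perp=l-n$; and feeding a separable decomposition $\rho_0=\sum_\alpha\proj{a_\alpha\otimes b_\alpha}{a_\alpha\otimes b_\alpha}$ through the definition of $\sigma$ exhibits $\sigma=\sum_\alpha\proj{\widetilde{p}_\alpha}{\widetilde{p}_\alpha}$ with each $\ket{\widetilde{p}_\alpha}$ the sum of a product vector and a vector of $F_0^\perp\otimes\C^{d_2}$, hence $\text{SR}(\ket{\widetilde{p}_\alpha})\leq 1+(l-n)$ and $\text{SN}(\sigma)\leq l-n+1$. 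Subadditivity of the Schmidt number then yields $\text{SN}(\rho)\leq l-n+1$, and tracing this bound back through the reductions gives \eqref{equ:SNTrim}.

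The main obstacle I anticipate is exactly this last claim, together with the slightly surprising observation behind it: a \emph{single} separable $n$-dimensional compression of $\rho$ already pins down its Schmidt number, and the Schur-complement splitting is precisely what peels off a ``low-Schmidt-rank correction'' living on $F_0^\perp$ from a part assembled out of the product vectors of $\rho_0$. The remaining ingredients---subadditivity of $\text{SN}$ under sums, the Schmidt-decomposition bookkeeping, and the reduction to rank-one inputs---I expect to be routine.
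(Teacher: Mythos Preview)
Your argument is correct and takes a genuinely different route from the paper. The paper argues by contradiction: assuming $\text{SN}\big((\ident_l\otimes T)(X)\big)\geq l-n+2$ for some $X$, it invokes Theorem~\ref{thm:SubBlockEntSN} (proved in Appendix~\ref{sec:CWDEC} via the Choi decomposition of $k$-positive maps) to exhibit an entangled $n\times n$ principal sub-block of $(\ident_l\otimes T)(X)$, contradicting $n$-entanglement breaking. You instead work directly: reduce to rank-one inputs, choose $F_0$ from the left Schmidt vectors of the input so that the compression of $\rho$ to $F_0\otimes\C^{d_2}$ is separable, and then prove via a Schur-complement splitting $\rho=\sigma+\eta$ that a single separable $n$-dimensional compression already forces $\text{SN}(\rho)\leq l-n+1$. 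This last claim is precisely the contrapositive of the special case of Theorem~\ref{thm:SubBlockEntSN} that the paper's proof consumes, so you have in effect given an elementary, self-contained substitute for the Choi-decomposition machinery. What the paper's route buys is the full quantitative statement of Theorem~\ref{thm:SubBlockEntSN}---a lower bound on the Schmidt number of \emph{every} sub-block, not merely entanglement of one---which is reused elsewhere (e.g.\ for Theorem~\ref{thm:SNPTInv}); what your route buys is that the lemma stands on its own without any appeal to the appendix.
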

\begin{proof}
If $T:\M_{d_1}\ra\M_{d_2}$ satisfies \eqref{equ:SNTrim} for $l=n$ it is in particular $n$-entanglement breaking.      

For the converse direction note first that the statement is obvious for $l\leq n$. Now, assume that for $n< l\leq \min(k,d_2)$ there exists an $X\in(\M_{l}\otimes \M_{d_1})^+$ such that 
\[
\text{SN}\lb(\ident_l\otimes T)(X)\rb \geq l-n+2.
\]
We can write 
\[
(\ident_l\otimes T)(X) = \sum^l_{i,j=1} \proj{i}{j}\otimes T(X_{ij})\in \lb \M_l\otimes \M_{d_2}\rb^+.
\]
Since $l\leq d_2$ by Theorem \ref{thm:SubBlockEntSN} there exist $\lset i_1,i_2,\ldots , i_n\rset\subset\lset 1,\ldots ,l\rset$ such that the matrix
\[
Y = \sum^n_{s,t=1} \proj{s}{t}\otimes T(X_{i_si_t}) \in \lb \M_n\otimes \M_{d_2}\rb^+
\]
is entangled. Since $T$ is $n$-entanglement breaking, this gives a contradiction.
\end{proof}

It should also be noted that the our definition of $n$-entanglement breaking maps differs from the $n$-partially entanglement breaking maps (also known as $n$-superpositive maps~\cite{skowronek2009cones}) introduced by Chru{\'s}ci{\'n}ski and Kossakowski in \cite{chruscinski2006partially}. Specifically, we can apply Lemma \ref{lem:SNtrimmingFrom2EB} to the Choi matrix of an $n$-entanglement breaking, completely positive map $T:\M_{d_1}\ra\M_{d_2}$ with $\max(d_1,n)\leq d_2$ showing that it is $\max(d_1-n+1,1)$-partially entanglement breaking according to~\cite{chruscinski2006partially}. However, the converse is not true: The completely positive map $\text{Ad}_A:\M_d\ra\M_d$ with $A = \one_{d-1}\oplus 0$ is $(d-1)$-partially entanglement breaking, but not $2$-entanglement breaking according to our definition for any $d>2$. 

The amount of entanglement quantified by the Schmidt number, that is still present after applying an $n$-entanglement breaking map to part of a positive matrix (with dimensions satisfying the assumptions of Lemma \ref{lem:SNtrimmingFrom2EB}), is bounded away from the maximal possible value by $n$. Using a simple observation about the connection between the Schmidt number and the effective dimension of a state leads to the following Schmidt number iteration technique:

\begin{thm}[Schmidt number iteration]
If for each $i\in \lset 1,\ldots ,\lceil \frac{d-1}{n-1}\rceil\rset$ the completely positive maps $T_i:\M_d\ra\M_d$ are $n$-entanglement breaking, then the composition $T_{\lceil \frac{d-1}{n-1}\rceil}\circ \cdots \circ T_1$ is entanglement breaking.
\label{thm:SNIteration}
\end{thm}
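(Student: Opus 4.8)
The plan is to show that applying a single $n$-entanglement breaking, completely positive map to one half of a bipartite positive matrix drops its Schmidt number by at least $n-1$ (down to a floor of $1$), and then to iterate this $\lceil\frac{d-1}{n-1}\rceil$ times. The heart of the matter is the following one-step bound: for \emph{any} auxiliary dimension $D\in\N$, any completely positive $n$-entanglement breaking map $T:\M_d\ra\M_d$, and any $Y\in(\M_D\otimes\M_d)^+$,
\[
\text{SN}\lb(\ident_D\otimes T)(Y)\rb\leq\max\lb\text{SN}(Y)-n+1,\,1\rb.
\]
Lemma \ref{lem:SNtrimmingFrom2EB} (Schmidt number trimming) already yields exactly this when the reference system is small, i.e.\ when $Y\in(\M_l\otimes\M_d)^+$ with $l\leq d$; the only thing left is to dispose of the restriction on the size of the reference, and this is where a small amount of care is required.

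To remove that restriction, set $m:=\text{SN}(Y)$, which satisfies $m\leq\min(D,d)\leq d$, and write $Y=\sum_i\proj{\psi_i}{\psi_i}$ with $k_i:=\text{SR}(\ket{\psi_i})\leq m$. By the Schmidt decomposition (Lemma \ref{lem:SchmidtDecomp}) each $\ket{\psi_i}$ factors as $\ket{\psi_i}=(W_i\otimes\one_d)\ket{\phi_i}$ with an isometry $W_i:\C^{k_i}\ra\C^D$ and $\ket{\phi_i}\in\C^{k_i}\otimes\C^d$, so that
\[
(\ident_D\otimes T)(\proj{\psi_i}{\psi_i})=(\text{Ad}_{W_i}\otimes\ident_d)\big((\ident_{k_i}\otimes T)(\proj{\phi_i}{\phi_i})\big)
\]
since the two maps act on different tensor factors. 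By Lemma \ref{lem:SNtrimmingFrom2EB} the inner matrix has Schmidt number at most $\max(k_i-n+1,1)\leq\max(m-n+1,1)$ (the lemma applies because $k_i\leq d$, $T$ is completely positive hence $d$-positive, and necessarily $n\leq d$); the one-sided isometry $\text{Ad}_{W_i}\otimes\ident_d$ cannot increase the Schmidt rank of any vector and hence cannot increase the Schmidt number; and the Schmidt number of a sum of positive matrices is at most the largest Schmidt number among the summands. Summing over $i$ gives the one-step bound.

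The theorem then follows by iteration. Let $D$ be arbitrary (the choice $D=d$ already suffices to witness ``entanglement breaking''), let $X\in(\M_D\otimes\M_d)^+$ be arbitrary, and put $X_j:=(\ident_D\otimes(T_j\circ\cdots\circ T_1))(X)$, so $X_0=X$. Since $\text{SN}(X_0)\leq\min(D,d)\leq d$, applying the one-step bound to $T_1,T_2,\ldots$ in turn and inducting on $j$ gives $\text{SN}(X_j)\leq\max\lb d-j(n-1),\,1\rb$ for all $j$ (here $n-1\geq 1$ is used so that the floor behaves correctly). For $j=N:=\lceil\frac{d-1}{n-1}\rceil$ one has $N(n-1)\geq d-1$, hence $d-N(n-1)\leq 1$ and therefore $\text{SN}(X_N)\leq 1$, i.e.\ $X_N=(\ident_D\otimes(T_N\circ\cdots\circ T_1))(X)$ is separable. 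As $X$ was arbitrary, $T_N\circ\cdots\circ T_1$ is entanglement breaking. (For $n=d$ this recovers $N=1$ and the trivial statement that a $d$-entanglement breaking map on $\M_d$ is entanglement breaking.)

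I expect the one-step bound to be the only genuine obstacle, and within it the upgrade from the bounded-reference formulation of Lemma \ref{lem:SNtrimmingFrom2EB} to an arbitrary reference system; the remaining ingredients — monotonicity of the Schmidt number under one-sided isometries, subadditivity of the Schmidt number under addition of positive matrices, and the equivalence $\text{SN}(Z)=1\Leftrightarrow Z$ separable — are all standard and each needs only a line.
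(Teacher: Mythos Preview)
Your proof is correct and follows essentially the same approach as the paper's: reduce via the Schmidt decomposition to a small reference system, invoke Lemma~\ref{lem:SNtrimmingFrom2EB} there, push back through the local isometry and the sum, and iterate. The only cosmetic differences are that you allow an arbitrary reference dimension $D$ (the paper simply takes $D=d$) and that you track the individual Schmidt ranks $k_i$ rather than padding all pure components to the common value $m=\text{SN}(Y)$; neither changes the substance of the argument.
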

\begin{proof}
Consider $k\in \lset 2,\ldots ,d\rset$ and $X\in(\M_d\otimes \M_d)^+$ with $\text{SN}\lb X\rb=k$. By the Schmidt decomposition (see Lemma \ref{lem:SchmidtDecomp}) there exists an $l\in \N$, (unnormalized) pure states $\psi_j\in (\M_{k}\otimes \M_{d})^+$ and isometries $V_j:\C^k\ra \C^d$ for every $j\in \lset 1,\ldots , l\rset$ such that 
\[
X = \sum^l_{j=1}(V_j\otimes \one_d)\psi_j(V^\dagger_j\otimes \one_d).
\]
Now for any $n$-entanglement breaking map $T:\M_d\ra\M_d$ (see Definition \ref{defn:kEB}) we have 
\[
\text{SN}\lb(\ident_k\otimes T)(\psi_j)\rb \leq \max(k-n+1,1)
\]
for all $j\in \lset 1,\ldots , l\rset$. This shows that 
\begin{align*}
\text{SN}\lb(\ident_d\otimes T)(X)\rb &= \text{SN}\lb\sum^l_{j=1} (V_j\otimes \one_d)(\ident_k\otimes T)\lb\psi_j\rb(V^\dagger_j\otimes \one_d)\rb \\
& \leq \max_j ~\text{SN}\lb (V_j\otimes \one_d)(\ident_k\otimes T)\lb\psi_j\rb(V^\dagger_j\otimes \one_d)\rb\\
&\leq \max(k-n+1,1).
\end{align*}
Applying the above argument $\lceil \frac{d-1}{n-1}\rceil$-times for the successive application of the maps $T_i$ for $i\in \lset 1,\ldots ,\lceil \frac{d-1}{n-1}\rceil\rset$ shows that for any $X\in (\M_{d}\otimes \M_{d})^+$ we have 
\[
\text{SN}\lb(\ident_d\otimes T_{\lceil \frac{d-1}{n-1}\rceil}\circ \cdots \circ T_1)(X) \rb = 1,
\]
which finishes the proof.

\end{proof}

\subsection{Alternative iteration from a Schmidt number conjecture}
\label{sec:AltIt}

The techniques from the previous section only apply to $k$-entanglement breaking maps. We will later give some examples of such maps (see Section \ref{sec:Examples}), but it should be noted that not all completely positive and completely copositive maps in arbitrary dimensions belong to this class. Consider for example a completely positive map $T:\M_2\ra\M_4$ that is completely copositive, but not entanglement breaking (e.g.~a map corresponding to the Tang-Horodecki state~\cite{tang1986positive,horodecki1997separability} via the Choi-Jamiolkowski isomorphism). Then, we can consider the linear map $\tilde{T}:\M_4\ra\M_4$ given by 
\[
\tilde{T}(X) = T\lb(\one_2\otimes \bra{0}) X(\one_2\otimes \ket{0})\rb.
\]
It is easy to see that $\tilde{T}$ is completely positive and completely copositive, but \emph{not} $2$-entanglement breaking. 

Since the techniques from the previous section do not apply immediately to linear maps that are both completely positive and completely copositive, we will present here an alternative iteration technique. Unfortunately, this technique still relies on the following conjecture, first stated by Sanpera, Bru{\ss} and Lewenstein in~\cite{sanpera2001schmidt}. 

\begin{conj}
If $T:\M_{d}\ra\M_{d}$ is completely positive and completely copositive, then
\[
\text{SN}\lb C_T\rb\leq d-1.
\] 
\label{conj:MaxSNPPT}
\end{conj}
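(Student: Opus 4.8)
The plan is to reformulate the conjecture as a decomposability statement about positive maps and then attack that. A preliminary reduction removes the degenerate cases: if $C_T$ does not have full local rank, say it is supported on $\cH_1\otimes\cH_2$ with $\dim\cH_1=d'<d$, then restricting $T$ to the corresponding corners yields a completely positive and completely copositive map $\M_{d'}\to\M_d$, so $\text{SN}(C_T)\leq\min(d',d)\leq d-1$ by the definition of the Schmidt number; the same holds if the rank on the second factor is deficient. Hence it suffices to rule out $\text{SN}(C_T)=d$ when $C_T$ has full local rank $d$ on both factors.

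The central step is the duality between Schmidt number and $k$-positivity from~\cite{terhal2000schmidt}: a matrix $\rho\in(\M_d\otimes\M_d)^+$ has $\text{SN}(\rho)\leq d-1$ if and only if $(\ident_d\otimes\Lambda)(\rho)\geq 0$ for every $(d-1)$-positive map $\Lambda:\M_d\to\M_d$. Via Choi-Jamiolkowski the cone $\{\rho:\text{SN}(\rho)\leq d-1\}$ and the cone of Choi matrices of $(d-1)$-positive maps are mutually dual, whereas the cone of matrices $\rho\geq 0$ with $(\ident_d\otimes\vartheta_d)(\rho)\geq 0$ is dual to the cone of decomposable maps, i.e.~those of the form $\Lambda_1+\vartheta_d\circ\Lambda_2$ with $\Lambda_1,\Lambda_2$ completely positive. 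Since $T\mapsto C_T$ identifies the completely positive and completely copositive maps $\M_d\to\M_d$ with exactly the matrices $\rho\geq 0$ satisfying $(\ident_d\otimes\vartheta_d)(\rho)\geq 0$, and since dualisation reverses inclusions of closed convex cones, Conjecture~\ref{conj:MaxSNPPT} is \emph{equivalent} to the assertion that every $(d-1)$-positive map $\Lambda:\M_d\to\M_d$ is decomposable. Only the easy direction is needed: if $\Lambda=\Lambda_1+\vartheta_d\circ\Lambda_2$ and $\rho\geq 0$ with $(\ident_d\otimes\vartheta_d)(\rho)\geq 0$, then $(\ident_d\otimes\Lambda_1)(\rho)\geq 0$, and writing $\Lambda_2(X)=\sum_k A_kXA_k^\dagger$ gives $(\ident_d\otimes(\vartheta_d\circ\Lambda_2))(\rho)=\sum_k(\one_d\otimes\overline{A_k})\,(\ident_d\otimes\vartheta_d)(\rho)\,(\one_d\otimes\overline{A_k})^\dagger\geq 0$, so $(\ident_d\otimes\Lambda)(\rho)\geq 0$. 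For $d=2$ this settles the conjecture at once, since every positive map $\M_2\to\M_2$ is decomposable (St{\o}rmer, Woronowicz); $d=3$ amounts to decomposability of all $2$-positive maps on $\M_3$, the statement underlying the $d=3$ case proved in Section~\ref{sec:Examples}.

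To prove decomposability of $(d-1)$-positive maps in general, the route I would try is an extreme-ray analysis in the spirit of the St{\o}rmer--Woronowicz classification. It is convenient to use that $\Lambda$ is $(d-1)$-positive precisely when $(\ident_d\otimes\Lambda)(\proj{\psi}{\psi})\geq 0$ for all $\ket{\psi}\in\C^d\otimes\C^d$ of Schmidt rank at most $d-1$ -- equivalently, for all $\ket{\psi}$ whose associated matrix lies on the determinantal hypersurface $\{\det=0\}$ -- so that only maximal-Schmidt-rank vectors are left unconstrained. On the state side, one would instead try to show outright that no $\rho\geq 0$ with $(\ident_d\otimes\vartheta_d)(\rho)\geq 0$ has $\text{SN}(\rho)=d$, by passing to a counterexample of minimal rank: since every projective line in $\mathbb{P}(\C^d\otimes\C^d)$ meets the degree-$d$ hypersurface $\{\det=0\}$, the range of $\rho$ contains a vector $\ket{\psi}$ of Schmidt rank at most $d-1$; one would then subtract a suitable multiple of $\proj{\psi}{\psi}$, lower the rank, and iterate down to a state of sufficiently low rank with positive partial transpose, which is separable and hence has Schmidt number $1$ -- a contradiction.

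The hard part is exactly this last step, and the equivalence above shows that it is a real obstacle rather than a gap one could bypass by aiming for something weaker: the subtraction has to keep \emph{both} $\rho$ and $(\ident_d\otimes\vartheta_d)(\rho)$ positive. This is precisely what fails for PPT edge states, whose range and the range of whose partial transpose are arranged so that no vector of Schmidt rank at most $d-1$ can be removed from both simultaneously; controlling these two range conditions at once is where the low-dimensional arguments -- which hand us the case $d=2$ for free -- cease to apply. The difficulty is at least consistent with, though in no way implied by, the observation that the familiar non-decomposable positive maps on $\M_3$, such as the Choi map, already fail to be $2$-positive.
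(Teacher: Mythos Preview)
This statement is a \emph{conjecture} in the paper, explicitly declared open for general $d$: the paper says ``Unfortunately, Conjecture~\ref{conj:MaxSNPPT} is still unsolved for general dimensions $d\in\N$,'' and cites \cite{woronowicz1976positive} for $d=2$ and \cite{yang2016all} for $d=3$. There is no proof in the paper to compare against.

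Your duality reformulation is correct and standard: the cone $\{\rho:\text{SN}(\rho)\leq d-1\}$ is dual to the cone of $(d-1)$-positive maps, and the PPT cone is dual to the decomposable maps, so Conjecture~\ref{conj:MaxSNPPT} is equivalent to the assertion that every $(d-1)$-positive map $\M_d\to\M_d$ is decomposable. This equivalence is known and does not by itself advance the problem; the reformulated statement is equally open. (A small attribution slip: the $d=3$ case is not what is ``proved in Section~\ref{sec:Examples}'' of the present paper, which establishes the PPT squared conjecture for $d=3$; Conjecture~\ref{conj:MaxSNPPT} for $d=3$ is the result of \cite{yang2016all}.)

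The gap you identify in your own approach is genuine and is exactly why the problem remains open. The rank-reduction idea---find a vector of Schmidt rank $\leq d-1$ in the range and subtract it---runs into PPT edge states, for which no product (or more generally low-Schmidt-rank) vector lies in the range of $\rho$ in a way compatible with the range of $(\ident_d\otimes\vartheta_d)(\rho)$. You acknowledge this, so what you have written is an honest outline of a known reformulation together with a candid statement of where it stalls, not a proof. If you intend this as a proof attempt, the missing idea is precisely a mechanism to handle edge states, and none is currently known in general dimension.
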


It is well-known that the previous conjecture is true for $d=2$ (see \cite{woronowicz1976positive}) and recently it has been established for $d=3$ (see \cite{yang2016all}). Note that the same conjecture with different input and output dimensions is false due to the existence of the entangled Tang-Horodecki~\cite{tang1986positive,horodecki1997separability} state having positive partial transpose.

Unfortunately, Conjecture \ref{conj:MaxSNPPT} is still unsolved for general dimensions $d\in\N$. If it were true, we could use the following iteration technique. We will start with a lemma.

\begin{lem}
If Conjecture \ref{conj:MaxSNPPT} is true, then for any $S_1,S_2:\M_d\ra\M_d$ satisfying 
\[
\text{SN}(C_{S_1})\leq k \hspace*{0.5cm}\text{ and }\hspace{0.5cm}\text{SN}(C_{S_2})\leq k
\]
and any $T:\M_d\ra\M_d$ that is completely positive and completely copositive, we have
\[
\text{SN}(C_{S_1\circ T\circ S_2})\leq k-1.
\]
\label{lem:OtherIteration}
\end{lem}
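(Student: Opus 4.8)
The plan is to reduce the statement to a question about a \emph{compression} of $T$ to dimension $k$, and then invoke Conjecture \ref{conj:MaxSNPPT} there. First recall (Choi--Jamiolkowski isomorphism together with Lemma \ref{lem:YAVD}) that a completely positive map $S:\M_d\ra\M_d$ satisfies $\text{SN}(C_S)\leq k$ if and only if $S=\sum_i \text{Ad}_{A_i}$ for linear maps $A_i:\C^d\ra\C^d$ with $\text{rk}(A_i)\leq k$. Writing $S_1=\sum_j \text{Ad}_{B_j}$ and $S_2=\sum_i \text{Ad}_{A_i}$ in this way, using linearity of $L\mapsto C_L$ and the fact that the Schmidt number of a sum of positive matrices is at most the maximum of the summands' Schmidt numbers, we get
\[
\text{SN}(C_{S_1\circ T\circ S_2}) \;=\; \text{SN}\Big(\sum_{i,j}C_{\text{Ad}_{B_j}\circ T\circ \text{Ad}_{A_i}}\Big) \;\leq\; \max_{i,j}\ \text{SN}\big(C_{\text{Ad}_{B_j}\circ T\circ \text{Ad}_{A_i}}\big).
\]
Hence it suffices to prove $\text{SN}(C_{\text{Ad}_B\circ T\circ \text{Ad}_A})\leq k-1$ for single operators $A,B$ with $\text{rk}(A),\text{rk}(B)\leq k$. (One may assume $k\geq 2$; for $k=1$ the maps $S_1,S_2$ are entanglement breaking, so $S_1\circ T\circ S_2$ is too, and the bound $k-1$ should be read as $\max(k-1,1)$.)

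Next I would peel off the isometric parts via a singular value decomposition: write $A=U A_0 V^\dagger$ and $B=U' B_0 V'^\dagger$ with isometries $U,V,U',V':\C^k\ra\C^d$ and matrices $A_0,B_0\in\M_k$. Using $\text{Ad}_{XY}=\text{Ad}_X\circ \text{Ad}_Y$, this gives
\[
\text{Ad}_B\circ T\circ \text{Ad}_A \;=\; \text{Ad}_{U'B_0}\circ \widehat{T}\circ \text{Ad}_{A_0 V^\dagger}, \qquad \widehat{T}\;:=\;\text{Ad}_{V'^\dagger}\circ T\circ \text{Ad}_U:\M_k\ra\M_k ,
\]
so the whole problem is reduced to bounding the Schmidt number of the Choi matrix of the compression $\widehat{T}$ of $T$ to $\M_k$.

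The crucial step is to show that $\widehat{T}$ is again completely positive and completely copositive. Complete positivity is immediate, as $\widehat{T}$ is a composition of completely positive maps. For complete copositivity, one uses the elementary identity $\vartheta\circ \text{Ad}_Y=\text{Ad}_{\overline{Y}}\circ\vartheta$ (direct computation, $\overline{Y}$ the entrywise conjugate) to commute the transposition past the conjugations:
\[
\vartheta_k\circ \widehat{T} \;=\; \text{Ad}_{\overline{V'^\dagger}}\circ(\vartheta_d\circ T)\circ \text{Ad}_U ,
\]
which is completely positive because $\vartheta_d\circ T$ is (as $T$ is completely copositive) and $\text{Ad}$-maps always are. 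Now Conjecture \ref{conj:MaxSNPPT}, applied in dimension $k\leq d$, yields $\text{SN}(C_{\widehat{T}})\leq k-1$, i.e.\ $\widehat{T}=\sum_m \text{Ad}_{N_m}$ with $N_m\in\M_k$ of rank at most $k-1$.

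Finally I would recompose: substituting the decomposition of $\widehat{T}$ back in,
\[
\text{Ad}_B\circ T\circ \text{Ad}_A \;=\; \sum_m \text{Ad}_{U'B_0 N_m A_0 V^\dagger},
\]
and $\text{rk}(U'B_0 N_m A_0 V^\dagger)\leq \text{rk}(N_m)\leq k-1$ for every $m$, so $C_{\text{Ad}_B\circ T\circ\text{Ad}_A}$ is a sum of rank-one matrices $\proj{\psi_m}{\psi_m}$ with $\text{SR}(\ket{\psi_m})\leq k-1$; thus $\text{SN}(C_{\text{Ad}_B\circ T\circ \text{Ad}_A})\leq k-1$, and combined with the first paragraph this proves the lemma. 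The step I expect to be most delicate is verifying that the compression $\widehat{T}$ remains completely copositive: complete positivity is free, but the transposition has to be tracked carefully through the adjoint/conjugation maps, and one must observe that it is legitimate to invoke Conjecture \ref{conj:MaxSNPPT} in the reduced dimension $k$ rather than in $d$.
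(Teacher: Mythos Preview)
Your proposal is correct and follows essentially the same approach as the paper: decompose $S_1,S_2$ into $\text{Ad}$-maps with Kraus operators of rank $\leq k$, use the singular value decomposition to compress $T$ to a map $\widehat{T}:\M_k\ra\M_k$ that is still completely positive and completely copositive, invoke Conjecture~\ref{conj:MaxSNPPT} in dimension $k$, and then pull the resulting Schmidt number bound back through the local operations. The only cosmetic differences are that the paper absorbs the diagonal parts $A_0,B_0$ into the compressed map (calling it $K_{ij}$) rather than leaving them outside, and in the last step it appeals directly to monotonicity of Schmidt number under separable operations instead of explicitly recomposing the Kraus operators as you do; your more careful verification that $\widehat{T}$ is completely copositive and your remark on the $k=1$ case are points the paper leaves implicit.
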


\begin{proof}
By \cite[Theorem 1 and (8)]{chruscinski2006partially} there exist $l\in\N$ and operators $A_i,B_i\in\M_d$ satisfying $\text{rk}(A_i)\leq k$ and $\text{rk}(B_i)\leq k$ for any $i\in \lset 1,\ldots ,l\rset$ such that 
\[
S_1(X) = \sum^l_{i=1}A_iXA^\dagger_i \hspace*{0.5cm}\text{ and }\hspace{0.5cm} S_2(X) = \sum^l_{i=1}B_iXB^\dagger_i
\]
for any $X\in\M_d$. For any $i\in\lset 1,\ldots l\rset$ we can use the singular value decomposition to decompose $A_i = U_i D_i V^\dagger_i$ and $B_i = \tilde{U}_i \tilde{D}_i \tilde{V}^\dagger_i $ with isometries $U_i,V_i,\tilde{U}_i,\tilde{V}_i:\C^{k}\ra \C^d$ and positive diagonal matrices $D_i,\tilde{D}_i\in \M_k$.  

Then we have 
\[
S_1\circ T\circ S_2 = \sum^l_{i,j=1} \text{Ad}_{U_i}\circ \text{Ad}_{D_i}\circ \text{Ad}_{V_i}^*\circ T\circ \text{Ad}_{\tilde{U}_j}\circ \text{Ad}_{\tilde{D}_j}\circ \text{Ad}_{\tilde{V}_j}^*.
\]
For any $i,j\in \lset 1,\ldots , l\rset$ the linear map 
\[
K_{ij} := \text{Ad}_{D_i}\circ \text{Ad}_{V_i}^*\circ T\circ \text{Ad}_{\tilde{U}_j}\circ \text{Ad}_{\tilde{D}_j}:\M_k\ra\M_k
\] 
is completely positive and completely copositive and if Conjecture \ref{conj:MaxSNPPT} is true, then we would have that $\text{SN}(C_{K_{ij}})\leq k-1$. Since the Schmidt number cannot increase under separable operations and under forming sums we find that 
\[
\text{SN}\lb C_{S_1\circ T\circ S_2}\rb = \text{SN}\lb \sum^l_{i,j=1} ((\vartheta_{d}\circ \text{Ad}_{\tilde{V}_j}\circ \vartheta_{k})\otimes \text{Ad}_{U_i})\lb C_{K_{ij}}\rb\rb\leq k-1.
\] 
 
\end{proof}

\begin{thm}
If Conjecture \ref{conj:MaxSNPPT} is true, we have for any completely positive and completely copositive map $T:\M_{d}\ra\M_{d}$ that 
\[
\text{SN}(C_{T^{2^{k}-1}})\leq d-k.
\]  
In particular the composition $T^{2^{d-1}-1}$ would be entanglement breaking.  
\end{thm}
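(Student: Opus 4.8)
The plan is to run an induction on $k$ with Lemma~\ref{lem:OtherIteration} as the engine, using the elementary observation that every power of $T$ stays in the class of completely positive, completely copositive maps.

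First I would record this closure property: if $T:\M_d\ra\M_d$ is completely positive and completely copositive, then so is $T^m$ for every $m\in\N$. Indeed, a composition of completely positive maps is completely positive, and moreover $\vartheta_d\circ T^m=(\vartheta_d\circ T)\circ T^{m-1}$ is again a composition of completely positive maps, hence completely positive. In particular each $T^m$ is an admissible choice for the ``middle'' map in Lemma~\ref{lem:OtherIteration}.

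Next I would prove by induction on $k\in\{0,1,\dots,d-1\}$ the bound $\text{SN}(C_{T^{2^k-1}})\leq d-k$. For the base case $k=0$ note that $T^{2^0-1}=T^0$ is the identity map, so its Choi matrix is $\omega_d$, which has Schmidt number $d=d-0$. For the inductive step, assume $\text{SN}(C_{T^{2^k-1}})\leq d-k$ for some $k$ with $0\leq k\leq d-2$. The exponent identity $2^{k+1}-1=(2^k-1)+1+(2^k-1)$ gives the factorization $T^{2^{k+1}-1}=T^{2^k-1}\circ T\circ T^{2^k-1}$. Applying Lemma~\ref{lem:OtherIteration} with $S_1=S_2=T^{2^k-1}$ --- which satisfy $\text{SN}(C_{S_1}),\text{SN}(C_{S_2})\leq d-k$ by the induction hypothesis --- and with the completely positive, completely copositive middle map taken to be $T$ itself, we obtain $\text{SN}(C_{T^{2^{k+1}-1}})\leq (d-k)-1=d-(k+1)$, which closes the induction.

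Finally, taking $k=d-1$ yields $\text{SN}(C_{T^{2^{d-1}-1}})\leq 1$, i.e.\ the Choi matrix of $T^{2^{d-1}-1}$ is separable; hence $T^{2^{d-1}-1}$ is entanglement breaking by the characterization of entanglement breaking maps via separability of the Choi matrix~\cite{horodecki2003entanglement}. There is no serious obstacle here beyond the bookkeeping: the one point to keep track of is that the induction must terminate at $k=d-1$, because the conclusion of Lemma~\ref{lem:OtherIteration} (which relies on Conjecture~\ref{conj:MaxSNPPT} in dimension $d-k$) is only meaningful when its parameter is at least $2$, i.e.\ precisely when $d-k\geq 2$, which is exactly the range in which the inductive step is invoked.
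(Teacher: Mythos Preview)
Your proof is correct and follows essentially the same inductive scheme as the paper, applying Lemma~\ref{lem:OtherIteration} with $S_1=S_2=T^{2^k-1}$ and middle map $T$ to pass from $k$ to $k+1$. The only cosmetic difference is that you anchor the induction at $k=0$ (identity map, trivial Schmidt-number bound) whereas the paper starts at $k=1$ using Conjecture~\ref{conj:MaxSNPPT} directly; your extra remarks on the closure of powers of $T$ under complete (co)positivity and on the range $d-k\ge 2$ needed for the lemma are accurate but not strictly required for the argument.
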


\begin{proof}
We will use induction for the proof. The case $k=1$ follows directly from Conjecture \ref{conj:MaxSNPPT}. Now suppose that the theorem holds for $k\in\lset 1,\ldots ,d-2\rset$ inserting $S_1,S_2=T^{2^{k}-1}$ in Lemma \ref{lem:OtherIteration} we obtain 
\[
\text{SN}(C_{T^{2^{k+1}-1}})=\text{SN}(C_{T^{2^{k}-1}\circ T\circ T^{2^{k}-1}})\leq d-k-1.
\]

\end{proof}

\section{Examples of $k$-entanglement breaking maps}
\label{sec:Examples}

In this section we will study examples of completely positive maps that are $k$-entanglement breaking for some $k\in\N$ without being entanglement breaking.

\subsection{Examples for dimension $d=3$}
\label{subsec:d3}

By~\cite{woronowicz1976positive} a positive matrix $X\in (\M_2\otimes \M_3)^+$ is separable if and only if it has positive partial transpose. This immediately characterizes the set of $2$-entanglement breaking maps. 

\begin{thm} 
A linear map $T:\M_{3}\ra \M_3$ is $2$-entanglement breaking if and only if it is $2$-positive and $2$-copositive (i.e.~$\vartheta_3\circ T$ is $2$-positive).
\label{thm:charact2EBd3}
\end{thm}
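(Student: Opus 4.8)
The plan is to reduce the separability condition that defines $2$-entanglement breaking to a pair of positivity conditions, using the theorem of Woronowicz \cite{woronowicz1976positive} that a positive matrix in $\M_2\otimes\M_3$ is separable if and only if it has positive partial transpose. Concretely, for a linear map $T:\M_3\ra\M_3$ and any $X\in(\M_2\otimes\M_3)^+$, I would set $Y:=(\ident_2\otimes T)(X)\in\M_2\otimes\M_3$. By Woronowicz, $Y$ is separable iff $Y\geq 0$ and its partial transpose on the $\M_3$ factor is positive; and the key identity is $(\ident_2\otimes\vartheta_3)(Y)=(\ident_2\otimes(\vartheta_3\circ T))(X)$, so this second condition is exactly positivity of $\ident_2\otimes(\vartheta_3\circ T)$ evaluated on $X$.

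For the ``if'' direction I would assume $T$ is $2$-positive and $2$-copositive (i.e.\ $\vartheta_3\circ T$ is $2$-positive). Then for any $X\in(\M_2\otimes\M_3)^+$, $2$-positivity of $T$ gives $Y\geq 0$ and $2$-positivity of $\vartheta_3\circ T$ gives $(\ident_2\otimes\vartheta_3)(Y)\geq 0$; hence $Y$ is PPT, and therefore separable by Woronowicz. Since $X$ was arbitrary and $2$-positivity is precisely the hypothesis ($k=2$) that Definition \ref{defn:kEB} requires, $T$ is $2$-entanglement breaking.

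For the ``only if'' direction I would note that a $2$-entanglement breaking map is by Definition \ref{defn:kEB} in particular $2$-positive, and that $(\ident_2\otimes T)(X)$ is separable for every $X\in(\M_2\otimes\M_3)^+$. Every separable matrix is PPT (applying $\ident\otimes\vartheta_3$ to a sum of product terms $A\otimes B$ preserves positivity since $B^T\geq 0$), so $(\ident_2\otimes(\vartheta_3\circ T))(X)\geq 0$ for all such $X$, which is exactly $2$-positivity of $\vartheta_3\circ T$, i.e.\ $2$-copositivity of $T$.

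The only genuine content is Woronowicz's theorem, which I am entitled to invoke; everything else is bookkeeping. The one point to be careful about is to take the partial transpose on the \emph{output} ($\M_3$) factor, so that it is absorbed into $T$ as $\vartheta_3\circ T$, rather than transposing the $\M_2$ factor, which would instead yield a condition over a rotated cone of inputs and obscure the equivalence. (Alternatively one could route the argument through Lemma \ref{lem:SNtrimmingFrom2EB} specialized to $d_1=d_2=3$, $n=2$, reducing $2$-entanglement breaking to ``$\text{SN}\big((\ident_2\otimes T)(X)\big)\leq 1$'', but one still needs Woronowicz to identify $\text{SN}=1$ with the PPT property in the $2\times 3$ setting.)
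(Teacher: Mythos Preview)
Your proposal is correct and is exactly the argument the paper has in mind: the paper simply states that Woronowicz's $2\times 3$ separability criterion \cite{woronowicz1976positive} ``immediately characterizes the set of $2$-entanglement breaking maps'' and records the theorem without further proof. Your write-up fills in precisely the bookkeeping (including the observation that $2$-positivity is built into Definition~\ref{defn:kEB}) that the paper leaves implicit.
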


Since any completely positive map is in particular $2$-positive, we find that any completely positive and completely copositive map $T:\M_{3}\ra \M_3$ is $2$-entanglement breaking. Now an application of Lemma \ref{lem:SNtrimmingFrom2EB} and Theorem \ref{thm:SNIteration} shows that the PPT squared conjecture holds in dimension $d=3$:

\begin{cor}
For any pair of completely positive and completely copositive maps $T_1, T_2:\M_{3}\ra\M_{3}$ the composition $T_2\circ T_1$ is entanglement breaking.
\label{thm:PPT2Qutrit}
\end{cor}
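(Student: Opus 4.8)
The plan is to obtain the corollary by combining the characterisation of $2$-entanglement breaking maps on $\M_3$ (Theorem \ref{thm:charact2EBd3}) with the Schmidt number iteration (Theorem \ref{thm:SNIteration}). The first step is to check that each $T_i$ is $2$-entanglement breaking. Being completely positive, $T_i$ is in particular $2$-positive; being completely copositive, $\vartheta_3\circ T_i$ is completely positive, hence $2$-positive, i.e.\ $T_i$ is $2$-copositive in the sense of Theorem \ref{thm:charact2EBd3}. That theorem then immediately yields that $T_i$ is $2$-entanglement breaking for $i\in\{1,2\}$.

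The second step is a direct application of Theorem \ref{thm:SNIteration} with $d=3$ and $n=2$: here $\lceil \tfrac{d-1}{n-1}\rceil = \lceil 2\rceil = 2$, so the theorem (applied to the two $2$-entanglement breaking maps $T_1$ and $T_2$) gives that $T_2\circ T_1$ is entanglement breaking, which is exactly the claim. Equivalently, one can run the argument by hand: Lemma \ref{lem:SNtrimmingFrom2EB} with $k=d_2=3$, $n=2$ shows $\mathrm{SN}\big((\ident_3\otimes T_1)(X)\big)\le 2$ for every $X\in(\M_3\otimes\M_3)^+$; writing the resulting matrix via the Schmidt decomposition (Lemma \ref{lem:SchmidtDecomp}) in the trimmed form $\sum_j (V_j\otimes\one_3)\psi_j(V_j^\dagger\otimes\one_3)$ with isometries $V_j:\C^2\to\C^3$ and pure $\psi_j\in(\M_2\otimes\M_3)^+$, then applying $\ident_d\otimes T_2$, using that the Schmidt number does not increase under the local isometries $V_j$ or under forming sums, and invoking Lemma \ref{lem:SNtrimmingFrom2EB} once more for $l=2$, $n=2$ on each $\psi_j$, forces the final Schmidt number down to $1$, i.e.\ separability.

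Since all the heavy lifting is done in the cited results, there is no genuine obstacle at this last step; the substantive inputs are (i) Woronowicz's theorem that positive matrices in $(\M_2\otimes\M_3)^+$ with positive partial transpose are separable, which underlies Theorem \ref{thm:charact2EBd3}, and (ii) the sub-block extraction result (Theorem \ref{thm:SubBlockEntSN}) feeding Lemma \ref{lem:SNtrimmingFrom2EB}. The one point requiring a moment's care is the passage from "completely copositive into $\M_3$" to "$2$-copositive" in the precise sense used by Theorem \ref{thm:charact2EBd3}, which is immediate because transposition composed with a completely positive map is again completely positive, and hence $2$-positive.
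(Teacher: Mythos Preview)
Your proposal is correct and follows essentially the same approach as the paper: first use Theorem \ref{thm:charact2EBd3} (via Woronowicz) to see that each completely positive and completely copositive $T_i:\M_3\to\M_3$ is $2$-entanglement breaking, and then apply the Schmidt number iteration (Theorem \ref{thm:SNIteration}, equivalently Lemma \ref{lem:SNtrimmingFrom2EB}) with $d=3$, $n=2$. The paper's own proof is just the one-line remark preceding the corollary, so your expanded argument is exactly the intended one.
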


We will see in Section \ref{sec:PPTsquareEquiv} that the PPT squared conjecture implies Conjecture \ref{conj:PPT23} on entanglement annihilation. This easily gives the following corollary.

\begin{cor}
For any pair of linear maps $T_1, T_2:\M_{3}\ra\M_{3}$ both of which completely positive and completely copositive the matrix $(T_1\otimes T_2)(X)$ is separable for any positive matrix $X\in \M(\C^{3}\otimes \C^{3})^+$. 
\end{cor}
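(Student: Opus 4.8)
The plan is to deduce this corollary from Corollary~\ref{thm:PPT2Qutrit} (the $d=3$ PPT squared statement) together with the implication "PPT squared $\Rightarrow$ entanglement annihilation" that the paper announces for Section~\ref{sec:PPTsquareEquiv}. Concretely, I want to show that if $T_1,T_2:\M_3\ra\M_3$ are both completely positive and completely copositive, then $(T_1\otimes T_2)(X)$ is separable for every $X\in(\M_3\otimes\M_3)^+$. First I would recall that separability is preserved under local completely positive maps and is a property of the cone, so it suffices to handle a single $X$; moreover, by scaling, we may take $X$ to be a density matrix. The key structural fact I will invoke is the standard factorization of an entanglement-breaking map: a completely positive map $S:\M_3\ra\M_3$ is entanglement breaking if and only if it factors as $S = S'' \circ S'$ where $S':\M_3\ra\M_m$ is a "measurement" map (output diagonal, i.e.\ classical) and $S'':\M_m\ra\M_3$ is a "preparation" map — equivalently $S(Y) = \sum_k \trace{F_k Y}\,\sigma_k$ with $F_k\geq 0$ and $\sigma_k$ density matrices.

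**The main argument.** By Corollary~\ref{thm:PPT2Qutrit}, the map $T_i\circ T_i = T_i^2$ is entanglement breaking for $i=1,2$; but I actually want to split each single $T_i$, not its square, so instead I would argue as follows. Write $T_i = \mathrm{id}\circ T_i$ and feed the global state through one factor at a time: the claim $(T_1\otimes T_2)(X)$ separable is equivalent, via the "entanglement annihilation" reformulation, to the statement that $T_1$ (completely positive, completely copositive on $\M_3$) annihilates entanglement when tensored with any such $T_2$. The cleanest route is: let $\ket{\psi}_{R A_1 A_2}$ be a purification-style dilation and observe that $(\mathrm{id}_R\otimes T_1\otimes \mathrm{id}_{A_2})(\proj{\psi}{\psi})$ — by Corollary~\ref{thm:PPT2Qutrit} applied with the "reference" being $R A_2$ appropriately reorganized — already has Schmidt number $1$ across the cut $A_1 \,|\, R A_2$ once we also apply the completely copositive partner; then applying $T_2$ on $A_2$ (a local channel) cannot recreate entanglement. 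Since $R$ is arbitrary this forces $(T_1\otimes T_2)(X)$ separable. I would make this precise by citing the equivalence in Section~\ref{sec:PPTsquareEquiv} as a black box: "PPT squared in dimension $d$ $\Longrightarrow$ for all completely positive completely copositive $T_1,T_2:\M_d\ra\M_d$ and all $X\geq 0$, $(T_1\otimes T_2)(X)$ is separable," and then simply instantiate it at $d=3$ using Corollary~\ref{thm:PPT2Qutrit}.

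**What to actually write.** So the proof I would record is essentially one line: \emph{By Corollary~\ref{thm:PPT2Qutrit} the PPT squared conjecture holds for $d=3$, and by the implication established in Section~\ref{sec:PPTsquareEquiv} (PPT squared implies Conjecture~\ref{conj:PPT23} on entanglement annihilation) this yields the claim for $d=3$.} If instead a self-contained argument is wanted, I would: (i) use the entanglement-breaking factorization to write $T_1^2 = \mathcal{P}_1\circ\mathcal{M}_1$ with $\mathcal{M}_1$ measuring and $\mathcal{P}_1$ preparing, similarly $T_2^2 = \mathcal{P}_2\circ\mathcal{M}_2$; (ii) note $T_1\otimes T_2 = (T_1\otimes T_2)\circ(\text{something})$ does not obviously decompose, so instead apply $T_1\otimes T_2$ followed by $T_1\otimes T_2$ again to get $T_1^2\otimes T_2^2 = (\mathcal{P}_1\otimes\mathcal{P}_2)\circ(\mathcal{M}_1\otimes\mathcal{M}_2)$, which is manifestly entanglement annihilating — but this only controls the double application.

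**The main obstacle.** The genuine difficulty is exactly that last gap: Corollary~\ref{thm:PPT2Qutrit} directly gives that $T^2$ is entanglement breaking, which makes $T^2\otimes S^2$ entanglement \emph{annihilating} for free, but the corollary as stated asserts the stronger entanglement-\emph{annihilation} property for the \emph{single} application $T_1\otimes T_2$. Bridging from "entanglement breaking" (a property of one tensor factor against an arbitrary ancilla) to "entanglement annihilation" (both factors acted on simultaneously) is precisely the content of the equivalence flagged for Section~\ref{sec:PPTsquareEquiv}; the honest plan is therefore to prove that equivalence there and cite it here, rather than to re-derive it. I expect the equivalence itself to go through a purification/Choi argument: if $T_1$ is $d$-entanglement breaking then $(T_1\otimes\mathrm{id})(\proj{\psi}{\psi})$ is separable for the maximally entangled $\psi$, hence $(T_1\otimes T_2)(\proj{\psi}{\psi}) = (\mathrm{id}\otimes T_2)$ of a separable state is separable, and a general $X$ reduces to the maximally entangled case by writing $X = (A\otimes B)\,\omega_3\,(A\otimes B)^\dagger$-type expansions (Lemma~\ref{lem:YAVD}) and using closure of separability under local maps and sums.
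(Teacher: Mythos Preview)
Your one-line proof in ``what to actually write'' --- cite Corollary~\ref{thm:PPT2Qutrit} and then the implication (Conjecture~\ref{conj:PPT22} $\Rightarrow$ Conjecture~\ref{conj:PPT23}) from Section~\ref{sec:PPTsquareEquiv} --- is exactly what the paper does, and it is correct.

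However, the surrounding discussion is confused in ways that would cause trouble if you tried to expand it into a self-contained argument. First, you repeatedly treat Corollary~\ref{thm:PPT2Qutrit} as if it only says ``$T_i^2$ is entanglement breaking''; in fact it already gives the two-map statement, that $T_2\circ T_1$ is entanglement breaking for \emph{any} pair of completely positive and completely copositive maps on $\M_3$. This is precisely what the equivalence proof in Section~\ref{sec:PPTsquareEquiv} needs, so the gap you describe in your ``main obstacle'' paragraph does not exist. Second, your sketch of how that equivalence proof should go is wrong: you write ``if $T_1$ is $d$-entanglement breaking then $(T_1\otimes\ident)(\ketbra{\psi})$ is separable'', but $T_1$ by itself is \emph{not} entanglement breaking --- only compositions are. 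The paper's actual mechanism uses Lemma~\ref{Lemma:tricks} to rewrite $(T_1\otimes T_2)(\ketbra{\psi})$ as $(\ident\otimes S)(\omega_3)$ with $S = T_2\circ\bigl(\text{Ad}_A\circ\vartheta_3\circ T_1^*\circ\vartheta_3\bigr)$, a composition of two completely positive and completely copositive maps $\M_3\ra\M_3$, and then invokes the two-map Corollary~\ref{thm:PPT2Qutrit} to conclude $S$ is entanglement breaking. Your attempted self-contained routes (factorizing $T_i^2$ through a measurement, or applying $T_1\otimes T_2$ twice) do not reach the claim, for the reason you yourself note.
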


\subsection{Examples for dimension $d=4$}
\label{subsec:d4} 

In \cite{huber2018high} it was shown (also see Appendix \ref{sec:CWDEC}) that for $d_1\leq d_2$ any bipartite positive matrix $X\in(\M_{d_1}\otimes \M_{d_2})^+$ satisfying $(\vartheta_{d_1}\otimes \ident_{d_2})(X) = X$, i.e.~it is invariant under partial transposition of the smaller subsystem, has Schmidt number $\text{SN}(X)\leq d_1-1$. The following lemma is a straightforward consequence of these results in combination with a theorem from \cite{chruscinski2006partially}.

\begin{lem}
For any $T:\M_{d_1}\ra \M_{d_2}$ satisfying $T\circ \vartheta_{d_1} = T$ in the case $d_1\leq d_2$, and $\vartheta_{d_2}\circ T= T$ in the case $d_2\leq d_1$, there exists $k\in \N$ and matrices $A_i:\C^{d_1}\ra \C^{d_2}$ with $\text{rk}(A_i)\leq \min(d_1,d_2) -1$ such that
\[
T(X) = \sum^k_{i=1} A_i XA^\dagger_i
\]
for any $X\in \M_{d_1}$.
\label{lem:KrausRankBound}
\end{lem}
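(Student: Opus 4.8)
The plan is to pass to the Choi matrix $C_T=\lb\ident_{d_1}\otimes T\rb(\omega_{d_1})$, rephrase the transposition-symmetry hypothesis on $T$ as a transposition symmetry of $C_T$, apply the Schmidt-number bound recalled just above (from \cite{huber2018high}; see also Appendix \ref{sec:CWDEC}), and then translate the resulting bound $\text{SN}(C_T)\le\min(d_1,d_2)-1$ back into a bound on the ranks of Kraus operators. A preliminary remark: the claimed decomposition $T(X)=\sum_i A_iXA_i^\dagger$ can hold only when $T$ is completely positive, so I work under that (necessary) hypothesis, which is the one relevant for all applications of the lemma.

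First I would compute, directly from $\omega_{d_1}=\sum_{i,j}\proj{i}{j}\otimes\proj{i}{j}$, that $\lb\vartheta_{d_1}\otimes\ident_{d_2}\rb(C_T)=C_{T\circ\vartheta_{d_1}}$ and $\lb\ident_{d_1}\otimes\vartheta_{d_2}\rb(C_T)=C_{\vartheta_{d_2}\circ T}$ (in the first identity one relabels the summation indices and uses $\proj{j}{i}=\vartheta_{d_1}(\proj{i}{j})$). Consequently, when $d_1\le d_2$ the hypothesis $T\circ\vartheta_{d_1}=T$ is equivalent to invariance of $C_T$ under partial transposition of the smaller (first) tensor factor, and when $d_2\le d_1$ the hypothesis $\vartheta_{d_2}\circ T=T$ is equivalent to invariance of $C_T$ under partial transposition of the smaller (second) factor. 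In either case $C_T\in\lb\M_{d_1}\otimes\M_{d_2}\rb^+$ is a positive bipartite matrix invariant under partial transposition of its smaller subsystem, so — using that the Schmidt number is unchanged by swapping the two factors, to bring the second case into the form in which the bound is stated — the result of \cite{huber2018high} gives $\text{SN}(C_T)\le\min(d_1,d_2)-1$.

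Finally I would invoke \cite[Theorem 1 and (8)]{chruscinski2006partially}: a completely positive map has Choi matrix of Schmidt number at most $k$ if and only if it admits a Kraus decomposition $T(X)=\sum_{i=1}^{k}A_iXA_i^\dagger$ with $\text{rk}(A_i)\le k$ for each $i$; applied with $k=\min(d_1,d_2)-1$ this is exactly the lemma. The hard part is bookkeeping rather than conceptual: one must keep careful track of which of the two tensor factors carries the transposition in each case and line this up with the ``transpose the smaller subsystem'' hypothesis of \cite{huber2018high}, and one must check that the Kraus-rank characterization of \cite{chruscinski2006partially} is precisely the Choi-picture reformulation of a Schmidt-number bound. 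Beyond that, the lemma is a direct combination of the two cited results.
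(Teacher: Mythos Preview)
Your proposal is correct and follows essentially the same route as the paper: verify that the hypothesis on $T$ translates into invariance of $C_T$ under partial transposition on the smaller tensor factor, apply the Schmidt-number bound of \cite{huber2018high} (Theorem~\ref{thm:SNPTInv}), and then invoke \cite[Theorem~1 and (8)]{chruscinski2006partially} to obtain Kraus operators of rank at most $\min(d_1,d_2)-1$. Your explicit remark that complete positivity of $T$ is implicitly required is a useful clarification; the only cosmetic slip is that in your final paragraph the symbol $k$ is used both for the Schmidt-number bound and for the number of Kraus operators, which are not the same.
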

\begin{proof}
Assuming $d_1\leq d_2$ and $T\circ \vartheta_{d_1} = T$, it can be easily verified that the Choi matrix $C_T = (\ident_{d_1}\otimes T)(\omega_{d_1})$ satisfies the assumptions of Theorem \ref{thm:SNPTInv} from Appendix \ref{sec:CWDEC}, and $\text{SN}(C_T)\leq d_1-1$. Now an application of \cite[Theorem 1 and (8)]{chruscinski2006partially} finishes the proof. The case $d_2\leq d_1$ and $\vartheta_{d_2}\circ T= T$ works in the same way. 
\end{proof}

Note that in the case $d_1=d_2=d$ we can apply the previous lemma when $T\circ \vartheta_{d} = T$ or $\vartheta_d \circ T= T$. Note that these cases are \emph{not} equivalent. In the special case of $d_1=d_2=4$ we have:

\begin{thm}
Let $T:\M_{4}\ra \M_{4}$ be completely positive and completely copositive, and let $S:\M_{4}\ra \M_{4}$ be completely positive such that $\vartheta_{4}\circ S = S$ or $S\circ \vartheta_{4} = S$. Then the composition $S\circ T$ is $2$-entanglement breaking.
\label{thm:PTInv2EB}
\end{thm}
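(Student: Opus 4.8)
The plan is to reduce the assertion to the special feature of the pair $(2,3)$ that, by~\cite{woronowicz1976positive}, a positive matrix in $\M_2\otimes\M_3$ is separable precisely when it has positive partial transpose. The mechanism producing a $\M_3$ is the rank bound on the Kraus operators of $S$. Concretely, first I would apply Lemma~\ref{lem:KrausRankBound} to $S$: since $d_1=d_2=4$ and $S$ satisfies $S\circ\vartheta_4=S$ or $\vartheta_4\circ S=S$, that lemma gives operators $A_i\in\M_4$ with $\text{rk}(A_i)\leq 3$ and $S=\sum_i\text{Ad}_{A_i}$. Using the singular value decomposition write $A_i=U_iD_iV_i^\dagger$ with isometries $U_i,V_i:\C^3\to\C^4$ and a positive diagonal $D_i\in\M_3$, so that $S\circ T=\sum_i\text{Ad}_{U_i}\circ R_i$, where $R_i:=\text{Ad}_{D_iV_i^\dagger}\circ T:\M_4\to\M_3$.

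The key point is that each $R_i$ is both completely positive and completely copositive as a map with $3$-dimensional output. Complete positivity is immediate. For complete copositivity one checks that $\vartheta_3\circ R_i=\text{Ad}_{D_i\overline{V_i}^\dagger}\circ(\vartheta_4\circ T)$: this uses that $D_i$ is real diagonal, so $\vartheta_3$ commutes with $\text{Ad}_{D_i}$, together with the identity $\vartheta_3(V_i^\dagger YV_i)=V_i^T Y^T\overline{V_i}=\text{Ad}_{\overline{V_i}^\dagger}(\vartheta_4(Y))$. Since $T$ is completely copositive, $\vartheta_4\circ T$ is completely positive, hence so is $\vartheta_3\circ R_i$.

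Next I would show that $(\ident_2\otimes R_i)(X)$ is separable for every $X\in(\M_2\otimes\M_4)^+$. It lies in $(\M_2\otimes\M_3)^+$ because $R_i$ is positive, and its partial transpose $(\ident_2\otimes\vartheta_3)\bigl((\ident_2\otimes R_i)(X)\bigr)=(\ident_2\otimes(\vartheta_3\circ R_i))(X)$ is positive because $\vartheta_3\circ R_i$ is completely positive; by~\cite{woronowicz1976positive} this forces separability. Finally, $(\ident_2\otimes(S\circ T))(X)=\sum_i(\ident_2\otimes\text{Ad}_{U_i})\bigl((\ident_2\otimes R_i)(X)\bigr)$ is a sum of images of separable matrices under the separability-preserving local maps $\ident_2\otimes\text{Ad}_{U_i}$, hence separable. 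This is exactly the statement that $S\circ T$ is $2$-entanglement breaking.

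I do not anticipate a genuine obstacle here; the only delicate bookkeeping is tracking transposes and complex conjugates when verifying that $R_i$ is completely copositive (alternatively this can be phrased through Choi matrices and Lemma~\ref{Lemma:tricks}), and making sure the hypothesis imposed on $S$ is in the exact form required by Lemma~\ref{lem:KrausRankBound}, which is why the theorem allows either $\vartheta_4\circ S=S$ or $S\circ\vartheta_4=S$.
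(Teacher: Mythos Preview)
Your proposal is correct and follows essentially the same route as the paper's proof: both apply Lemma~\ref{lem:KrausRankBound} to $S$, use the singular value decomposition to factor each rank-$\leq 3$ Kraus operator through $\C^3$, observe that the resulting maps $\M_4\to\M_3$ are completely positive and completely copositive, and invoke the $2\times 3$ PPT criterion~\cite{woronowicz1976positive}. The only cosmetic difference is that the paper writes the SVD with unitaries in $\mathcal{U}_4$ and a $4\times 4$ diagonal matrix with a zero entry (so the $\M_3$ map appears as a direct summand), whereas you use isometries $\C^3\to\C^4$ directly; your version also spells out the complete copositivity of $R_i$ explicitly, which the paper states without verification.
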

\begin{proof}
By Lemma \ref{lem:KrausRankBound} there exists $k\in \N$ and matrices $A_i\in\M_4$ with $\text{rk}(A_i)\leq 3$ such that
\[
S(X) = \sum^k_{i=1} A_i XA^\dagger_i
\]
for any $X\in \M_{4}$. Consider the singular value decomposition $A_i = U_iD_iV_i$ with unitaries $U_i,V_i\in\mathcal{U}_4$ and $D_i = \text{diag}(d^1_i,d^2_i,d^3_i,0)$ with $d^1_i,d^2_i,d^3_i\in\R_0^+$. Note that for any $i\in\lset 1,\ldots ,k\rset$ we have 
\[
D_iV_i T(X) V^\dagger_i D_i = K_i(X)\oplus 0
\]
for some completely positive and completely copositive $K_i:\M_4\ra \M_3$. For any $Y\in(\M_2\otimes \M_4)^+$ the matrix $(\ident_2\otimes K_i)(Y)\in(\M_2\otimes \M_3)^+$ has a positive partial transpose and thus is separable by \cite{woronowicz1976positive}. Since separability is preserved under local unitary transformations we have that  
\[
(\ident_2\otimes \text{Ad}_{A_i}\circ T)(Y)= (\one_2\otimes U_i)\lbr (\ident_2\otimes K_i)(Y)\oplus 0\rbr(\one_2\otimes U_i)^\dagger
\]
is separable as well. Using that sums of separable operators are separable we find that $(\ident_2\otimes S\circ T)(Y)$ is separable. This finishes the proof.

\end{proof}

Using the Schmidt number iteration we obtain the following corollary.

\begin{cor}
For $i\in \lset 1,2,3\rset$ let $T_i:\M_{4}\ra\M_{4}$ denote completely positive and completely copositive maps and let $S_i:\M_{4}\ra\M_4$ denote completely positive maps satisfying $\vartheta_{4}\circ S_i = S_i$ or $S_i\circ \vartheta_{4} = S_i$. Then the composition $S_3\circ T_3\circ S_2\circ T_2\circ S_1\circ T_1$ is entanglement breaking.

\end{cor}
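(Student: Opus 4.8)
The plan is to combine Theorem~\ref{thm:PTInv2EB} with the Schmidt number iteration technique of Theorem~\ref{thm:SNIteration}, exactly as was done for dimension $d=3$ in Corollary~\ref{thm:PPT2Qutrit}. First I would observe that by Theorem~\ref{thm:PTInv2EB} each composition $S_i\circ T_i:\M_4\ra\M_4$ is a completely positive map that is $2$-entanglement breaking (here completely positive is clear since both factors are completely positive, and $2$-entanglement breaking is the content of Theorem~\ref{thm:PTInv2EB}). Thus we have three completely positive maps $U_i := S_i\circ T_i:\M_4\ra\M_4$, for $i\in\{1,2,3\}$, each of which is $2$-entanglement breaking.

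Next I would apply Theorem~\ref{thm:SNIteration} with $d=4$ and $n=2$. The theorem requires the maps $U_i$ to be $n$-entanglement breaking for $i\in\{1,\ldots,\lceil\frac{d-1}{n-1}\rceil\}$; with $d=4$ and $n=2$ we have $\lceil\frac{3}{1}\rceil = 3$, so exactly three such maps are needed, which is precisely what we have. The conclusion of Theorem~\ref{thm:SNIteration} is then that the composition $U_3\circ U_2\circ U_1$ is entanglement breaking. Unwinding the definition $U_i = S_i\circ T_i$, this composition is exactly $S_3\circ T_3\circ S_2\circ T_2\circ S_1\circ T_1$, which finishes the proof.

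The only point requiring a moment's care — and really the only ``obstacle,'' such as it is — is verifying that the hypotheses line up: that $S_i\circ T_i$ genuinely falls under Theorem~\ref{thm:PTInv2EB} (it does, since $T_i$ is completely positive and completely copositive and $S_i$ is completely positive with $\vartheta_4\circ S_i = S_i$ or $S_i\circ\vartheta_4 = S_i$, matching the hypotheses of that theorem verbatim), and that the counting $\lceil(d-1)/(n-1)\rceil = 3$ for $d=4$, $n=2$ matches the three supplied pairs. Beyond bookkeeping there is nothing deep here: the corollary is a direct packaging of the two preceding results, with the genuine work having been done in Theorem~\ref{thm:PTInv2EB} (via Lemma~\ref{lem:KrausRankBound} and the Woronowicz characterization of separability in $\M_2\otimes\M_3$) and in the Schmidt number iteration argument. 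Accordingly I would keep the proof to a couple of sentences, simply citing Theorem~\ref{thm:PTInv2EB} to get $2$-entanglement breaking of each $S_i\circ T_i$ and then invoking Theorem~\ref{thm:SNIteration}.
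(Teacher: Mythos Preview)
Your proposal is correct and matches the paper's proof essentially verbatim: the paper simply notes that each $S_i\circ T_i$ is $2$-entanglement breaking by Theorem~\ref{thm:PTInv2EB} and then invokes the Schmidt number iteration of Theorem~\ref{thm:SNIteration} to conclude.
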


\begin{proof}
For all $i\in\lset 1,2,3\rset$ the map $S_i\circ T_i$ is $2$-entanglement breaking by Theorem \ref{thm:PTInv2EB}. Now applying the Schmidt number iteration technique from Theorem \ref{thm:SNIteration} finishes the proof.
\end{proof}

It should be noted that Theorem \ref{thm:PTInv2EB} holds more generally with the same proof for completely positive $S:\M_{4}\ra \M_{4}$ that are $3$-partially entanglement breaking (see~\cite{chruscinski2006partially}), i.e.~such that the Choi matrix satisfies $\text{SN}\lb C_S\rb\leq 3$. This would include all linear maps that are both completely positive maps and completely copositive if Conjecture \ref{conj:MaxSNPPT} were true in dimension $d=4$. 

\subsection{Examples close to the completely depolarizing map}

In this section we will characterize a class of $2$-entanglement breaking maps by distance to the completely depolarizing map $X\mapsto \text{Tr}\lbr X\rbr\one_d$. Our result has some similarity to the results of Gurvits and Barnum (see~\cite{gurvits2002largest}) on balls containing only separable states around the maximally mixed state, and it should be noted that our proof relies on this in an indirect way.  

\begin{thm}
Any positive map $P:\M_d\ra \M_d$ satisfying
\[
\| P(X) - \text{Tr}\lbr X\rbr\one_d\|_\infty\leq \frac{1}{2}\| X\|_\infty,
\]
for any $X\in\M_d$, is $2$-entanglement breaking, and in particular $2$-positive.
\end{thm}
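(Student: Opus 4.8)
The plan is to reduce the statement, via Choi matrices, to a perturbative question about maps out of $\M_2$, and then to invoke a Gurvits--Barnum type separable ball. First, it suffices to prove that $(\ident_2\otimes P)(X)$ is separable for every $X\in(\M_2\otimes\M_d)^+$: separability implies positivity, so this simultaneously yields the ``in particular $2$-positive'' clause, which is exactly what Definition~\ref{defn:kEB} (with $n=k=2$) additionally demands. Decomposing $X=\sum_k\proj{\psi_k}{\psi_k}$ with $\ket{\psi_k}\in\C^2\otimes\C^d$ and using Lemma~\ref{lem:YAVD} to write $\ket{\psi_k}=(\one_2\otimes A_k)\ket{\Omega_2}$ with $A_k:\C^2\to\C^d$, one gets $(\ident_2\otimes P)(\proj{\psi_k}{\psi_k})=(\ident_2\otimes(P\circ\text{Ad}_{A_k}))(\omega_2)=C_{P\circ\text{Ad}_{A_k}}$. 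Since sums of separable matrices are separable, it remains to show that $C_{P\circ\text{Ad}_A}$ is separable for every $A:\C^2\to\C^d$.

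Next I would reduce to the case that $A=V$ is an isometry. Factoring $A=VB$ through its singular value decomposition, with $V:\C^2\to\C^d$ an isometry and $B\in\M_2$, gives $P\circ\text{Ad}_A=(P\circ\text{Ad}_V)\circ\text{Ad}_B$, and moving the operator on the input leg over to the reference leg via Lemma~\ref{Lemma:tricks}(1) yields $C_{P\circ\text{Ad}_A}=(B'\otimes\one_d)\,C_{P\circ\text{Ad}_V}\,(B'^\dagger\otimes\one_d)$ for a suitable $B'\in\M_2$. Conjugation by $B'\otimes\one_d$ preserves separability, so it is enough to handle $\Phi:=P\circ\text{Ad}_V$ for $V$ an isometry. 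Crucially, since $V$ is an isometry the hypothesis on $P$ passes to $\Phi$ in the clean form $\|\Phi(Y)-\tr[Y]\one_d\|_\infty\le\tfrac12\|Y\|_\infty$, and $\Phi$ is still positive.

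Now I would read $C_\Phi$ as a perturbation of the maximally mixed state. The completely depolarizing map $\Delta_2:Y\mapsto\tr[Y]\one_d$ from $\M_2$ to $\M_d$ has Choi matrix $C_{\Delta_2}=\one_{2d}$, a separable product matrix deep inside the separable cone of $\C^2\otimes\C^d$. Writing $C_\Phi=\one_{2d}+\Xi$ with $\Xi=(\ident_2\otimes(\Phi-\Delta_2))(\omega_2)$, one bounds $\|\Xi\|_\infty$ by combining $\|\Phi-\Delta_2\|_{\infty\to\infty}\le\tfrac12$ with the elementary estimate $\|(\ident_n\otimes L)(\omega_n)\|_\infty\le n\|L\|_{\infty\to\infty}$ (expand $\omega_n=\proj{\Omega_n}{\Omega_n}$ and apply Cauchy--Schwarz), and one controls $\tr[C_\Phi]=\tr[\Phi(\one_2)]\in[\tfrac32 d,\tfrac52 d]$ directly from the hypothesis. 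Together these place the normalized state $C_\Phi/\tr[C_\Phi]$ within operator norm $O(1/d)$ of $\one_{2d}/(2d)$, inside the Gurvits--Barnum separable ball of $\C^2\otimes\C^d$ (whose operator-norm radius likewise scales like $1/d$; see~\cite{gurvits2002largest}), which certifies separability of $C_\Phi$ and finishes the proof.

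The main obstacle is quantitative rather than structural: one has to make the threshold $\tfrac12$ actually suffice. The factor-$2$ loss in $\|(\ident_2\otimes L)(\omega_2)\|_\infty\le 2\|L\|_{\infty\to\infty}$ is tight in general (take $L=\ident_2$), so matching it against the exact radius of the largest operator-norm separable ball around $\one_{2d}/(2d)$ leaves essentially no room to spare. Closing this gap needs either a sharper estimate for $\Xi$ --- note that $\|\Xi\|_\infty\le1$ already forces $C_\Phi=\one_{2d}+\Xi\ge0$, so $\Phi$ is automatically completely positive, and one should exploit more of this structure --- or a sufficiently sharp form of the Gurvits--Barnum ball, possibly one tailored to the fact that one of the two tensor factors here is only a qubit. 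This balancing of constants, not the chain of reductions, is where the real work lies.
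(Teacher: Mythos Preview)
Your chain of reductions is sound: pulling back to pure states, passing to the Choi matrix $C_{P\circ\text{Ad}_A}$, and then using the singular value decomposition to reduce to an isometry $V$ so that $\Phi:=P\circ\text{Ad}_V:\M_2\to\M_d$ inherits the hypothesis $\|\Phi(Y)-\tr[Y]\one_d\|_\infty\le\tfrac12\|Y\|_\infty$ is all correct, and it genuinely simplifies the picture.

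The gap is exactly where you flag it, and it is fatal for the Gurvits--Barnum route. Your estimate $\|\Xi\|_\infty\le 2\cdot\tfrac12=1$ cannot be improved in general (you note the factor~$2$ is tight), and the conclusion ``$C_\Phi=\one_{2d}+\Xi$ with $\|\Xi\|_\infty\le 1$'' is strictly weaker than separability: take $d=2$ and $\Xi=\omega_2-\one_4$, which has $\|\Xi\|_\infty=1$ but $\one_4+\Xi=\omega_2$ is entangled. (The map behind this would be $\Phi=\ident_2$, which of course violates the theorem's hypothesis---so the hypothesis carries real information that your operator-norm bound on $\Xi$ has thrown away.) No version of the Gurvits--Barnum ball will rescue this, because the counterexample sits exactly on the boundary $\|\Xi\|_\infty=1$.

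The paper does not use a separable ball at all. It works directly with an arbitrary $Z\in(\M_2\otimes\M_d)^+$ in block form, normalises so that the $\M_2$-marginal is $\one_2$, and then invokes the $2\times d$ separability criterion of Johnston~\cite{johnston2013separability}: a block matrix $\bigl(\begin{smallmatrix}A & B\\ B^\dagger & C\end{smallmatrix}\bigr)\in(\M_2\otimes\M_d)^+$ is separable whenever $\|B\|_\infty^2\le\lambda_{\min}(A)\lambda_{\min}(C)$. The hypothesis then enters blockwise rather than through a single global norm bound on $\Xi$. In fact your own reduction meshes perfectly with this criterion: writing $C_\Phi$ in $2\times 2$ block form with blocks $\Phi(\proj{i}{j})$, the hypothesis gives $\lambda_{\min}\bigl[\Phi(\proj{i}{i})\bigr]\ge\tfrac12$ and $\|\Phi(\proj{0}{1})\|_\infty\le\tfrac12$, so Johnston's inequality reads $\tfrac14\le\tfrac12\cdot\tfrac12$ and you are done. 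The paper's extra algebra (the estimate $\|X\|_\infty^2\le\|\rho\|_\infty\|\sigma\|_\infty$ from block positivity, and the elementary inequality $\tfrac14 ab\le(1-\tfrac{a}{2})(1-\tfrac{b}{2})$ for $a,b\in[0,1]$) is what is needed to treat general $Z$ without first reducing to the isometric/pure case; your reduction would let you skip that part. So the missing idea is precisely the ``$2\times d$-tailored'' criterion you were hoping for---it exists, and it is Johnston's lemma, not Gurvits--Barnum.
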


\begin{proof}
Any positive matrix $Z\in(\M_2\otimes \M_d)^+$ can be written in block-form as
\[
Z = \begin{pmatrix} A & B\\ B^\dagger & C\end{pmatrix},
\] 
with $A,B,C\in\M_d$. For showing that $(\ident_2\otimes P)(Z)$ is separable, we can assume without loss of generality that the marginal $M=(\ident_2\otimes \text{Tr})(Z)\in \M^+_2$ is invertible (we can otherwise consider $Z+\epsilon\one_2\otimes \one_d$ for $\epsilon\ra 0$). We can then define the positive matrix 
\begin{equation}
\begin{pmatrix} \rho & X\\ X^\dagger & \sigma\end{pmatrix} = (M^{-1/2}\otimes \one_d)Z(M^{-1/2}\otimes \one_d),
\label{equ:normalizedMat}
\end{equation}
for $\rho,\sigma\geq 0$ satisfying $\text{Tr}\lbr \rho\rbr =\text{Tr}\lbr \sigma\rbr=1$, and $\text{Tr}\lbr X\rbr=0$. Clearly, $(\ident_2\otimes P)(Z)$ is separable if and only if 
\[
(M^{-1/2}\otimes \one_d)(\ident_2\otimes P)(Z)(M^{-1/2}\otimes \one_d) = \begin{pmatrix} P(\rho) &  P(X)\\ P(X)^\dagger & P(\sigma)\end{pmatrix}
\]
is separable. By \cite[Lemma 1]{johnston2013separability} it is sufficient to show that 
\begin{equation}
\| P(X)\|^2_\infty\leq \lambda_{\min}\lbr P(\rho)\rbr\lambda_{\min}\lbr P(\sigma)\rbr,
\label{equ:Goal}
\end{equation}
where $\lambda_{\min}\lbr\cdot\rbr$ denotes the minimal eigenvalue. For this, note that positivity of \eqref{equ:normalizedMat} implies $\text{Ran}\lb X^\dagger\rb\subset \text{Ran}\lb \sigma\rb = \text{supp}\lb \sigma\rb$ and
\[
\rho \geq X \sigma^{-1} X^\dagger,
\]
where $\sigma^{-1}$ denotes the Moore-Penrose pseudoinverse of $\sigma$. Using that $\sigma^{-1}\geq \frac{1}{\|\sigma\|_{\infty}} Q$ for a projection $Q$ onto the support of $\sigma$, we obtain
\[
\|\rho\|_\infty \|\sigma\|_\infty\geq \|X\|^2_\infty.
\] 
Furthermore, since $\rho\geq 0$ with $\text{Tr}\lbr \rho\rbr =1$ the assumptions of the theorem imply
\[
1-\lambda_{\min}\lbr P(\rho)\rbr\leq \|P(\rho) - \one_d \|_\infty \leq \frac{1}{2}\|\rho\|_\infty
\] 
and the same holds for $\sigma$. Combining the previous estimates, and using the assumptions of the theorem again with $\text{Tr}\lbr X\rbr = 0$ implies
\begin{align*}
\|P(X)\|^2_\infty &\leq \frac{1}{4}\|X\|^2_\infty \leq \frac{1}{4}\|\rho\|_\infty \|\sigma\|_\infty \\
&\leq \lb 1- \frac{1}{2}\|\rho\|_\infty\rb\lb 1- \frac{1}{2}\|\sigma\|_\infty\rb\leq \lambda_{\min}\lbr P(\rho)\rbr\lambda_{\min}\lbr P(\sigma)\rbr.
\end{align*}
Here the third inequality follows from $\rho,\sigma\geq 0$ with $\text{Tr}\lbr \rho\rbr =\text{Tr}\lbr \sigma\rbr=1$. This proves \eqref{equ:Goal} and finishes the proof.

\end{proof}

The previous theorem can be used to show that completely positive maps can be $2$-entanglement breaking without being completely copositive: For $p\in\lbr -1,1\rbr$ we denote by $W_p:\M_d\ra\M_d$ the (unnormalized) Holevo-Werner map given by 
\begin{equation}
W_p(X) = \text{Tr}\lbr X\rbr\one_d - pX^T
\label{equ:HWMaps}
\end{equation}
for any $X\in\M_d$. It is well-known that $W_p$ is entanglement breaking if and only if it is completely copositive which is the case for $p\in\lbr 1/d, 1\rbr$. It is clear that Holevo-Werner map $W_p$ cannot be $2$-entanglement breaking for $p>1/2$, because its restriction to $\M_2\oplus 0_{d-2}$ (containing $W_{p}:\M_2\ra\M_2$) is not entanglement breaking. However, using the previous theorem we easily obtain:

\begin{cor}
For $d\in\N$ the Holevo-Werner map $W_{p}:\M_d\ra\M_d$ is $2$-entanglement breaking if and only if $p\in\lbr -1,1/2\rbr$.
\end{cor}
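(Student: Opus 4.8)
I would establish the two implications separately, splitting the ``if'' direction further according to the size of $p$.

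\emph{Necessity of $p\le 1/2$.} This completes the remark made immediately before the statement. Write $V:\C^2\ra\C^d$ for the isometry onto $\text{span}\lset\ket{1},\ket{2}\rset$. A direct computation in the computational basis gives $\text{Ad}_{V^\dagger}\circ W_p\circ\text{Ad}_V=W_p^{(2)}$, the Holevo--Werner map on $\M_2$, and hence for every $Z\in(\M_2\otimes\M_2)^+$ one has $(\ident_2\otimes\text{Ad}_{V^\dagger})\big((\ident_2\otimes W_p)((\ident_2\otimes\text{Ad}_V)(Z))\big)=(\ident_2\otimes W_p^{(2)})(Z)$. Since $(\ident_2\otimes\text{Ad}_V)$ and $(\ident_2\otimes\text{Ad}_{V^\dagger})$ are local (completely positive) operations and hence preserve separability, $2$-entanglement breaking of $W_p:\M_d\ra\M_d$ would force $W_p^{(2)}:\M_2\ra\M_2$ to be $2$-entanglement breaking, hence (input dimension $2$) entanglement breaking. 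But the Choi matrix of $W_p^{(2)}$ is $\one_2\otimes\one_2-pF$ with $F$ the swap operator, and its partial transpose equals $\one_2\otimes\one_2-p\,\omega_2$, which has the eigenvalue $1-2p<0$ as soon as $p>1/2$; a matrix with non-positive partial transpose is never separable, a contradiction.

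\emph{Sufficiency for $p\in[-1,1/2]$.} For $|p|\le 1/2$ I would invoke the preceding theorem directly: $W_p$ is positive for $p\in[-1,1]$, and $W_p(X)-\text{Tr}\lbr X\rbr\one_d=-pX^T$, so $\|W_p(X)-\text{Tr}\lbr X\rbr\one_d\|_\infty=|p|\,\|X\|_\infty\le\tfrac12\|X\|_\infty$ for all $X\in\M_d$, which gives that $W_p$ is $2$-entanglement breaking. For $p\in[-1,-1/2)$ the theorem does not apply as stated, so I would re-run its proof with $P=W_p$; nothing in that proof before the final invocation of \cite[Lemma 1]{johnston2013separability} uses the distance hypothesis on $P$, so it suffices to verify, for the normalized block matrix $\begin{pmatrix}\rho&X\\ X^\dagger&\sigma\end{pmatrix}\ge 0$ with $\text{Tr}\lbr\rho\rbr=\text{Tr}\lbr\sigma\rbr=1$ and $\text{Tr}\lbr X\rbr=0$, the inequality $\|W_p(X)\|_\infty^2\le\lambda_{\min}(W_p(\rho))\,\lambda_{\min}(W_p(\sigma))$. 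Since $\text{Tr}\lbr X\rbr=0$ we have $W_p(X)=-pX^T$, and the bound $\|X\|_\infty^2\le\|\rho\|_\infty\|\sigma\|_\infty$ extracted from positivity in that proof, together with $\|\rho\|_\infty,\|\sigma\|_\infty\le 1$, gives $\|W_p(X)\|_\infty^2=p^2\|X\|_\infty^2\le p^2\le 1$; on the other hand $p<0$ forces $W_p(\rho)=\one_d-p\rho^T\ge\one_d$, so $\lambda_{\min}(W_p(\rho))\ge 1$ and likewise $\lambda_{\min}(W_p(\sigma))\ge 1$, and the bound holds. (Alternatively, for $p\in[-1,0]$ one can observe that $C_{W_p}=\one_d\otimes\one_d-pF=(1-|p|)\,\one_d\otimes\one_d+2|p|\,P_{\mathrm{sym}}$, with $P_{\mathrm{sym}}$ the symmetric-subspace projector, is a positive combination of separable matrices, so $W_p$ is there even entanglement breaking.)

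\emph{Where the work is.} There is no real obstacle once the distance-to-the-depolarizing-map theorem is available; the only point requiring thought is that this theorem reaches only $|p|\le 1/2$, so the stretch $p\in[-1,-1/2)$ must be handled by hand. There the transpose term of $W_p$ is an asset rather than a liability---it can only raise the least eigenvalue of $W_p$ evaluated on states, while leaving the off-diagonal block norm-bounded by $1$---so the Johnston separability criterion is satisfied with room to spare.
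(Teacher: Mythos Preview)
Your argument is correct and follows the paper's intended approach: necessity via the qubit restriction (exactly the remark preceding the corollary), sufficiency for $|p|\le 1/2$ directly from the distance-to-depolarizing theorem. You are in fact more careful than the paper, which writes only ``using the previous theorem we easily obtain'' without noting that the theorem's hypothesis $\|W_p(X)-\text{Tr}[X]\one_d\|_\infty\le\tfrac12\|X\|_\infty$ only reaches $|p|\le 1/2$; your explicit treatment of $p\in[-1,-1/2)$---either by re-running the Johnston-criterion step with the sign of $p$ working in your favor, or by observing that $C_{W_p}=(1-|p|)\one+2|p|P_{\mathrm{sym}}$ is separable so $W_p$ is already entanglement breaking there---cleanly fills this gap. (The paper's stated entanglement-breaking range $[1/d,1]$ appears to be a typo for $[-1,1/d]$, which is what your second alternative uses.)
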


While any entanglement breaking map is necessarily completely copositive and completely positive, the previous corollary shows that some completely positive $2$-entanglement breaking maps are not completely copositive. It should be noted that applying the Schmidt number iteration to Holevo-Werner maps is not interesting, since it is easily verified that $W_{p}\circ W_p$ is already entanglement breaking for any $p\in\lbr -1,1\rbr$. Moreover, using \cite[Theorem 7]{lami2016bipartite} it can be shown that for any $p\in\lbr-1, \sqrt{3}-1\rbr$ the output $(W_{p}\otimes W_{p})\lb X\rb$ is separable for any positive matrix $X\in (\M_d\otimes \M_d)^+$. This shows that for such $p$ even the map $W_p\circ S\circ W_p$ is entanglement breaking for any completely positive map $S:\M_d\ra\M_d$.

\subsection{$2$-entanglement breaking maps from rank}

In this section we will study how $2$-positive maps with low rank (as linear maps) are $2$-entanglement breaking. This will be based on the surprising separability criteria by Cariello~\cite{cariello2014separability,cariello2015does} for positive matrices of low operator Schmidt rank. We will start by defining these terms:

\begin{defn}[Operator Schmidt rank]
Given a bipartite matrix $Y\in \M_{d_1}\otimes \M_{d_2}$ we define its operator Schmidt rank as the unique number $\mathcal{R}(Y)\in\N$ such that 
\[
Y = \sum^{\mathcal{R}\lb Y\rb}_{i=1} A_i\otimes B_i,
\]
with sets of non-zero, mutually orthogonal\footnote{with respect to the Hilbert-Schmidt inner product.} operators $\lset A_i\rset^{\mathcal{R}\lb Y\rb}_{i=1}\subset \M_{d_1}$ and $\lset B_i\rset^{\mathcal{R}\lb Y\rb}_{i=1}\subset\M_{d_2}$. 
\label{defn:OSRank}
\end{defn}

It turns out that the operator Schmidt rank of a Choi matrix coincides with the rank of the linear map it corresponds to. For completeness we prove this fact in Appendix \ref{sec:Appendix2}. Note that for Hermitian matrices $Y$ the operators $\lset A_i\rset_i$ and $\lset B_i\rset_i$ in the operator Schmidt decomposition can be chosen Hermitian as well (see~\cite[Theorem 1.30]{cariello2014separability}). With this we can show:

\begin{thm}
Any $2$-positive map $P:\M_{d_1}\ra\M_{d_2}$ with $\text{rk}\lb P\rb\leq 3$ is $2$-entanglement breaking.
\label{thm:2EBfromRank}
\end{thm}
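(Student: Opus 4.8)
The plan is to reduce the statement to a separability criterion for positive semidefinite matrices of small operator Schmidt rank on $\C^2\otimes\C^{d_2}$, and to apply Cariello's results after verifying that the matrices that come up have positive partial transpose. For the reduction, fix $X\in(\M_2\otimes\M_{d_1})^+$ and set $Y:=(\ident_2\otimes P)(X)$; since $P$ is $2$-positive, $Y\in(\M_2\otimes\M_{d_2})^+$. Writing $X=\sum_{a,b=1}^{2}\proj ab\otimes X_{ab}$ with $X_{ab}\in\M_{d_1}$ gives $Y=\sum_{a,b}\proj ab\otimes P(X_{ab})$, and every block $P(X_{ab})$ lies in the image of $P$, a space of dimension $\text{rk}(P)\le 3$. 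Expanding the blocks in a basis $G_1,\dots,G_r$ ($r\le 3$) of that image yields $Y=\sum_{s=1}^{r}F_s\otimes G_s$ with $F_s\in\M_2$, so $\mathcal R(Y)\le 3$. Hence it suffices to show that every $Y\in(\M_2\otimes\M_{d_2})^+$ with $\mathcal R(Y)\le 3$ is separable.

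First I would show that such a $Y$ has positive partial transpose. If $\mathcal R(Y)\le 2$ this is already part of Cariello's criteria, so assume $\mathcal R(Y)=3$ and take a Hermitian operator Schmidt decomposition $Y=\sum_{s=1}^{3}F_s\otimes G_s$ (the factors may be chosen Hermitian because $Y$ is, cf.\ \cite[Thm.\ 1.30]{cariello2014separability}). Conjugating $Y$ by $U\otimes\one_{d_2}$ for $U\in\mathcal U_2$ preserves positivity and positivity of the partial transpose, so the three-dimensional span $\mathcal F:=\operatorname{span}\{F_s\}$ inside the Hermitian $2\times 2$ matrices may be normalised by a Bloch rotation. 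If $\one_2\in\mathcal F$, then its traceless part is at most two-dimensional and can be rotated inside $\operatorname{span}\{\sigma_x,\sigma_z\}$, so in the block form $Y=\bigl(\begin{smallmatrix}A&B\\B^{\dagger}&C\end{smallmatrix}\bigr)$ coming from the $\C^2$ factor the off-diagonal block $B$ is Hermitian and $(\vartheta_2\otimes\ident_{d_2})(Y)=Y\ge 0$. If $\one_2\notin\mathcal F$, then $\mathcal F^{\perp}=\operatorname{span}\{H\}$ with $\operatorname{Tr}(H)\neq 0$ (otherwise $\mathcal F$ consists of traceless matrices and $Y=0$), and a rotation brings $\mathcal F$ to $\operatorname{span}\{\sigma_x,\sigma_y,\one_2-h^{-1}\sigma_z\}$ for some $h\in\R\setminus\{0\}$; here $\mathcal R(Y)=3$ forces $|h|>1$, so the two diagonal blocks of $Y$ are both nonnegative multiples of one operator $N_3\ge 0$ while $B=N_1-iN_2$ with Hermitian $N_1,N_2$. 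Writing $D:=N_3^{-1/2}BN_3^{-1/2}$ on the support of $N_3$, a Schur complement shows that $Y\ge 0$ is equivalent to $\|D\|^{2}\le 1-h^{-2}$; since the partial transpose merely exchanges $B$ and $B^{\dagger}$, hence $D$ and $D^{\dagger}$, and $\|D^{\dagger}\|=\|D\|$, this gives $(\vartheta_2\otimes\ident_{d_2})(Y)\ge 0$ as well. (When $N_3$ is singular one restricts to its support, or perturbs and uses that the separable and the PPT matrices form closed sets.)

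Finally, by Cariello's separability criteria for positive matrices of small operator Schmidt rank \cite{cariello2014separability,cariello2015does}, a positive semidefinite matrix of operator Schmidt rank at most $3$ with positive partial transpose is separable. Combined with the two steps above, $Y=(\ident_2\otimes P)(X)$ is separable; as $X$ was an arbitrary element of $(\M_2\otimes\M_{d_1})^+$, the $2$-positive map $P$ is $2$-entanglement breaking. The cited results do the work of the last step and the reduction is pure bookkeeping, so I expect \textbf{the main obstacle to be the PPT claim}: organising the case split on $\mathcal F$ so that one always lands either on a Hermitian off-diagonal block or on proportional diagonal blocks, and handling carefully the degenerate configurations where the operator Schmidt rank drops or $N_3$ is not invertible. (If Cariello's results are available in the stronger form that a positive semidefinite matrix of operator Schmidt rank at most $3$ with a two-dimensional tensor factor is automatically separable, the PPT step becomes unnecessary.)
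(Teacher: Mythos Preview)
Your approach is essentially the same as the paper's: both reduce to the fact that $(\ident_2\otimes P)(X)$ has operator Schmidt rank at most $3$, and then invoke Cariello's separability criterion. The paper obtains the operator Schmidt rank bound via the Choi--Jamiolkowski isomorphism (writing $X=C_T$ for a completely positive $T:\M_2\to\M_{d_1}$, so that $(\ident_2\otimes P)(X)=C_{P\circ T}$ with $\text{rk}(P\circ T)\le 3$), whereas you argue directly from the block structure; these are equivalent.

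The only substantive difference is that the paper applies the stronger form you anticipate in your final parenthetical: Cariello's result \cite[Theorem~3.2]{cariello2015does} already states that any positive matrix in $(\M_2\otimes\M_{d})^+$ with operator Schmidt rank at most $3$ is separable, with no PPT hypothesis needed. Your intermediate PPT verification via the case split on $\mathcal{F}$ is therefore unnecessary (though it appears correct). Dropping it, your proof becomes identical to the paper's.
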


\begin{proof}
Given a positive matrix $X\in(\M_{2}\otimes \M_{d_1})^+$ we can use the Choi-Jamiolkowski isomorphism to find a completely positive map $T:\M_2\ra\M_{d_1}$ such that $X = C_T$. Now, we have 
\[
(\ident_2\otimes P)(X) = C_{P\circ T}\geq 0,
\]
since $P$ is $2$-positive. As $\text{rk}\lb P\rb\leq 3$ we also have $\text{rk}\lb P\circ T\rb\leq 3$, and by Lemma \ref{lem:rkOSrank} this implies 
\[
\mathcal{R}\lb (\ident_2\otimes P)(X)\rb \leq 3.
\]
By \cite[Theorem 3.2.]{cariello2015does} this implies that $(\ident_2\otimes P)(X)\in(\M_2\otimes \M_{d_2})^+$ is separable. This finishes the proof. 
\end{proof}

Using the Choi-Jamiolkowski isomorphism and Lemma \ref{lem:SNtrimmingFrom2EB} the previous theorem implies the following corollary:

\begin{cor}
Any positive bipartite matrix $X\in(\M_{d_1}\otimes \M_{d_2})^+$ with operator Schmidt rank $\mathcal{R}\lb X\rb\leq 3$ satisfies $\text{SN}\lb X\rb\leq \min(d_1,d_2) - 1$.
\label{cor:Cariello}
\end{cor}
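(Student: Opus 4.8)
The plan is to reduce the statement about the Schmidt number $\text{SN}(X)$ of a bipartite positive matrix $X\in(\M_{d_1}\otimes\M_{d_2})^+$ of operator Schmidt rank at most $3$ to the preceding theorem about $2$-entanglement breaking maps, via the Choi--Jamiolkowski isomorphism. First I would observe that by that isomorphism there is a unique linear map $P:\M_{d_1}\ra\M_{d_2}$ with $C_P = X$ (after possibly relabelling which tensor factor is the ``input''; see below). Since $X\geq 0$, the map $P$ is completely positive, and in particular $2$-positive. By the result relating the operator Schmidt rank of a Choi matrix to the rank of the associated linear map (Lemma \ref{lem:rkOSrank} in Appendix \ref{sec:Appendix2}), the hypothesis $\mathcal{R}(X)\leq 3$ translates into $\text{rk}(P)\leq 3$. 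Hence Theorem \ref{thm:2EBfromRank} applies and $P$ is $2$-entanglement breaking.

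The next step is to feed this into Lemma \ref{lem:SNtrimmingFrom2EB} (Schmidt number trimming). With $k = \min(d_1,d_2)$, $n=2$, the lemma says that a $2$-entanglement breaking completely positive map $P:\M_{d_1}\ra\M_{d_2}$ satisfies
\[
\text{SN}\lb(\ident_l\otimes P)(Z)\rb\leq\max(l-1,1)
\]
for every $l\leq\min(k,d_2)$ and every $Z\in(\M_l\otimes\M_{d_1})^+$. I would apply this with $l$ equal to the input dimension $d_1$ (assuming for the moment $d_1\leq d_2$, so that $l=d_1\leq\min(k,d_2)$ holds), and $Z = \omega_{d_1}$ the unnormalized maximally entangled state. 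Then $(\ident_{d_1}\otimes P)(\omega_{d_1}) = C_P = X$, and the trimming lemma immediately gives $\text{SN}(X)\leq d_1 - 1 = \min(d_1,d_2)-1$, which is exactly the claim.

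The one point that needs care — and the only place the argument is not entirely mechanical — is the asymmetry between the two tensor factors, since operator Schmidt rank and $\text{SN}$ are symmetric under swapping $\M_{d_1}\otimes\M_{d_2}\cong\M_{d_2}\otimes\M_{d_1}$ but Theorem \ref{thm:2EBfromRank} and Lemma \ref{lem:SNtrimmingFrom2EB} are stated with a designated input and output. If $d_1\leq d_2$ one proceeds as above treating $\M_{d_1}$ as the input; if instead $d_2 < d_1$, one first applies the swap isomorphism (which changes neither $\mathcal{R}(X)$ nor $\text{SN}(X)$, the latter because the Schmidt rank of a vector is unchanged by swapping the factors) and then runs the identical argument with $\M_{d_2}$ in the input role, obtaining $\text{SN}(X)\leq d_2 - 1$. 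In both cases the bound is $\min(d_1,d_2)-1$. I expect this bookkeeping about which factor plays which role — together with checking that the dimension constraint $l\leq\min(k,d_2)$ in Lemma \ref{lem:SNtrimmingFrom2EB} is met when one takes $l$ to be the smaller dimension — to be the main (and essentially the only) obstacle; everything else is a direct chaining of Theorem \ref{thm:2EBfromRank}, Lemma \ref{lem:rkOSrank}, and Lemma \ref{lem:SNtrimmingFrom2EB}.
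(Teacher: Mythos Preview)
Your proposal is correct and follows exactly the route the paper indicates: pass from $X$ to the completely positive map $P$ with $C_P=X$ via Choi--Jamiolkowski, use Lemma~\ref{lem:rkOSrank} to convert $\mathcal{R}(X)\leq 3$ into $\text{rk}(P)\leq 3$, apply Theorem~\ref{thm:2EBfromRank} to get $2$-entanglement breaking, and then invoke Lemma~\ref{lem:SNtrimmingFrom2EB} with $l$ equal to the smaller dimension. Your handling of the $d_1\leq d_2$ versus $d_2<d_1$ asymmetry by swapping tensor factors is the right way to make the dimension constraint in Lemma~\ref{lem:SNtrimmingFrom2EB} work out, and is the only detail the paper leaves implicit.
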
 

To show that Theorem \ref{thm:2EBfromRank} is non-trivial we will construct an example of a $2$-positive map (actually our example will be a completely positive map) of rank equal to $3$ that is not entanglement breaking. For this consider first the quantum states 
\[
\rho_1 = \frac{1}{6}\begin{pmatrix} 2 & 1 & 0\\ 1 & 2 & 1\\ 0 & 1 & 2\end{pmatrix}, \quad\rho_2 = \frac{1}{6}\begin{pmatrix} 2 & 1 & 0\\ 1 & 2 & -i\\ 0 & i & 2\end{pmatrix}.
\]
Next, we consider the Hermitian matrices
\begin{align*}
H_0 &= \begin{pmatrix} 2.4 & -5.3 & 0\\ -5.3 & 26.7 & 0\\ 0 & 0 & 28.8\end{pmatrix},\\
 H_1 &= \begin{pmatrix} 10.6 & -25+3.2i & 44+33.4i\\ -25-3.2i & 54.6 & -174.4-146.2i\\ 44-33.4i & -174.4+146.2i & 44\end{pmatrix},\\
 H_2 &= \begin{pmatrix} 10.6 & -25-3.2i & -33.4-44i\\ -25+3.2i & 54.6 & 146.2+174.4i\\ -33.4+44i & 146.2-174.4i & 44\end{pmatrix}.
\end{align*}
Finally, we define a linear map $P:\M_3\ra\M_3$ by its Choi matrix
\[
C_P = H_0\otimes \one_3 + \rho_1\otimes H_1 + \rho_2\otimes H_2.
\] 
It can be easily verified that $C_P$ is a positive matrix, and therefore $P$ is a completely positive map. By Lemma \ref{lem:rkOSrank} we have $\text{rk}\lb P\rb\leq 3$, and Theorem \ref{thm:2EBfromRank} implies that $P$ is $2$-entanglement breaking. However, it can be checked that $\vartheta_3\circ P$ is \emph{not} completely positive (since $C_P$ is not PPT). This shows that $P$ is not entanglement breaking.

The previous example comes from \cite[Proposition 6]{heinosaari2012extending} where it is shown that there does not exists any completely positive and trace-preserving map $T:\M_3\ra\M_3$ such that $T(\rho_1)=\rho^T_1$ and $T(\rho_2)=\rho^T_2$. The matrix $C_P$ above is a certificate created by an SDP for this fact. Using the techniques from \cite{heinosaari2012extending} (in particular the SDPs for checking whether a completely positive extension exists) it is possible to construct more examples of this kind.

\subsection{Full entanglement annihilation implies $2$-entanglement breaking}

A completely positive map $T:\M_{d_1}\ra \M_{d_2}$ is called $\infty$-locally entanglement annihilating ($\infty$-LEA) if for any $n\in\N$ and any $X\in (\M_{d_1}^{\otimes n})^+$ there exists a $k\in\N$ and $\lset (Y^i_1,\ldots ,Y^i_n) \rset^k_{i=1}\subset (\M_{d_1}^+)^n$ such that $T^{\otimes n}(X) = \sum^k_{i=1} Y^i_1\otimes Y^i_2\otimes \cdots \otimes Y^i_n$. It is an open problem~\cite{moravvcikova2010entanglement} whether there exist completely positive maps that are $\infty$-LEA but \emph{not} entanglement breaking.  

We need the following lemma:

\begin{lem}[Qubit $\infty$-entanglement annihilating maps]
If the linear map $T:\M_{d_1}\ra \M_{d_2}$ is completely positive and $\infty$-LEA and $d_1=2$ or $d_2=2$, then $T$ is entanglement breaking. 
\label{lem:QbitLEA}
\end{lem}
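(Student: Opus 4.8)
The plan is to exploit the low output or input dimension together with the $\infty$-LEA property applied to two tensor copies. Suppose first $d_2 = 2$ (the case $d_1 = 2$ will follow by a duality argument using Lemma \ref{Lemma:tricks}). Fix a positive matrix $X \in (\M_{d_1} \otimes \M_{d_1})^+$ and consider $(T \otimes T)(X) \in (\M_2 \otimes \M_2)^+$. Since $T$ is $\infty$-LEA, this matrix is separable. The key observation is that separability of \emph{all} outputs $(T\otimes T)(X)$ for arbitrary bipartite inputs $X$ should already force $T$ itself to be entanglement breaking. Indeed, taking $X = C_S$ for an arbitrary completely positive $S:\M_{d_1}\to\M_{d_1}$, we get that $(T\otimes T)(C_S) = C_{T\circ S\circ (\vartheta_{d_1}\circ T^*\circ\vartheta_2)}$ is separable, i.e.\ the composed map $T\circ S\circ(\vartheta_{d_1}\circ T^*\circ\vartheta_2)$ is entanglement breaking for every such $S$.

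First I would reduce to showing that $T$ breaks entanglement with a qubit ancilla, i.e.\ that $T$ is $2$-entanglement breaking: since $d_2 = 2$, a $2$-entanglement breaking map $T:\M_{d_1}\to\M_2$ is automatically entanglement breaking once we also know it is $d_1$-entanglement breaking, but in fact for output dimension $2$ it suffices to control the $\min(d_1, 2) = 2$-ancilla case — here one should invoke that a completely positive map into $\M_2$ is entanglement breaking iff its Choi matrix in $\M_{d_1}\otimes\M_2$ is separable, and separability of a state on $\C^{d_1}\otimes\C^2$ is equivalent to its being PPT when... no: that equivalence needs the $2\otimes 2$ or $2\otimes 3$ dimension. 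So instead the cleaner route is: take $X$ in \eqref{equ:SNTrim}-style form. Concretely, I would argue that because $(T\otimes T)(C_S)$ is separable for all completely positive $S$, choosing $S = \ident_{d_1}$ gives that $(T\otimes T)(\omega_{d_1}) = C_{T}\otimes$-type object... Let me restate: the most direct path is to pick $S$ so that $T\circ S$ realizes an arbitrary completely positive map $R:\M_{d_1}\to\M_{d_1}$ composed with $T$, but this is not possible in general.

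The honest plan: apply the $\infty$-LEA hypothesis with $n = 2$ to the specific input $X = (\ident_{d_1}\otimes\Phi)(\omega_{d_1})$ where $\Phi$ ranges over completely positive maps, equivalently to \emph{all} $X\in(\M_{d_1}\otimes\M_{d_1})^+$. Separability of $(T\otimes T)(X)$ for all $X$ means the map $T\otimes T$ is entanglement annihilating in the two-copy sense, which by \cite[results on LEA]{moravvcikova2010entanglement} combined with the structure theory is already quite restrictive. For the qubit case specifically, I would use: if $d_2 = 2$ and $T$ is not entanglement breaking, then $C_T \in \M_{d_1}\otimes\M_2$ is a non-separable PPT-or-non-PPT state; pick a second copy and a maximally entangled input on the $d_1$-side to detect the entanglement of $C_T$ through $(T\otimes T)(\omega_{d_1}\otimes(\text{something}))$, contradicting separability. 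The main obstacle — and the step I expect to require the real work — is converting ``$(T\otimes T)(X)$ separable for all bipartite $X$'' into ``$(\ident_{d_1}\otimes T)(Y)$ separable for all $Y$'': the tensor-square output lives on $\C^2\otimes\C^2$ where separability is easy to characterize (PPT), but pushing this back to a statement about a single application of $T$ with a $d_1$-dimensional ancilla requires using a reference system cleverly, likely by embedding a purification of the ancilla into the second tensor factor and using that for qubit output the relevant entanglement is always ``detectable'' by a suitable product-like test. I would handle the $d_1 = 2$ case by the transpose trick of Lemma \ref{Lemma:tricks}, noting $T$ is entanglement breaking iff $\vartheta_{d_2}\circ T^*\circ\vartheta_2$ is, and that the $\infty$-LEA property is preserved (or dualizes appropriately) under this operation.
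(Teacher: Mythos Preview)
Your proposal has a genuine gap: you only ever invoke the $\infty$-LEA hypothesis for $n=2$, i.e.\ you try to deduce that $T$ is entanglement breaking from the two-copy statement ``$(T\otimes T)(X)$ is separable for all $X$''. But $2$-local entanglement annihilation is known \emph{not} to imply entanglement breaking in general (see~\cite{filippov2012local}), and nothing in your sketch isolates a feature of the qubit case that would rescue the implication. Your own ``main obstacle'' paragraph is exactly this missing step, and the embedding-a-purification idea you gesture at does not obviously give it: having $(T\otimes T)(X)\in\M_2\otimes\M_2$ separable only yields information about compositions of the form $T\circ S\circ(\vartheta\circ T^*\circ\vartheta)$, not about $\ident\otimes T$ on a $d_1$-dimensional ancilla. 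Your duality reduction for the $d_1=2$ case is also unjustified: the $\infty$-LEA property is not obviously preserved under $T\mapsto\vartheta\circ T^*\circ\vartheta$.

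The paper's proof takes a completely different route that genuinely uses \emph{all} tensor powers. Assuming $T$ is not entanglement breaking, one picks (via the Horodecki criterion) a positive map $P$ such that $P\circ T$ is not completely positive, and then builds
\[
Q = (P\circ T)\otimes\proj{0}{0} + (\vartheta_2\circ P\circ T)\otimes\proj{1}{1},
\]
which is neither completely positive nor completely copositive. Because $T$ is $\infty$-LEA, every tensor power $T^{\otimes k}$ sends positive inputs to fully separable outputs, and composing with the positive map $P^{\otimes k}$ (and the classical flags) shows that $Q^{\otimes k}$ is positive for every $k$; thus $Q$ is \emph{tensor-stable positive}. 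One then invokes the result of \cite{muller2016positivity} that a tensor-stable positive map with qubit input or output must be completely positive or completely copositive --- a contradiction. The $d_2=2$ case is handled symmetrically by composing on the other side, with no need for a duality argument on the $\infty$-LEA property itself.
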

\begin{proof}
Consider first the case where $d_1=2$ and $d_2=d\geq 2$. If $T$ is not entanglement breaking, then there exists a positive map $P:\M_d\ra\M_2$ such that the composition $P\circ T:\M_2\ra \M_2$ is not completely positive. Now consider the map $Q:\M_2\ra \M_{4}$ given by 
\[
Q = (P\circ T)\otimes \proj{0}{0} + (\vartheta_2\circ P\circ T)\otimes \proj{1}{1},
\]
where $\proj{i}{i}$ for $i=1,2$ denote the computational basis of $\C^2$. It is easy to see, that $Q$ is neither completely positive nor completely copositive. Since $T$ is $\infty$-LEA the map $Q^{\otimes k}$ is positive for all $k\in\N$. This property is called tensor-stable positivity and it has been shown in \cite{muller2016positivity} that any such map has to be completely positive or completely copositive whenever the input or output dimension is $2$. This is a contradiction. 

The case $d_1=d\geq 2$ and $d_2=2$ works analogously by considering a positive map $P:\M_2\ra\M_d$ and reversing the order of $P$ and $T$ in the above proof.

\end{proof}

With the previous lemma we can show the following:

\begin{thm}
If a completely positive map $T:\M_d\ra \M_d$ is $\infty$-LEA, then it is $2$-entanglement breaking.
\end{thm}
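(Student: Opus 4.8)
The plan is to reduce the claim to the qubit case via a localization argument, exactly mirroring the structure of the proofs of Theorem~\ref{thm:PTInv2EB} and Theorem~\ref{thm:2EBfromRank}: we want to show that $(\ident_2\otimes T)(Z)$ is separable for every $Z\in(\M_2\otimes\M_d)^+$, and the natural route is to realize the action of $T$ on the relevant two-dimensional block as the action of a map with a two-dimensional output, which is where Lemma~\ref{lem:QbitLEA} can be applied. So first I would take an arbitrary $Z\in(\M_2\otimes\M_d)^+$ and use the Choi--Jamiolkowski isomorphism to write $Z=C_R$ for a completely positive map $R:\M_2\ra\M_d$; then $(\ident_2\otimes T)(Z)=C_{T\circ R}$, and it suffices to prove that $T\circ R$ is entanglement breaking for \emph{every} completely positive $R:\M_2\ra\M_d$, since a completely positive map $\M_2\ra\M_d$ is entanglement breaking precisely when its Choi matrix is separable.

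The heart of the argument is then to show that $T\circ R:\M_2\ra\M_d$ is again $\infty$-LEA, so that Lemma~\ref{lem:QbitLEA} (with $d_1=2$) applies and gives entanglement breaking. This should follow because $R$, as a completely positive map out of $\M_2$, is entanglement breaking (every positive map with a qubit input or output is entanglement breaking~\cite{horodecki2003entanglement}), hence of the form $R(\cdot)=\sum_a \ketbra{\phi_a}\,\langle\psi_a|\cdot|\psi_a\rangle$, i.e.\ a measure-and-prepare channel. Given $X\in(\M_2^{\otimes n})^+$, the output $(T\circ R)^{\otimes n}(X)=T^{\otimes n}\big(R^{\otimes n}(X)\big)$, and $R^{\otimes n}(X)$ is a separable (indeed fully product-decomposable) element of $(\M_d^{\otimes n})^+$ because each tensor factor of $R$ is measure-and-prepare; writing $R^{\otimes n}(X)=\sum_i W^i_1\otimes\cdots\otimes W^i_n$ with $W^i_j\in\M_d^+$ and applying $T^{\otimes n}$ to each product term reduces to applying $T$ to each $W^i_j$ separately, and since $T$ is $\infty$-LEA (in particular entanglement breaking, the $n=1$ case), each $T(W^i_j)\in\M_d^+$ is positive, so the total is a sum of product positive matrices. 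Hence $T\circ R$ is $\infty$-LEA.

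Actually the above shows something slightly weaker than needed unless one is careful: $\infty$-LEA for $T\circ R$ requires decomposing $(T\circ R)^{\otimes n}(X)$ into products for every $n$, and the computation just sketched does exactly that, using only that $R$ is measure-and-prepare and that $T$ is positivity-preserving. So the chain is: $Z=C_R \Rightarrow R$ entanglement breaking (qubit input) $\Rightarrow R$ measure-and-prepare $\Rightarrow T\circ R$ is $\infty$-LEA $\Rightarrow$ (Lemma~\ref{lem:QbitLEA}) $T\circ R$ entanglement breaking $\Rightarrow C_{T\circ R}=(\ident_2\otimes T)(Z)$ separable. Since $Z$ was arbitrary, $T$ is $2$-entanglement breaking.

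The step I expect to be the only real subtlety is the second one — verifying that $R^{\otimes n}(X)$ is genuinely fully product-separable (not merely separable across some bipartite cut) so that applying $T^{\otimes n}$ termwise really does produce a fully product decomposition; this is where the measure-and-prepare structure of qubit-input channels is essential rather than incidental, and it is the place where one must resist the temptation to invoke only bipartite separability. Everything else is bookkeeping with the Choi isomorphism and the definitions. One should also note the statement as phrased assumes equal input and output dimension $d$; the argument above in fact works verbatim for $T:\M_{d_1}\ra\M_{d_2}$ with $d_2=2$ by the symmetric version of Lemma~\ref{lem:QbitLEA}, but for $d_1=2$ one gets the stronger conclusion that $T$ is outright entanglement breaking, so the interesting content is exactly the case $d>2$ stated.
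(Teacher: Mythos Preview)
Your overall architecture is exactly the paper's: localize to a qubit ancilla, identify the resulting map $T\circ R:\M_2\to\M_d$ (the paper does this for pure $Z$, writing $R=\text{Ad}_V$; you do it for general $Z$ via the Choi isomorphism, which is a harmless variation), show that this composite is $\infty$-LEA, and then invoke Lemma~\ref{lem:QbitLEA}. So the strategy is right.

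However, the step you flag as ``the only real subtlety'' is in fact broken as written. You assert that a completely positive map $R:\M_2\to\M_d$ is automatically entanglement breaking, citing qubit folklore. This is false: the identity $\ident_2:\M_2\to\M_2$ (or $\text{Ad}_V$ for any isometry $V:\C^2\to\C^d$) is completely positive with qubit input and certainly not entanglement breaking; its Choi matrix is the maximally entangled state. What is true about qubits is that every positive map is \emph{decomposable} (Woronowicz), and that positive $\M_2\otimes\M_2$ or $\M_2\otimes\M_3$ matrices with positive partial transpose are separable; neither of these gives you entanglement breaking of $R$. Consequently your measure-and-prepare decomposition of $R$, and hence your argument that $R^{\otimes n}(X)$ is fully separable, collapses.

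The good news is that this detour is unnecessary and the repair is much simpler than what you wrote. You do not need any structure of $R$ beyond complete positivity: for $X\in(\M_2^{\otimes n})^+$ one has $(T\circ R)^{\otimes n}(X)=T^{\otimes n}\big(R^{\otimes n}(X)\big)$, and $R^{\otimes n}(X)\in(\M_d^{\otimes n})^+$ simply because $R$ is completely positive. Now apply the $\infty$-LEA property of $T$ \emph{directly} to the positive matrix $R^{\otimes n}(X)$ to obtain a fully product decomposition. This is precisely how the paper proceeds (implicitly, in the sentence ``the linear map $T\circ\text{Ad}_V$ is completely positive and $\infty$-LEA''). With this one-line fix your proof is complete and matches the paper's.
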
 

\begin{proof}
Let $T:\M_d\ra \M_d$ be an $\infty$-LEA map. For any $\ket{\psi}\in(\C^2\otimes \C^d)$ we have
\[
(\ident_2\otimes T)(\proj{\psi}{\psi}) = (\ident_2\otimes T\circ \text{Ad}_{V})(\omega_2)
\]
for some $V:\C^2\ra \C^d$ such that $\ket{\psi} = (\one_2\otimes V)\ket{\Omega_2}$ (using Lemma \ref{lem:YAVD}). Now note that the linear map $T\circ \text{Ad}_{V}:\M_2\ra\M_d$ is completely positive and $\infty$-LEA. Therefore, by Lemma \ref{lem:QbitLEA} the map $T\circ \text{Ad}_{V}$ is entanglement breaking, which shows that $(\ident_2\otimes T)(\proj{\psi}{\psi})$ is separable. Since the state $\ket{\psi}$ was chosen arbitrarily this shows that $T$ is $2$-entanglement breaking (see Definition \ref{defn:kEB}). 

\end{proof}

Applying Theorem \ref{thm:SNIteration} we immediately obtain: 

\begin{cor}
If for any $i\in \lset 1,\ldots , d-1\rset$ the linear map $T_i:\M_d\ra \M_d$ is completely positive and $\infty$-LEA, then $T_{d-1}\circ \cdots \circ T_1$ is entanglement breaking.
\end{cor}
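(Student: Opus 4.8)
The plan is to read this off directly from the theorem just proved together with the Schmidt number iteration of Theorem~\ref{thm:SNIteration}. First I would observe that the hypothesis placed on each $T_i$ is exactly what the preceding theorem requires: since $T_i:\M_d\ra\M_d$ is completely positive and $\infty$-LEA, that theorem tells us $T_i$ is $2$-entanglement breaking in the sense of Definition~\ref{defn:kEB}. Thus we are in the situation of having $d-1$ completely positive maps on $\M_d$, each of which is $2$-entanglement breaking.

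Next I would apply Theorem~\ref{thm:SNIteration} with parameter $n=2$. The number of compositions appearing in that statement is $\lceil \frac{d-1}{n-1}\rceil$, which equals $d-1$ when $n=2$, so the theorem yields at once that $T_{d-1}\circ\cdots\circ T_1$ is entanglement breaking. Unwinding the mechanism behind Theorem~\ref{thm:SNIteration}: a positive matrix $X\in(\M_d\otimes\M_d)^+$ satisfies $\text{SN}(X)\leq d$, and by Lemma~\ref{lem:SNtrimmingFrom2EB} (applied via the Schmidt decomposition, exactly as in the proof of Theorem~\ref{thm:SNIteration}) each successive application of a $2$-entanglement breaking $T_i$ to the second tensor factor lowers the Schmidt number by at least one until it reaches $1$; hence after $d-1$ applications the output is separable.

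There is essentially no obstacle here: the corollary is a direct specialization. The only things worth double-checking are the arithmetic $\lceil (d-1)/(n-1)\rceil = d-1$ at $n=2$ and that ``completely positive and $\infty$-LEA'' is precisely the hypothesis feeding the preceding theorem. One could optionally add the remark that the resulting bound $d-1$ coincides with the generic $2$-entanglement breaking iteration bound from Section~\ref{sec:SNTech}, which is to be expected since the only property used about the maps $T_i$ is that they are $2$-entanglement breaking.
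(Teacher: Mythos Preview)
Your proposal is correct and matches the paper's approach exactly: the paper simply states that the corollary is obtained by applying Theorem~\ref{thm:SNIteration}, and you have spelled out precisely that application (using the preceding theorem to get $2$-entanglement breaking, then specializing $n=2$ so that $\lceil (d-1)/(n-1)\rceil=d-1$).
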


\section{Equivalent formulations of the PPT squared conjecture}
\label{sec:PPTsquareEquiv}

\subsection{Composition of different linear maps}

The PPT squared conjecture is formulated in terms of a single linear map that is applied twice. In the following we give a straightforward equivalent conjecture in terms of the composition of two (possibly different) linear maps that are both completely positive and completely copositive. 

\begin{conj}[PPT squared conjecture -- Version 2]
For any pair of linear maps $T_1:\M_{d_1}\ra\M_{d_2}$ and $T_2:\M_{d_2}\ra\M_{d_3}$ that are both completely positive and completely copositive the composition $T_2\circ T_1$ is entanglement breaking.
\label{conj:PPT22}
\end{conj}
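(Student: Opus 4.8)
The plan is to recast the claim as a statement about the Choi matrix: since a map is entanglement breaking exactly when its Choi matrix is separable, i.e.\ when $\mathrm{SN}(C_{T_2\circ T_1})=1$, the goal becomes to show that composing two completely positive and completely copositive maps collapses the Choi-matrix Schmidt number all the way down to $1$, in every dimension. As a harmless normalisation I would first set $d=\max(d_1,d_2,d_3)$ and embed all three spaces into $\C^d$ by isometric compressions and dilations, so that $T_1,T_2$ become maps $\M_d\to\M_d$ that remain completely positive and completely copositive and whose composition is entanglement breaking if and only if the original one is (separability being stable under the local operations implementing the embedding). This lets me work entirely inside the single-dimension toolbox of Section~\ref{sec:SNTech} while still keeping the two maps distinct.

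The natural first attack is the trimming lemma (Lemma~\ref{lem:SNtrimmingFrom2EB}) combined with the iteration of Theorem~\ref{thm:SNIteration}. If each of $T_1,T_2$ were $2$-entanglement breaking, then every application would lower the Schmidt number of a bipartite output by one from its maximal value, and one could hope to drive $\mathrm{SN}$ down to $1$. I would therefore try to prove that \emph{every} completely positive and completely copositive map $\M_d\to\M_d$ is $2$-entanglement breaking, generalising the mechanism behind the qutrit case in Corollary~\ref{thm:PPT2Qutrit}, where it holds only because $2$-positivity together with $2$-copositivity forces separability in $\M_2\otimes\M_3$ by~\cite{woronowicz1976positive}.

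The main obstacle is precisely that this base case fails beyond $d=3$: the construction in Section~\ref{sec:AltIt} from the Tang--Horodecki state produces a completely positive and completely copositive map on $\M_4$ that is \emph{not} $2$-entanglement breaking, so the one-step trimming argument has no foothold in general. Even granting Conjecture~\ref{conj:MaxSNPPT}, the machinery of Lemma~\ref{lem:OtherIteration} only trims the Choi Schmidt number by one per completely-positive-and-copositive factor, yielding entanglement breaking after $2^{d-1}-1$ compositions rather than after two; bridging the gap between ``one step trims by at most one'' and ``two steps trim to separable'' is the genuine crux, and is exactly what keeps the conjecture open. A route with a better chance than raw Schmidt-number counting would be to exploit the structural consequences of the doubled PPT condition directly — analysing the peripheral spectrum and fixed-point algebra of $T_2\circ T_1$ in the spirit of the asymptotic results of~\cite{lami2016entanglement,kennedy2017composition} and attempting to upgrade their limit-point statements to the second power — but converting such structure into separability of $C_{T_2\circ T_1}$ in full generality is the hard part that I do not expect to resolve by the elementary iteration alone.
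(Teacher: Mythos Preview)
You have misread the task: the statement is labeled \emph{Conjecture}, and the paper does not prove it. What the paper actually supplies immediately after stating Conjecture~\ref{conj:PPT22} is a proof that \emph{Conjecture~\ref{conj:PPT21} and Conjecture~\ref{conj:PPT22} are equivalent}. Your proposal instead attempts a direct proof of the conjecture itself, and you correctly diagnose the obstacles: $2$-entanglement breaking fails for general PPT maps beyond $d=3$, and even Conjecture~\ref{conj:MaxSNPPT} only yields entanglement breaking after $2^{d-1}-1$ compositions, not two. These are precisely the reasons the conjecture remains open; your honest concluding sentence that you ``do not expect to resolve [it] by the elementary iteration alone'' is the right assessment, but it means the proposal is not a proof.

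The paper's actual argument is short and purely formal. The implication Conjecture~\ref{conj:PPT22} $\Rightarrow$ Conjecture~\ref{conj:PPT21} is immediate by specialising $T_1=T_2=T$ with $d_1=d_2=d_3=d$. For the converse, the paper assumes a counterexample $T_1,T_2$ to Conjecture~\ref{conj:PPT22}, pads both maps to act on $\M_d$ with $d=\max(d_1,d_2,d_3)$ (this is the embedding you mention, but used for a different purpose), and then builds a single \emph{switch map} $T:\M_d\otimes\M_2\to\M_d\otimes\M_2$ that applies $\tilde T_1$ or $\tilde T_2$ according to a classical flag and flips the flag. One checks directly that $T$ is completely positive and completely copositive, and that $T\circ T$ restricts on one flag value to $\tilde T_2\circ\tilde T_1$, which is not entanglement breaking by assumption; hence $T\circ T$ is not entanglement breaking and Conjecture~\ref{conj:PPT21} fails. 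The entire content is this switch-map construction, not any Schmidt-number analysis.
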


For convenience we provide the proof of the equivalence of these two conjectures.

\begin{proof}[Proof of equivalence of Conjecture \ref{conj:PPT21} and Conjecture \ref{conj:PPT22}]\hfill\\
Assume that there are $T_1:\M_{d_1}\ra\M_{d_2}$ and $T_2:\M_{d_2}\ra\M_{d_3}$ that are both completely positive and completely copositive such that the composition $T_2\circ T_1$ is \emph{not} entanglement breaking. Let $d=\max(d_1,d_2,d_3)$ and denote by $V_1:\C^{d_1}\ra \C^{d}$ and $V_2:\C^{d_2}\ra \C^{d}$ the canonical isometries embedding $\C^{d_i}$ for $i=1,2$ into the first coordinates of $\C^d$. Now define the linear maps $\tilde{T}_1:\M_d\ra\M_d$ and $\tilde{T}_2:\M_d\ra\M_d$ as
\begin{align*}
\tilde{T}_1(X) &= T_1(V_1^\dagger XV_1)\oplus 0_{(d-d_2)} \\
\tilde{T}_2(X) &= T_2(V_2^\dagger XV_2)\oplus 0_{(d-d_3)}
\end{align*}
for any $X\in\M_d$. Note that the maps $\tilde{T}_1$ and $\tilde{T}_2$ are both completely positive and completely copositive. Now consider the switch map $T:\M_d\otimes \M_2\ra \M_d\otimes \M_2$ defined as
\[
T(X) = \tilde{T}_1((\one_d\otimes \bra{1}) X (\one_d\otimes \ket{1}))\otimes \ketbra{2} + \tilde{T}_2((\one_d\otimes \bra{2}) X (\one_d\otimes \ket{2}))\otimes \ketbra{1}  
\] 
for any $X\in\M_d\otimes \M_2$. It can be easily verified that the channel $T$ is still completely positive and completely copositive. Now applying this channel twice yields 
\begin{align*}
T\circ T(X)&= \tilde{T}_1\circ \tilde{T}_2 ((\one_d\otimes \bra{2}) X (\one_d\otimes \ket{2})) \otimes \ketbra{2} +\cdots \\
&\cdots + \tilde{T}_2\circ \tilde{T}_1 ((\one_d\otimes \bra{1}) X (\one_d\otimes \ket{1})) \otimes \ketbra{1}
\end{align*}
for any $X\in\M_d\otimes \M_2$. Finally, note that we have
\[
T_2\circ T_1(Y) = T\circ T(V_1YV_1^\dagger\otimes \ketbra{1})
\] 
and by assumption this channel is not entanglement breaking. Therefore, $T\circ T$ cannot be entanglement breaking either. The other direction is clear.

\end{proof}

Note that in the previous proof we increased the dimension of the linear map when using a counterexample for Conjecture \ref{conj:PPT22} to construct a counterexample for Conjecture \ref{conj:PPT21}. We do not know whether this increase of dimension is necessary. 

\subsection{Connection to local entanglement annihilation}

We continue with a reformulation of the PPT squared conjecture related to so called entanglement annihilating channels~\cite{moravvcikova2010entanglement,filippov2012local}. A linear map $T:\M_{d_1}\ra\M_{d_2}$ is called 2-locally entanglement annihilating if the image $\lb T\otimes T\rb(X)$ is separable for any positive matrix $X\geq 0$. Trivial examples of such maps are the entanglement breaking maps. However, there are examples~\cite{filippov2012local} of completely positive maps which are 2-locally entanglement annihilating, but not entanglement breaking. The following reformulation shows that such maps could be obtained from any linear map that is both completely positive and completely copositive:

\begin{conj}[PPT squared conjecture -- Version 3]
For any pair of linear maps $T_1:\M_{d_1}\ra\M_{d_2}$ and $T_2:\M_{d_3}\ra\M_{d_4}$ both of which completely positive and completely copositive the image $(T_1\otimes T_2)(X)$ is separable for any positive matrix $X\in (\M_{d_1}\otimes \M_{d_3})^+$.
\label{conj:PPT23}
\end{conj}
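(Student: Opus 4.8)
The plan is to prove that Conjecture \ref{conj:PPT23} is equivalent to Conjecture \ref{conj:PPT22}, and hence (via the equivalence already established) to Conjecture \ref{conj:PPT21}. Throughout I would use three standard facts: a completely positive map is entanglement breaking if and only if its Choi matrix is separable \cite{horodecki2003entanglement}; the class of completely positive and completely copositive maps is closed under composition, under pre- or post-composition with any $\text{Ad}_A$, and under the operation $L\mapsto \vartheta\circ L^*\circ\vartheta$ (all immediate from the definitions, using that $L$ is completely copositive iff $\vartheta\circ L$ is completely positive and that adjoints and complex conjugation preserve both properties); and the set of separable matrices is closed under addition and under local transformations $X\mapsto(\Phi_1\otimes\Phi_2)(X)$ with $\Phi_i$ completely positive. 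The only genuine manipulation is the ``transpose trick'' of Lemma \ref{Lemma:tricks}(2), which I would use twice to convert between the $\otimes$-picture of Conjecture \ref{conj:PPT23} and the $\circ$-picture of Conjecture \ref{conj:PPT22} via the Choi--Jamio\l{}kowski isomorphism.

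For the implication Conjecture \ref{conj:PPT22} $\Rightarrow$ Conjecture \ref{conj:PPT23}, take completely positive and completely copositive $T_1:\M_{d_1}\ra\M_{d_2}$ and $T_2:\M_{d_3}\ra\M_{d_4}$ and $X\in(\M_{d_1}\otimes\M_{d_3})^+$, and write $X=\sum_k\proj{\psi_k}{\psi_k}$. Since separability is closed under sums it suffices to treat a single $\ket{\psi}\in\C^{d_1}\otimes\C^{d_3}$. By Lemma \ref{lem:YAVD} write $\ket{\psi}=(\one_{d_1}\otimes A)\ket{\Omega_{d_1}}$, so $\proj{\psi}{\psi}=(\ident_{d_1}\otimes\text{Ad}_A)(\omega_{d_1})$ and therefore $(T_1\otimes T_2)(\proj{\psi}{\psi})=(T_1\otimes S)(\omega_{d_1})$ with $S:=T_2\circ\text{Ad}_A:\M_{d_1}\ra\M_{d_4}$ still completely positive and completely copositive. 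Applying Lemma \ref{Lemma:tricks}(2) to the first tensor factor gives $(T_1\otimes\ident_{d_1})(\omega_{d_1})=(\ident_{d_2}\otimes R)(\omega_{d_2})$ with $R:=\vartheta_{d_1}\circ T_1^*\circ\vartheta_{d_2}:\M_{d_2}\ra\M_{d_1}$ again completely positive and completely copositive, hence $(T_1\otimes T_2)(\proj{\psi}{\psi})=(\ident_{d_2}\otimes S\circ R)(\omega_{d_2})=C_{S\circ R}$. The map $S\circ R$ is completely positive and completely copositive, so by Conjecture \ref{conj:PPT22} it is entanglement breaking and $C_{S\circ R}$ is separable, as required.

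For the converse Conjecture \ref{conj:PPT23} $\Rightarrow$ Conjecture \ref{conj:PPT22}, take completely positive and completely copositive $T_1:\M_{d_1}\ra\M_{d_2}$ and $T_2:\M_{d_2}\ra\M_{d_3}$; it suffices to show $C_{T_2\circ T_1}$ is separable. Writing $C_{T_2\circ T_1}=(\ident_{d_1}\otimes T_2)(C_{T_1})$ and applying Lemma \ref{Lemma:tricks}(2) to $C_{T_1}=(\ident_{d_1}\otimes T_1)(\omega_{d_1})$ yields $C_{T_1}=(R\otimes\ident_{d_2})(\omega_{d_2})$ with $R:=\vartheta_{d_1}\circ T_1^*\circ\vartheta_{d_2}:\M_{d_2}\ra\M_{d_1}$ completely positive and completely copositive, so $C_{T_2\circ T_1}=(R\otimes T_2)(\omega_{d_2})$. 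By Conjecture \ref{conj:PPT23} applied to the completely positive and completely copositive maps $R$ and $T_2$ and the positive matrix $\omega_{d_2}$, this matrix is separable, hence $T_2\circ T_1$ is entanglement breaking.

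I do not expect a real obstacle here, since the argument is bookkeeping; the one point that needs care is routing the transpose trick through the correct subsystem and checking at each step that the auxiliary maps ($T_2\circ\text{Ad}_A$ and $\vartheta\circ L^*\circ\vartheta$) stay both completely positive and completely copositive. It is also worth remarking, in contrast with the Conjecture \ref{conj:PPT21} $\Leftrightarrow$ Conjecture \ref{conj:PPT22} reduction, that the above argument does not change dimensions, so it identifies Conjecture \ref{conj:PPT23} with the two-map version of the conjecture dimension by dimension.
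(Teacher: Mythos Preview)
Your argument is correct and follows essentially the same route as the paper: reduce to pure states, use Lemma~\ref{lem:YAVD} to write $\ket{\psi}=(\one\otimes A)\ket{\Omega}$, and apply the transpose trick of Lemma~\ref{Lemma:tricks}(2) to land on a Choi matrix of a composition of two completely positive and completely copositive maps. The only difference is that for the converse you argue Conjecture~\ref{conj:PPT23} $\Rightarrow$ Conjecture~\ref{conj:PPT22} directly, whereas the paper shows Conjecture~\ref{conj:PPT23} $\Rightarrow$ Conjecture~\ref{conj:PPT21} and then invokes the already-established equivalence Conjecture~\ref{conj:PPT21} $\Leftrightarrow$ Conjecture~\ref{conj:PPT22}; your version is slightly cleaner and, as you note, avoids any dimension increase. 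One wording nit: the sentence ``The map $S\circ R$ is completely positive and completely copositive, so by Conjecture~\ref{conj:PPT22} it is entanglement breaking'' should read that $S$ and $R$ are each completely positive and completely copositive, so their composition is entanglement breaking by Conjecture~\ref{conj:PPT22}---the intermediate clause as written is true but not the hypothesis being invoked.
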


\begin{proof}[Proof of equivalence of Conjecture \ref{conj:PPT22} and Conjecture \ref{conj:PPT23}]\hfill\\
Suppose first that Conjecture \ref{conj:PPT22} holds true. By convexity it suffices to check that $(T_1\otimes T_2)(\ketbra{\psi})$ is separable for any pure state $\ket{\psi}\in\C^{d_1}\otimes C^{d_3}$. By Lemma \ref{lem:YAVD} we can write $\ket{\psi} = (\one_{d_1}\otimes A)\ket{\Omega_{d_1}}$ for a linear transformation $A:\C^{d_1}\ra \C^{d_3}$. Now by Lemma \ref{Lemma:tricks} we have
\[
(T_1\otimes T_2)(\ketbra{\psi}) = \lbr\ident_{d_2}\otimes (T_2\circ \text{Ad}_A\circ \vartheta_{d_1}\circ T_1^*\circ \vartheta_{d_2})\rbr(\ketbra{\Omega_{d_2}}).
\]  
Since both maps $T_2:\M_{d_3}\ra\M_{d_4}$ and $\text{Ad}_A\circ \vartheta_{d_1}\circ T_1^*\circ \vartheta_{d_2}:\M_{d_2}\ra \M_{d_3}$ are completely positive and completely copositive by assumption their composition is entanglement breaking, and by the above equation $(T_1\otimes T_2)(\ketbra{\psi})$ is separable. 

For the other direction assume that Conjecture \ref{conj:PPT23} is true. Given a completely positive and completely copositive map $T:\M_d\ra \M_d$ by assumption we have that  
\[
((\vartheta_d\circ T^*\circ \vartheta_d)\otimes T)(\omega_{d}) = \lb\ident_{d}\otimes T^2\rb(\omega_d)
\]
is separable. This implies directly that $T^2$ is entanglement breaking, and thereby would imply Conjecture \ref{conj:PPT21}. Since Conjecture \ref{conj:PPT21} and Conjecture \ref{conj:PPT22} are equivalent the proof is finished. 

\end{proof}

It should be noted that the previous proof also implies the following equivalence for fixed dimensions: Given a completely positive map $T:\M_{d_2}\ra\M_{d_3}$, the composition $T\circ S$ is entanglement breaking for any completely positive map $S:\M_{d_1}\ra\M_{d_2}$ that is completely copositive if and only if the image $(\tilde{S}\otimes T)(X)$ is separable for any positive matrix $X\in (\M_{d_2}\otimes \M_{d_2})^+$ and any completely positive map $\tilde{S}:\M_{d_2}\ra\M_{d_1}$ that is completely copositive.

\subsection{Decomposability of certain positive maps}

We will begin with a well-known definition: 

\begin{defn}[Decomposable maps]
A positive map $P:\M_{d_1}\ra \M_{d_2}$ is called decomposable if $P=T_1 + \vartheta_{d_2}\circ T_2$ for $T_1,T_2:\M_{d_1}\ra \M_{d_2}$ completely positive
\end{defn}

 We will need the following results on duality of different subcones of the positive maps. This is in the sense of the theory of mapping cones, see \cite{skowronek2009cones} for details.

\begin{thm}[Duality of cones~\cite{skowronek2009cones,horodecki1996separability}]
A linear map $T:\M_{d_1}\ra \M_{d_2}$ is 
\begin{itemize}
\item completely positive and completely copositive if and only if for any decomposable map $P:\M_{d_2}\ra \M_{d_1}$ the composition $P\circ T$ is completely positive.
\item entanglement breaking if and only if for any positive map $P:\M_{d_2}\ra \M_{d_1}$ the composition $P\circ T$ is completely positive.
\end{itemize}
\label{thm:ConeDuality}
\end{thm}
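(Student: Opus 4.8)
The plan is to prove both equivalences in parallel by pushing everything to the level of Choi matrices. Recall that a linear map $L$ is completely positive iff $C_L\geq 0$, that $L$ is entanglement breaking iff $C_L$ is separable, and that $L$ is completely positive and completely copositive iff both $C_L\geq 0$ and $(\ident_{d_1}\otimes\vartheta_{d_2})(C_L)\geq 0$; the last equivalence uses $C_{\vartheta_{d_2}\circ L}=(\ident_{d_1}\otimes\vartheta_{d_2})(C_L)$, i.e.\ partial transposition of the output corresponds to partial transposition of $C_L$ on the second tensor factor. The one computation that makes everything work is the identity $C_{P\circ T}=(\ident_{d_1}\otimes P)(C_T)$ for any $P:\M_{d_2}\ra\M_{d_1}$, immediate from the definition of the Choi matrix. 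With it, ``$P\circ T$ is completely positive'' becomes ``$(\ident_{d_1}\otimes P)(C_T)\geq 0$'', so the two bullets read: $C_T$ is separable (resp.\ has positive partial transpose, together with $C_T\geq0$) if and only if $(\ident_{d_1}\otimes P)(C_T)\geq 0$ for every positive (resp.\ decomposable) map $P$. These are exactly the entanglement-witness / dual-cone characterisations of separable and of PPT matrices.

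For the ``only if'' directions I would argue algebraically. If $T$ is entanglement breaking, write $C_T=\sum_i A_i\otimes B_i$ with $A_i,B_i\geq 0$; then $(\ident_{d_1}\otimes P)(C_T)=\sum_i A_i\otimes P(B_i)\geq 0$ whenever $P$ is positive, since each $P(B_i)\geq 0$. If $T$ is completely positive and completely copositive and $P=P_1+\vartheta_{d_1}\circ P_2$ is decomposable with $P_1,P_2$ completely positive, then $P_1\circ T$ is a composition of completely positive maps, hence completely positive, while $\vartheta_{d_1}\circ P_2\circ T=(\vartheta_{d_1}\circ P_2\circ\vartheta_{d_2})\circ(\vartheta_{d_2}\circ T)$ is again a composition of completely positive maps: $\vartheta_{d_2}\circ T$ is completely positive because $T$ is completely copositive, and $\vartheta_{d_1}\circ P_2\circ\vartheta_{d_2}$ is completely positive because complex-conjugating a Kraus decomposition is again a Kraus decomposition. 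Summing, $P\circ T$ is completely positive.

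For the ``if'' directions, first observe that plugging in $P=\mathrm{Ad}_{A^\dagger}$ with $A:\C^{d_1}\ra\C^{d_2}$ (completely positive, hence both positive and decomposable) already forces $C_T\geq 0$, i.e.\ $T$ completely positive: if $\langle\phi|C_T|\phi\rangle<0$ for some $\ket{\phi}\in\C^{d_1}\otimes\C^{d_2}$, write $\ket{\phi}=(\one_{d_1}\otimes A)\ket{\Omega_{d_1}}$ via Lemma~\ref{lem:YAVD}, and then $\langle\Omega_{d_1}|\,C_{\mathrm{Ad}_{A^\dagger}\circ T}\,|\Omega_{d_1}\rangle=\langle\phi|C_T|\phi\rangle<0$, contradicting complete positivity of $\mathrm{Ad}_{A^\dagger}\circ T$. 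For the entanglement breaking bullet I would then invoke the Horodecki separability criterion~\cite{horodecki1996separability}: a positive matrix $\rho\in(\M_{d_1}\otimes\M_{d_2})^+$ is separable if and only if $(\ident_{d_1}\otimes\Lambda)(\rho)\geq 0$ for every positive map $\Lambda:\M_{d_2}\ra\M_{d_1}$; applied to $\rho=C_T$ and combined with $(\ident_{d_1}\otimes\Lambda)(C_T)=C_{\Lambda\circ T}$, the hypothesis ``$\Lambda\circ T$ completely positive for all positive $\Lambda$'' is precisely separability of $C_T$, i.e.\ $T$ is entanglement breaking. For the PPT bullet I would instead additionally test with the decomposable maps $P=\vartheta_{d_1}\circ\mathrm{Ad}_{A^\dagger}$: using Lemma~\ref{Lemma:tricks} to rewrite $(\ident_{d_1}\otimes(\vartheta_{d_1}\circ\mathrm{Ad}_{A^\dagger}))(C_T)$ as $(\ident_{d_1}\otimes\mathrm{Ad}_{\overline{A}^\dagger})\big((\ident_{d_1}\otimes\vartheta_{d_2})(C_T)\big)$, the same maximally-entangled-vector argument — now run on the partial transpose $(\ident_{d_1}\otimes\vartheta_{d_2})(C_T)$ — forces that partial transpose to be positive, i.e.\ $T$ is completely copositive; together with $C_T\geq 0$ this is the claim. (Alternatively one can cite directly the statement from~\cite{skowronek2009cones} that decomposable witnesses detect exactly the non-PPT matrices, which bypasses this bookkeeping.)

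The only genuinely nontrivial ingredient is the Horodecki criterion used for the entanglement breaking bullet; everything else is bookkeeping with the Choi isomorphism, the elementary stability of the completely positive cone under composition and under conjugation by $\vartheta$, and the ``probe with $\mathrm{Ad}$-maps'' trick. The step I expect to demand the most care is the transpose/conjugation algebra in the PPT converse — confirming that precomposing... postcomposing $T$ with $\vartheta_{d_1}\circ\mathrm{Ad}_{A^\dagger}$ really produces the partial transpose of $C_T$ up to a harmless conjugation — together with checking that in the Horodecki criterion positive maps with codomain $\M_{d_1}$ (rather than an arbitrary codomain) already suffice to certify non-separability of matrices in $\M_{d_1}\otimes\M_{d_2}$, which matches the codomain restriction $P:\M_{d_2}\ra\M_{d_1}$ in the statement.
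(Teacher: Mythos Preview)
The paper does not actually prove this theorem: it is stated with references and the text immediately after attributes the first bullet to St{\o}rmer and the second ``essentially'' to the Horodeckis, then moves on to use it. So there is no paper-side proof to compare against; you are supplying what the paper outsources to the literature.

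Your argument is correct and is essentially the standard one. The key identity $C_{P\circ T}=(\ident_{d_1}\otimes P)(C_T)$ translates both bullets into the witness characterisations of PPT and of separable matrices, and your ``only if'' computations are clean (in particular the rewriting $\vartheta_{d_1}\circ P_2\circ T=(\vartheta_{d_1}\circ P_2\circ\vartheta_{d_2})\circ(\vartheta_{d_2}\circ T)$ and the observation that $\vartheta\circ(\text{CP})\circ\vartheta$ is CP by conjugating Kraus operators). For the ``if'' directions, probing with $P=\mathrm{Ad}_{A^\dagger}$ and $P=\vartheta_{d_1}\circ\mathrm{Ad}_{A^\dagger}$ recovers positivity of $C_T$ and of its partial transpose exactly as you say; the algebra $\vartheta_{d_1}\circ\mathrm{Ad}_{A^\dagger}=\mathrm{Ad}_{\overline{A}^\dagger}\circ\vartheta_{d_2}$ that you flag as the delicate point does hold, and then letting $A$ range over all $d_2\times d_1$ matrices makes $(\one\otimes\overline A)\ket{\Omega_{d_1}}$ range over all of $\C^{d_1}\otimes\C^{d_2}$. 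Your invocation of the Horodecki criterion with codomain $\M_{d_1}$ is also the correct form. One minor remark: the theorem is stated for an arbitrary linear map $T$, so a priori $C_T$ need not be Hermitian; your $\mathrm{Ad}_{A^\dagger}$ probe already forces $C_T\geq 0$ (hence Hermitian) before you appeal to the Horodecki criterion, so no gap arises, but it is worth saying explicitly.
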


The first point in the previous theorem is due to St\o rmer~\cite{stormer2008duality}  and the second point essentially due to the Horodeckis~\cite{horodecki1996separability}. Using these dualities we can establish another equivalent formulation of the PPT squared conjecture: 

\begin{conj}[PPT squared conjecture -- Version 4]
For any completely positive and completely copositive map $T:\M_{d_1}\ra \M_{d_2}$ and any positive map $P:\M_{d_2}\ra \M_{d_3}$ the composition $P\circ T$ is decomposable.
\label{conj:PPT24}
\end{conj}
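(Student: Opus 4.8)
The plan is to prove that Conjecture~\ref{conj:PPT24} is equivalent to Conjecture~\ref{conj:PPT22}, and hence, via the equivalences already established, to all versions of the PPT squared conjecture. Both implications use Theorem~\ref{thm:ConeDuality}; the implication ``Conjecture~\ref{conj:PPT22} $\Rightarrow$ Conjecture~\ref{conj:PPT24}'' additionally uses the \emph{bidual} of the first bullet of Theorem~\ref{thm:ConeDuality}: a linear map $R:\M_{d_1}\ra\M_{d_3}$ is decomposable if and only if $R\circ\Phi$ is completely positive for every completely positive and completely copositive map $\Phi:\M_{d_3}\ra\M_{d_1}$. This follows from the general duality theory of mapping cones~\cite{skowronek2009cones,stormer2008duality} (the decomposable maps form a closed mapping cone whose dual is the cone of completely positive and completely copositive maps), and it can also be verified directly: in terms of Choi matrices the decomposable cone is $(\M_{d_1}\otimes\M_{d_3})^+ + (\ident_{d_1}\otimes\vartheta_{d_3})\lbr(\M_{d_1}\otimes\M_{d_3})^+\rbr$, which is closed (a trace bound controls both summands since transposition preserves the trace), and whose Hilbert--Schmidt dual cone is exactly the cone of PPT matrices, i.e.\ of Choi matrices of completely positive and completely copositive maps; translating the pairing of Choi matrices into composition of maps then gives the criterion.

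Assume Conjecture~\ref{conj:PPT22}. Let $T:\M_{d_1}\ra\M_{d_2}$ be completely positive and completely copositive and $P:\M_{d_2}\ra\M_{d_3}$ positive, so that $P\circ T:\M_{d_1}\ra\M_{d_3}$. By the criterion above it suffices to fix an arbitrary completely positive and completely copositive map $\Phi:\M_{d_3}\ra\M_{d_1}$ and to show that $(P\circ T)\circ\Phi = P\circ(T\circ\Phi)$ is completely positive. Now $T\circ\Phi:\M_{d_3}\ra\M_{d_2}$ is a composition of two completely positive and completely copositive maps, hence entanglement breaking by Conjecture~\ref{conj:PPT22}. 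Composing an entanglement breaking map with a positive map always yields a completely positive map: if $(\ident_n\otimes (T\circ\Phi))(X) = \sum_i A_i\otimes B_i$ with $A_i,B_i\geq 0$, then $(\ident_n\otimes P)$ maps it to $\sum_i A_i\otimes P(B_i)\geq 0$. Hence $P\circ(T\circ\Phi)$ is completely positive, so $P\circ T$ is decomposable, which is Conjecture~\ref{conj:PPT24}.

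Conversely, assume Conjecture~\ref{conj:PPT24}. Let $T_1:\M_{d_1}\ra\M_{d_2}$ and $T_2:\M_{d_2}\ra\M_{d_3}$ both be completely positive and completely copositive. By the second bullet of Theorem~\ref{thm:ConeDuality}, $T_2\circ T_1$ is entanglement breaking if and only if $P\circ(T_2\circ T_1)$ is completely positive for every positive map $P:\M_{d_3}\ra\M_{d_1}$; fix such a $P$. By Conjecture~\ref{conj:PPT24} applied to the completely positive and completely copositive map $T_2$ and the positive map $P$, the composition $P\circ T_2:\M_{d_2}\ra\M_{d_1}$ is decomposable. Then the first bullet of Theorem~\ref{thm:ConeDuality}, applied to the completely positive and completely copositive map $T_1$, shows that $(P\circ T_2)\circ T_1 = P\circ(T_2\circ T_1)$ is completely positive. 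Since $P$ was arbitrary, $T_2\circ T_1$ is entanglement breaking, i.e.\ Conjecture~\ref{conj:PPT22} holds.

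The only genuinely non-trivial ingredient is the dual characterization of decomposability used in the first implication; with it in hand, both directions are a matter of tracking the order of compositions, and I expect no further obstacle. Note that the two implications are near mirror images: each applies the first bullet of Theorem~\ref{thm:ConeDuality} to a completely positive and completely copositive map in order to turn ``decomposable composed with such a map'' into a completely positive map, and each uses the respective conjecture precisely to supply the decomposability (resp.\ the entanglement breaking property) that feeds this step.
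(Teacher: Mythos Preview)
Your proof is correct and follows essentially the same route as the paper's own argument: both directions hinge on Theorem~\ref{thm:ConeDuality} together with the dual characterization of decomposable maps, and your mapping of Conjecture~\ref{conj:PPT22} to Conjecture~\ref{conj:PPT24} via an auxiliary cp+ccp map $\Phi$ (resp.\ $S$) is exactly what the paper does. You are in fact more careful than the paper in one respect: the paper simply cites ``the first point of Theorem~\ref{thm:ConeDuality}'' to conclude that $P\circ T$ is decomposable from the fact that $(P\circ T)\circ S$ is completely positive for every cp+ccp $S$, whereas the theorem as stated only gives the forward characterization of cp+ccp maps; your explicit discussion of the bidual (decomposable $=$ dual of cp+ccp in the mapping-cone sense) fills this small gap.
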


\begin{proof}[Proof of equivalence of Conjecture \ref{conj:PPT22} and Conjecture \ref{conj:PPT24}]\hfill\\
Suppose first that Conjecture \ref{conj:PPT22} holds and consider $T:\M_{d_1}\ra \M_{d_2}$ completely positive and completely copositive and $P:\M_{d_2}\ra \M_{d_3}$ positive. For any completely positive and completely copositive map $S:\M_{d_3}\ra \M_{d_1}$ the composition $T\circ S$ is entanglement breaking by assumption. By the second point of Theorem \ref{thm:ConeDuality} the composition $P\circ T\circ S$ is completely positive. Since this holds for any completely positive and completely copositive map $S$ the first point of Theorem \ref{thm:ConeDuality} shows that $P\circ T$ has to be decomposable. 

For the other direction suppose that Conjecture \ref{conj:PPT22} does not hold and let $T:\M_{d_1}\ra \M_{d_2}$ and $S:\M_{d_3}\ra \M_{d_1}$ be a counterexample, i.e. both maps are completely positive and completely copositive but the composition $T\circ S$ is not entanglement breaking. By the second point of Theorem \ref{thm:ConeDuality} there exists a positive map $P:\M_{d_2}\ra \M_{d_3}$ such that $P\circ T\circ S$ is not completely positive. Now by the first point of Theorem \ref{thm:ConeDuality} the composition $P\circ T$ cannot be decomposable.

\end{proof}

Note that with the previous proof we can obtain the following equivalence with fixed dimensions: Given a completely positive map $T:\M_{d_2}\ra\M_{d_3}$, then $T\circ S$ is entanglement breaking for any $S:\M_{d_1}\ra\M_{d_2}$ if and only if $P\circ T$ is decomposable for any positive map $P:\M_{d_3}\ra\M_{d_1}$.  

Using semidefinite programming it is easily checkable whether a given positive map is decomposable. The previous reformulation of the PPT squared conjecture therefore suggests a computational procedure that might lead to a counterexample. First, choose a non-decomposable positive map $P:\M_{d_1}\ra\M_{d_2}$ and use semidefinite programming to find a completely positive and completely copositive map $T:\M_{d_2}\ra \M_{d_1}$ such that $P\circ T:\M_{d_2}\ra\M_{d_2}$ is not completely positive. Then, use semidefinite programming again to check whether the map $P\circ T$ is even non-decomposable. Unfortunately we have not been able to use this procedure to find a counterexample to the PPT squared conjecture.  

Another possibility to construct a counterexample to Conjecture \ref{conj:PPT24} could be to find a tensor-stable positive map, i.e.~a linear map $P:\M_{d_1}\ra\M_{d_2}$ such that $P^{\otimes n}$ is positive for any $n\in\N$, that is neither completely positive nor completely copositive (see \cite{muller2016positivity} for details). It has been shown in \cite{muller2018decomposability} that given such a map there would exist another tensor-stable positive map $\tilde{P}$ and a completely positive map $T$ that is completely copositive such that $(\tilde{P}\circ T)^{\otimes n} = \tilde{P}^{\otimes n}\circ T^{\otimes n}$ is not decomposable for some $n\in\N$. Since $\tilde{P}^{\otimes n}$ is positive and $T^{\otimes n}$ is both completely positive and completely copositive, this would be a counterexample to Conjecture \ref{conj:PPT24}. Unfortunately, it is an open problem whether tensor-stable positive maps exist that are neither completely positive nor completely copositive~\cite{muller2016positivity}.

\section{Examples for the PPT squared conjecture}
\label{sec:ExPPT2}

Here we will collect some examples of completely positive maps $T:\M_d\ra\M_d$ that are completely copositive maps and not entanglement breaking, but for which their composition $T\circ T$ yields entanglement breaking maps. Further examples of this type have been reported in the literature: See for example~\cite{kennedy2017composition} for a family of linear maps based on graphs, and~\cite{collins2018ppt} for a class of random completely positive maps. For these examples the PPT squared conjecture was confirmed.  

\subsection{Completely positive maps with certain symmetries}

Consider a completely positive map $T:\M_{d}\otimes \M_{d}\ra\M_{d}\otimes \M_d$ such that 
\begin{equation}
(\text{Ad}_U\otimes \text{Ad}_V)\circ T\circ (\text{Ad}_{U^T}\otimes \text{Ad}_{V^\dagger}) = T,
\label{equ:SymmCP}
\end{equation}
for all unitaries $U,V\in\mathcal{U}_d$. Completely positive maps satisfying \eqref{equ:SymmCP} correspond (after a suitable reordering of the tensor factors) to Choi matrices with the symmetry
\[
(U\otimes U\otimes V\otimes \overline{V}) C_T (U\otimes U\otimes V\otimes \overline{V})^\dagger = C_T,
\]
for all unitaries $U,V:\mathcal{U}_d$. The set of quantum states with this symmetry has been classified in~\cite{vollbrecht2002activating} and it contains entangled quantum states that have a positive partial transpose (see for example the state called $\overrightarrow{\tau}^{(5)}$ in~\cite[Fig. 1]{vollbrecht2002activating}). However, the corresponding completely positive and completely copositive maps satisfy the PPT squared conjecture:

\begin{prop}
For any pair of completely positive maps $T_1,T_2:\M_{d}\otimes \M_{d}\ra\M_{d}\otimes \M_d$ both completely copositive and satisfying the symmetry \eqref{equ:SymmCP} its composition $T_2\circ T_1$ is entanglement breaking.
\end{prop}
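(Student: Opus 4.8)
The plan is to use the symmetry \eqref{equ:SymmCP} to collapse the problem to a computation in a low-dimensional commutative matrix algebra. First I would classify the maps involved. By Schur's lemma applied to the representations $U\mapsto U\otimes U$ and $V\mapsto V\otimes\bar V$ of the unitary group---equivalently, by the classification of states with this symmetry in~\cite{vollbrecht2002activating}---the Choi matrix of a completely positive $T$ satisfying \eqref{equ:SymmCP} lies, after the reordering of tensor factors, in $\mathrm{span}\{\one_{d^2},F\}\otimes\mathrm{span}\{\one_{d^2},\omega_d\}$, where $F$ is the flip on $\C^d\otimes\C^d$. Writing $P_s=\tfrac12(\one_{d^2}+F)$ and $P_a=\tfrac12(\one_{d^2}-F)$ this says
\[
T=\sum_{k\in\{s,a\},\,l\in\{+,-\}}a_{kl}\,\Phi_k\otimes\Psi_l,\qquad a_{kl}\ge 0,
\]
where $\Phi_s,\Phi_a\colon\M_d\to\M_d$ have Choi matrices $P_s,P_a$ (so $\Phi_s(X)=\tfrac12(\text{Tr}[X]\one_d+X^T)$, $\Phi_a(X)=\tfrac12(\text{Tr}[X]\one_d-X^T)$) and $\Psi_+,\Psi_-$ have Choi matrices $\omega_d/d$ and $\one_{d^2}-\omega_d/d$. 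Using $F^{\Gamma}=\omega_d$ and the fact that the relevant operators pairwise commute, complete copositivity of $T$ (positivity of $(\ident\otimes\vartheta_{d^2})(C_T)$) turns into a handful of explicit linear inequalities among the $a_{kl}$, read off from the signs on the joint eigenspaces of $\omega_d$ and $F$.

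Next I would observe that the composite inherits a weaker symmetry. Since a real orthogonal matrix $O$ satisfies $O^T=O^{-1}$, combining \eqref{equ:SymmCP} for $T_1$ and for $T_2$ with $U=O$ and arbitrary $V\in\mathcal U_d$ gives
\[
(\text{Ad}_O\otimes\text{Ad}_V)\circ(T_2\circ T_1)\circ(\text{Ad}_{O^{-1}}\otimes\text{Ad}_{V^\dagger})=T_2\circ T_1 .
\]
Hence $C_{T_2\circ T_1}$ lies in the commutant of this larger group, namely the commutative algebra $\mathrm{span}\{\one_{d^2},F,\omega_d\}\otimes\mathrm{span}\{\one_{d^2},\omega_d\}$ (a Brauer-type algebra on the first pair of factors). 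So $C_{T_2\circ T_1}=\sum_{i,j}c_{ij}\,Q_i\otimes R_j$ is a nonnegative combination of the six minimal projections, with $Q_0=\omega_d/d$, $Q_1=P_s-\omega_d/d$, $Q_2=P_a$ and $R_0=\omega_d/d$, $R_1=\one_{d^2}-\omega_d/d$; the coefficients $c_{ij}$ are bilinear in the parameters of $T_1$ and $T_2$ and are computed from the composition tables of $\{\ident,\vartheta_d,\text{Tr}[\cdot]\one_d\}$ and of $\{\ident,\text{Tr}[\cdot]\one_d\}$ (e.g.\ $\vartheta_d\circ\vartheta_d=\ident$ and $(\text{Tr}[\cdot]\one_d)\circ(\text{Tr}[\cdot]\one_d)=d\,\text{Tr}[\cdot]\one_d$).

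Finally I would check separability. The separable matrices inside this six-parameter $O(d)\times\mathcal U_d$-invariant family are described by finitely many inequalities on the $c_{ij}$ (classified in~\cite{vollbrecht2002activating}); note this class does contain positive-partial-transpose entangled states, so one genuinely needs the separability conditions and not merely the PPT ones, and in any case $T_2\circ T_1$ need not even be completely copositive. The heart of the argument is then the verification that whenever $T_1$ and $T_2$ obey the copositivity inequalities from Step~1 the resulting $c_{ij}$ automatically lie in this separable region---equivalently, producing an explicit product-state decomposition of $C_{T_2\circ T_1}$---so that $T_2\circ T_1$ is entanglement breaking via the Choi--Jamiolkowski isomorphism. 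The main obstacle is exactly this verification, made delicate by the fact that $T_2\circ T_1$ does \emph{not} in general satisfy the full symmetry \eqref{equ:SymmCP} (composing copositivity-violating ``symmetric'' maps such as $\vartheta_d\otimes\ident$ already produces $\ident_{d^2}$, which is far from entanglement breaking), so Step~1 cannot be reapplied to the composite and one must show that complete copositivity is precisely what confines $C_{T_2\circ T_1}$ to the separable part of the enlarged invariant family; carrying this out by invoking the separability characterisation of~\cite{vollbrecht2002activating} rather than by a bare-hands decomposition is likely the cleanest route.
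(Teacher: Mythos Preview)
Your plan is workable in outline but you have overcomplicated it by missing two observations that the paper exploits, and one of your claims is simply false.

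First, the composite has the \emph{full} $\mathcal U_d\times\mathcal U_d$ covariance, not merely $O(d)\times\mathcal U_d$. Apply \eqref{equ:SymmCP} to $T_2$ with the unitary $U$ and to $T_1$ with the unitary $\overline U$: the middle factors are $(\text{Ad}_{U^T}\otimes\text{Ad}_{V^\dagger})\circ(\text{Ad}_{\overline U}\otimes\text{Ad}_V)=\text{Ad}_{U^T\overline U}\otimes\ident=\ident$ because $U^T\overline U=\overline{U^\dagger U}=\one_d$ for every unitary $U$, not just for orthogonal ones. Hence
\[
(\text{Ad}_U\otimes\text{Ad}_V)\circ(T_2\circ T_1)\circ(\text{Ad}_{U^\dagger}\otimes\text{Ad}_{V^\dagger})=T_2\circ T_1
\]
for all $U,V\in\mathcal U_d$, and after reordering $C_{T_2\circ T_1}$ is $(U\otimes U\otimes V\otimes V)$-invariant. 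This lands you in a four-parameter family, not your six-parameter Brauer-type algebra, and for this family it is known (\cite[Example~7]{vollbrecht2001entanglement}) that positive partial transpose already implies separability. No case analysis of the separability polytope is needed.

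Second, your assertion that ``$T_2\circ T_1$ need not even be completely copositive'' is wrong: $\vartheta_{d^2}\circ(T_2\circ T_1)=(\vartheta_{d^2}\circ T_2)\circ T_1$ is a composition of completely positive maps, hence completely positive. So $C_{T_2\circ T_1}$ is automatically PPT. Combined with the previous paragraph this finishes the proof immediately.

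In short, the paper's argument is a two-line symmetry computation plus a citation; the verification you flag as ``the main obstacle'' evaporates once you use $\overline U$ rather than $U$ for $T_1$ and observe that complete copositivity is preserved under composition.
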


\begin{proof}
Note that for any unitaries $U,V\in\mathcal{U}_d$ we have 
\begin{align*}
T_2\circ T_1 &= (\text{Ad}_U\otimes \text{Ad}_V)\circ T_2\circ (\text{Ad}_{U^T}\otimes \text{Ad}_{V^\dagger})\circ\cdots \\
&\cdots\circ (\text{Ad}_{\overline{U}}\otimes \text{Ad}_V)\circ T_1\circ (\text{Ad}_{U^\dagger}\otimes \text{Ad}_{V^\dagger}) \\
&= (\text{Ad}_U\otimes \text{Ad}_V)\circ T_2\circ T_1\circ (\text{Ad}_{U^\dagger}\otimes \text{Ad}_{V^\dagger}).
\end{align*}
By the Choi-Jamiolkowski isomorphism (and after a suitable reordering of the tensor factors) this shows that 
\[
(U\otimes U\otimes V\otimes V) C_{T_2\circ T_1} (U\otimes U\otimes V\otimes V)^\dagger = C_{T_2\circ T_1}.
\]
It has been shown in \cite[Example 7]{vollbrecht2001entanglement} that positive matrices with this symmetry are separable if and only they have positive partial transpose. This shows that $T_2\circ T_1$ is entanglement breaking.  
\end{proof}

Another example of a completely positive map with unitary symmetries similar to \eqref{equ:SymmCP} that is completely copositive and not entanglement breaking can be obtained from \cite{audenaert2001asymptotic}. For this let $\alpha_d\in (\M_{d}\otimes \M_{d})^+$ denote the normalized projector onto the antisymmetric subspace of $\C^d\otimes \C^d$ given by
\[
\alpha_d := \frac{1}{d(d-1)}\lb \one_d\otimes \one_d - \mathbbm{F}_d\rb,
\]
where $\mathbbm{F}_d\in \M_d\otimes \M_d$ is the flip operator, i.e.~the Choi matrix of the transposition $\mathbbm{F}_d = C_{\vartheta_d}$. Similarly, let $\sigma_d\in (\M_{d}\otimes \M_{d})^+$ denote the normalized projector onto the symmetric subspace of $\C^d\otimes \C^d$ given by 
\[
\sigma_d := \frac{1}{d(d+1)}\lb \one_d\otimes \one_d + \mathbbm{F}_d\rb.
\]
In \cite{audenaert2001asymptotic} the quantum state $\tau\in (\M_{d_A}\otimes \M_{d_B})^{\otimes n}$ for $d=d_A=d_B$ (as a bipartite state with respect to the bipartition into systems labelled $A$ and $B$) given by
\begin{equation}
\tau^n:=\frac{d^n}{d^n+(d+2)^n} \alpha_d^{\otimes n}+ \frac{(d+2)^n}{d^n+(d+2)^n}\lb \frac{1}{d+2} \alpha_d + \frac{d+1}{d+2} \sigma_d\rb^{\otimes n}
\label{equ:Taun}
\end{equation}
arises as the minimizer of the relative entropy distance of $\alpha^{\otimes n}_d$ to the set of states with positive partial transpose (see~\cite{audenaert2001asymptotic} for details). In particular $\tau^n$ has positive partial transpose and for large enough $n\in\N$ it is entangled since otherwise its regularized relative entropy distances to the states with positive partial tranpose and to the separable states would coincide. It has been shown in~\cite{christandl2012entanglement} that this is not the case.

Now, denote by $T_n:\M^{\otimes n}_d\ra\M^{\otimes n}_{d}$ the completely positive map with Choi matrix $C_{T_n} = \tau_n$. By the previous discussion $T_n$ is completely copositive, but not entanglement breaking. However, we have the following theorem:

\begin{prop}
The completely positive map $T_n:\M^{\otimes n}_d\ra\M^{\otimes n}_{d}$ with Choi matrix \eqref{equ:Taun} is completely copositive and not entanglement breaking, but the composition $T_n\circ T_n$ is entanglement breaking.
\end{prop}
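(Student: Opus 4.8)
The plan is to make the composition $T_n\circ T_n$ completely explicit by recognising $T_n$ as a convex combination of tensor powers of Werner--Holevo maps, which reduces the claim to checking that a short list of generalized depolarizing maps is entanglement breaking. Using the Choi--Jamio{\l}kowski isomorphism together with the identities $\alpha_d = \frac{1}{d(d-1)}\lb\one_d\otimes\one_d - \mathbbm{F}_d\rb = C_{\frac{1}{d(d-1)}W_1}$ and $\sigma_d = \frac{1}{d(d+1)}\lb\one_d\otimes\one_d + \mathbbm{F}_d\rb = C_{\frac{1}{d(d+1)}W_{-1}}$, where $W_{\pm1}$ are the Holevo--Werner maps of \eqref{equ:HWMaps}, and since $C_{S^{\otimes n}} = C_S^{\otimes n}$ after the usual reordering of tensor factors, the equality $C_{T_n} = \tau^n$ from \eqref{equ:Taun} and injectivity of the Choi map give $T_n = \lambda_1 A^{\otimes n} + \lambda_2 B^{\otimes n}$ with $\lambda_1 = \frac{d^n}{d^n+(d+2)^n}$, $\lambda_2 = \frac{(d+2)^n}{d^n+(d+2)^n}$, and $A := \frac{1}{d(d-1)}W_1$, $B := \frac{1}{d(d-1)(d+2)}W_1 + \frac{1}{d(d+2)}W_{-1}$ completely positive maps on $\M_d$. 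Since composition distributes over tensor products of maps acting on the same factors,
\[
T_n\circ T_n = \lambda_1^2(A\circ A)^{\otimes n} + \lambda_1\lambda_2(A\circ B)^{\otimes n} + \lambda_1\lambda_2(B\circ A)^{\otimes n} + \lambda_2^2(B\circ B)^{\otimes n}.
\]

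The crucial point is that composing transposition-type maps annihilates the transpose: a direct computation gives $W_p\circ W_q(X) = (d-p-q)\,\text{Tr}\lbr X\rbr\one_d + pq\,X$. Hence each of the four single-copy maps $A\circ A$, $A\circ B$, $B\circ A$, $B\circ B$ has the form $X\mapsto a\,\text{Tr}\lbr X\rbr\one_d + b\,X$ for explicit scalars $a,b$, so its Choi matrix is the isotropic-type operator $a\,\one_d\otimes\one_d + b\,\omega_d$. For a completely positive map of this shape, being entanglement breaking is equivalent to separability of this (isotropic) Choi matrix, which is in turn equivalent to positivity of its partial transpose $a\,\one_d\otimes\one_d + b\,\mathbbm{F}_d$, i.e.\ to $b\le a$. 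I would then plug in the coefficients and verify that $b\le a$ holds in all four cases as soon as $d\ge3$ (the binding case is $A\circ A$, where $b\le a\iff 1\le d-2$). With this, all four single-copy maps are entanglement breaking, hence so are their $n$-fold tensor powers, hence so is their convex combination $T_n\circ T_n$, using that entanglement breaking maps are closed under tensor products and sums.

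The one case where this breaks down is $d=2$, where $A\circ A = \tfrac14\,\ident_{\M_2}$ is not entanglement breaking; this is the (minor) obstacle. It is dispatched by noting that for $d=2$ the map $T_n$ is \emph{itself} entanglement breaking: there $\frac{1}{d+2}\alpha_2 + \frac{d+1}{d+2}\sigma_2 = \tfrac14\one_4$, so $\tau^n$ equals, up to the local unitary sending $\ket{\psi^-}^{\otimes n}$ to $\ket{\Omega_{2^n}}$, an operator proportional to $\one_{2^n}\otimes\one_{2^n} + \omega_{2^n}$ — the separable isotropic state lying exactly on the separability threshold — whence $C_{T_n}$ is separable and $T_n\circ T_n$ is trivially entanglement breaking (equivalently, one restricts to $d\ge3$, which is anyway the regime in which $\tau^n$ is entangled). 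The conceptual heart of the argument is the observation that a composition of two ``depolarizing $+$ transposition'' maps is transposition-free and isotropic, hence entanglement breaking throughout the relevant parameter range; everything else is bookkeeping of the four coefficient pairs $(a,b)$.
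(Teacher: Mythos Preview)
Your proposal is correct and follows essentially the same route as the paper: write $T_n$ as a convex combination of $n$-fold tensor powers of Holevo--Werner-type maps, compose, and recognise the resulting single-copy maps as isotropic (hence entanglement breaking by the PPT criterion). The paper streamlines the bookkeeping by keeping $S$ (the map with Choi matrix $\sigma_d$) separate rather than folding it into your $B$: since $\sigma_d$ is separable, $S$ is already entanglement breaking and three of the four cases ($A\circ S$, $S\circ A$, $S^2$) are immediate, leaving only $A^2$ to be checked explicitly. Your separate handling of $d=2$ is in fact more careful than the paper, which tacitly works in the regime $d\ge3$; one small imprecision is that the PPT condition for $a\,\one\otimes\one+b\,\omega_d$ reads $|b|\le a$ rather than $b\le a$, though this does not affect your conclusion.
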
 

\begin{proof}
Denote by $S:\M_d\ra\M_d$ the completely positive map with Choi matrix $C_S = \sigma_d$, and by $A:\M_d\ra\M_d$ the completely positive map with Choi matrix $C_A = \alpha_d$. It is well known that $\sigma_d$ is separable (see for example \cite[p.29]{horodecki2009quantum}) and therefore $S$ is entanglement breaking. Note that $A = \frac{1}{d(d-1)}W_1$ with the Holevo-Werner map as in \eqref{equ:HWMaps}. Composing with itself gives
\[
A^2(X) = \frac{1}{d^2(d-1)^2}\lb (d-2)\text{Tr}\lbr X\rbr\one_d + X\rb
\]
for any $X\in\M_d$. The Choi matrix of $A^2$ is an (unnormalized) isotropic state and therefore separable since it clearly has positive partial transpose (see for example \cite[p.29]{horodecki2009quantum}). This shows that $A^2$ is entanglement breaking. By \eqref{equ:Taun} we see that 
\[
T_n = \frac{d^n}{d^n+(d+2)^n} A^{\otimes n}+ \frac{(d+2)^n}{d^n+(d+2)^n}\lb \frac{1}{d+2} A + \frac{d+1}{d+2} S\rb^{\otimes n}
\]
and therefore $T^2_n$ will be a convex combination of tensor products of the completely positive maps $A^2, A\circ S, S\circ A, S^2$ all of which are entanglement breaking. Therefore, $T^2_n$ is entanglement breaking as well.

\end{proof}

\subsection{Gaussian channels}
\label{sec:PPTSquGauss}

Before we can define Gaussian channels we have to introduce some formalism. For more details on Gaussian channels and quantum information theory with infinite-dimensional systems see~\cite{holevo2013quantum}. Let $\mathcal{H}=L^2(\R^{n})$ denote the standard $L^2$-space of complex-valued, square-integrable functions on $\R^n$. On this space we can consider two groups of unitary operators $V_x, U_y\in \mathcal{U}(\mathcal{H})$ parametrized by $x,y\in\R^n$ given by
\[
V_x \psi(\xi) = \exp(i\braket{\xi}{x})\psi(\xi)\hspace{0.5cm}\text{ and }\hspace{0.5cm}U_y\psi(\xi) = \psi(\xi + y),
\] 
for $\psi\in \mathcal{H}$ and $x,y,\xi\in \R^n$. Denoting $z=(x_1,y_1,\ldots ,x_n,y_n)\in\R^{2n}$ we can define the system of Weyl unitaries as 
\[
W(z) = \exp(\frac{i}{2}\braket{y}{x})V_xU_y.
\]
These unitaries satisfy the following (Weyl-Segal CCR) relation
\[
W(z)W(z') = \exp(-\frac{i}{2}\sigma_n(z,z'))W(z+z'),
\]
where 
\[
\sigma_n(z,z') = z^T\sigma_n z'
\]
denotes the canonical symplectic form represented by the matrix
\[
\sigma_n = \bigoplus^n_{i=1} \begin{pmatrix} 0 & 1 \\ -1 & 0\end{pmatrix}\in \M(\R^{2n}).
\]
Given a quantum state $\rho\in \mathcal{S}_1\lb\mathcal{H}\rb^+$, i.e. a positive trace-class operator with trace $1$, we can define its characteristic function as
\[
z\mapsto \text{tr}\lb \rho W(z)\rb.
\]
The characteristic function determines the quantum state $\rho$ uniquely. Now we can define the class of Gaussian states as the quantum states $\rho\in \mathcal{S}_1\lb\mathcal{H}\rb^+$ with Gaussian characteristic functions, i.e.~of the form 
\[
z\mapsto \text{tr}\lb \rho W(z)\rb = \exp(i\braket{m}{z} - \frac{1}{2}z^T \gamma z),
\] 
for some $m\in\R^{2n}$ called the mean of $\rho$ and a symmetric matrix $\gamma\in \M\lb\R^{2n}\rb_{\text{sym}}$ called the covariance matrix of $\rho$.  

An important class of quantum channels $T:\mathcal{S}_1(\mathcal{H})\ra \mathcal{S}_1(\mathcal{H})$ are Gaussian channels. These are the trace-preserving completely positive maps preserving the set of Gaussian states. It turns out that every Gaussian channel is uniquely determined on the set of Gaussian states. Since a Gaussian state is uniquely determined by its mean and covariance matrix it is possible to define a Gaussian channel in terms of these quantities as well:

\begin{thm}[Gaussian channels, see~\cite{holevo2013quantum}]
For any Gaussian channel $T:\mathcal{S}_1(\mathcal{H})\ra \mathcal{S}_1(\mathcal{H})$ there exist $X,Y\in \M_{2n}(\R)$ with $Y=Y^T$ and satisfying
\begin{equation}
Y + i(\sigma_n - X\sigma_n X^T)\geq 0,
\label{equ:Gauss}
\end{equation}
such that for any Gaussian state $\rho\in \mathcal{S}_1(\mathcal{H})$ with covariance matrix $\gamma\in \M(\R^{2n})_{\text{sym}}$ the covariance matrix of $T(\rho)$ is given by
\[
\gamma' = X\gamma X^T + Y.
\]
\label{thm:Gauss}
\end{thm}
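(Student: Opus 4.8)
The plan is to pass to the Heisenberg picture. Since the Weyl operators $W(z)$, $z\in\R^{2n}$, separate trace-class operators (their coherent-state expectations already determine such an operator uniquely), the Gaussian channel $T$ is completely determined by the values of its unital, completely positive adjoint $T^{*}$ on the family $\{W(z)\}$, and the identity $\text{tr}\big(T(\rho)W(z)\big)=\text{tr}\big(\rho\,T^{*}(W(z))\big)$ lets us transfer everything to characteristic functions. The statement then reduces to two points: (i) identifying the normal form of $T^{*}(W(z))$, and (ii) turning complete positivity of $T$ into the matrix inequality \eqref{equ:Gauss}.

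For (i) I would use (re-deriving it if necessary) the structural fact that preservation of the Gaussian states forces
\[
T^{*}(W(z)) = W(Kz)\,\exp\!\big(i\,\ell^{T}z-\tfrac12 z^{T}Nz\big)
\]
for some real-linear $K$ on $\R^{2n}$, some $\ell\in\R^{2n}$ and some real symmetric $N$: testing $z\mapsto\text{tr}(\rho\,T^{*}(W(z)))$ against coherent states (which are Gaussian and separate trace-class operators) and demanding that every such function be a Gaussian characteristic function forces $T^{*}(W(z))$ to be a scalar multiple of a single Weyl operator; continuity in $z$ confines the scalar prefactor to the exponential of an at-most-quadratic polynomial; and the Weyl relation $W(z)W(z')=e^{-\frac{i}{2}\sigma_n(z,z')}W(z+z')$ forces $z\mapsto Kz$ to be linear. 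With the normal form in hand, the covariance update is a one-line computation: for a Gaussian $\rho$ with mean $m$ and covariance $\gamma$,
\[
\text{tr}\big(T(\rho)W(z)\big)=\exp\!\big(i(K^{T}m+\ell)^{T}z-\tfrac12 z^{T}(K^{T}\gamma K+N)z\big),
\]
so $T(\rho)$ is Gaussian with covariance $\gamma'=K^{T}\gamma K+N$; setting $X:=K^{T}$, $Y:=N$ gives $\gamma'=X\gamma X^{T}+Y$ with $Y=Y^{T}$ (the mean transforms affinely, which the statement ignores).

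For the constraint \eqref{equ:Gauss}, recall that for a Gaussian channel complete positivity is equivalent to $(\text{id}\otimes T)$ sending a two-mode squeezed vacuum $\psi_r$ of some fixed squeezing $r>0$ --- which has full Schmidt rank --- to a legitimate quantum state. Since $\psi_r$ is Gaussian, so is the output, with covariance of block form $\bigl(\begin{smallmatrix}\gamma_r & C_r X^{T}\\ X C_r^{T} & X\gamma_r X^{T}+Y\end{smallmatrix}\bigr)$; by the quantum Bochner--Khinchin theorem (characterising which continuous phase-space functions are characteristic functions of states) this output is a state iff its covariance $\Gamma$ obeys the bosonic uncertainty relation $\Gamma+i(\sigma_n\oplus\sigma_n)\geq 0$. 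A Schur-complement computation with respect to the invertible reference block $\gamma_r+i\sigma_n$ --- using $\sigma_n^{2}=-\id$ and the fact that the two-mode-squeezed correlation matrix conjugates $\sigma_n$ to $-\sigma_n$ --- collapses this uncertainty relation, independently of $r$, to exactly $Y+i(\sigma_n-X\sigma_n X^{T})\geq 0$. Conversely, given any $(X,Y)$ with $Y=Y^{T}$ satisfying \eqref{equ:Gauss}, the normal form above defines a completely positive trace-preserving Gaussian map with that covariance action, so the conditions are also sufficient.

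The main obstacle is step (ii): unlike the rest of the paper it is genuinely infinite-dimensional, so extracting the finite matrix inequality from complete positivity cannot use the usual finite-dimensional Choi-matrix toolkit and instead rests on the quantum Bochner--Khinchin theorem together with some care about the non-normalisable maximally entangled ``state''. Step (i) is also slightly delicate --- one must rule out nonlinear or non-Gaussian behaviour of $T^{*}$ --- but this follows from continuity and the canonical commutation relations.
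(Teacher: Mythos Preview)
The paper does not give its own proof of this theorem: it is quoted as a background result with the attribution ``see~\cite{holevo2013quantum}'' and used without argument. So there is no in-paper proof to compare your proposal against.

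On its own merits, your outline follows the standard route one finds in Holevo's book: pass to the Heisenberg picture, pin down the action $T^{*}(W(z))=e^{i\ell^{T}z-\frac12 z^{T}Nz}W(Kz)$ on Weyl operators, read off the covariance transformation, and then translate complete positivity into the matrix inequality via the uncertainty relation for the output of $\ident\otimes T$ on a two-mode squeezed state. Two points are worth tightening. First, the claim that the Schur-complement reduction yields $Y+i(\sigma_n-X\sigma_n X^{T})\geq 0$ \emph{independently of $r$} is not quite right as stated: for each finite squeezing $r$ you obtain an $r$-dependent inequality, and the sharp condition \eqref{equ:Gauss} emerges only in the limit $r\to\infty$ (equivalently, by demanding the inequality for all $r$). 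Second, the structural step (i) --- that preservation of Gaussians forces the displayed normal form with \emph{linear} $K$ and quadratic exponent --- is where most of the analytic work hides; ``continuity and the CCR'' is the right slogan, but a complete argument needs the quantum Bochner--Khinchin theorem and a careful use of the irreducibility of the Weyl system, not just testing on coherent states. These are refinements rather than gaps in the overall strategy.
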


Many properties of Gaussian channels can be defined in terms of the corresponding matrices $X,Y$ from Theorem \ref{thm:Gauss}. In the following we will focus on these matrices (also called parameters of the channel) and omit the change of the mean of Gaussian states under the action of a Gaussian channel. 

A Gaussian channel $T:\mathcal{S}_1(\mathcal{H})\ra \mathcal{S}_1(\mathcal{H})$ with parameters $X,Y$ is 
\begin{itemize}
\item \textbf{completely copositive} if and only if 
\begin{equation}
Y - i(\sigma_n + X\sigma_n X^T)\geq 0.
\label{equ:GausscoCP}
\end{equation}
\item \textbf{entanglement breaking} if and only if there exist $N,M\in \M_{2n}(\R)$ with $Y=N+M$ satisfying
\begin{equation}
M\geq i\sigma_n \text{ and } N\geq iX\sigma_n X^T.
\label{equ:CondGaussEB}
\end{equation}
\end{itemize}

Proofs for the previous equivalences can be found in~\cite{holevo2013quantum}. Now we can prove the PPT squared conjecture for Gaussian channels: 

\begin{thm}
For any pair of Gaussian channels $T_1,T_2:\mathcal{S}_1(\mathcal{H})\ra \mathcal{S}_1(\mathcal{H})$ each completely copositive, the composition $T_2\circ T_1$ is entanglement breaking. 
\end{thm}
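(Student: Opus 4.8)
The plan is to argue entirely at the level of the parameter matrices $(X,Y)$ attached to Gaussian channels by Theorem \ref{thm:Gauss}, and to produce an explicit decomposition of the $Y$-parameter of $T_2\circ T_1$ witnessing the entanglement breaking criterion \eqref{equ:CondGaussEB}. First I would record the composition law: if $T_1$ has parameters $(X_1,Y_1)$ and $T_2$ has parameters $(X_2,Y_2)$, then $T_2\circ T_1$ is again Gaussian with parameters $X=X_2X_1$ and $Y=X_2Y_1X_2^T+Y_2$, which follows by composing the covariance transformations $\gamma\mapsto X_i\gamma X_i^T+Y_i$. Next I would translate the two hypotheses into matrix inequalities: $T_1$ being a Gaussian channel gives, via \eqref{equ:Gauss}, that $Y_1+i(\sigma_n-X_1\sigma_nX_1^T)\geq 0$; and $T_2$ being completely copositive gives, via \eqref{equ:GausscoCP}, that $Y_2-i(\sigma_n+X_2\sigma_nX_2^T)\geq 0$. (All of $i\sigma_n$, $iX_j\sigma_nX_j^T$ and $Y_j$ are complex Hermitian, so these inequalities are meaningful.)

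The key step is then the ansatz
\[
M := Y_2-iX_2\sigma_nX_2^T,\qquad N := X_2Y_1X_2^T+iX_2\sigma_nX_2^T,
\]
which manifestly satisfies $M+N=X_2Y_1X_2^T+Y_2=Y$. It remains to verify the two conditions in \eqref{equ:CondGaussEB}. The condition $M\geq i\sigma_n$ is, after rearranging, precisely the complete copositivity inequality \eqref{equ:GausscoCP} for $T_2$, since $M-i\sigma_n=Y_2-i(\sigma_n+X_2\sigma_nX_2^T)$. For the condition $N\geq iX\sigma_nX^T$, I would use $X=X_2X_1$ to compute
\[
N-iX\sigma_nX^T = X_2\bigl(Y_1+i(\sigma_n-X_1\sigma_nX_1^T)\bigr)X_2^T,
\]
which is positive semidefinite because the bracketed matrix is positive semidefinite by \eqref{equ:Gauss} applied to $T_1$, and conjugation by the real matrix $X_2$ (for which $X_2^T=X_2^\dagger$) preserves positivity. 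By \eqref{equ:CondGaussEB} this exhibits $T_2\circ T_1$ as entanglement breaking.

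I do not expect a serious obstacle here: once the decomposition above is written down the verification is a one-line matrix manipulation. The only point that requires care is the asymmetry between the two channels — the argument uses the \emph{channel} condition for $T_1$ but the \emph{complete copositivity} condition for $T_2$ — and the realization that a naive symmetric split (keeping $Y_2$ intact in $M$ and $X_2Y_1X_2^T$ intact in $N$) does not yield valid $M,N$; it is precisely the correction term $iX_2\sigma_nX_2^T$, moved from one summand to the other, that makes both inequalities close. Identifying this ansatz is the one genuinely non-mechanical part of the proof.
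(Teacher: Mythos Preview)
There is a genuine gap. The entanglement breaking criterion \eqref{equ:CondGaussEB} requires the decomposition $Y=M+N$ with $M,N\in\M_{2n}(\R)$, i.e.\ \emph{real} matrices. Your choices $M=Y_2-iX_2\sigma_nX_2^T$ and $N=X_2Y_1X_2^T+iX_2\sigma_nX_2^T$ are Hermitian but not real, since $X_2\sigma_nX_2^T$ is a nonzero real antisymmetric matrix. This is not a cosmetic issue: if complex Hermitian $M,N$ were admissible in \eqref{equ:CondGaussEB}, your argument (which uses only \eqref{equ:Gauss} for $T_1$ and \eqref{equ:GausscoCP} for $T_2$) would go through with $T_1=\ident$, ``proving'' that every completely copositive Gaussian channel is already entanglement breaking --- which is false.

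The remedy is exactly the ``naive'' split you discarded. With $M=Y_2$ and $N=X_2Y_1X_2^T$ (both real symmetric), the two required inequalities do hold once you use \emph{all four} available conditions rather than just two. Adding \eqref{equ:Gauss} and \eqref{equ:GausscoCP} for $T_1$ gives $Y_1\geq iX_1\sigma_nX_1^T$, and conjugating by $X_2$ yields $N\geq iX\sigma_nX^T$. Adding \eqref{equ:GausscoCP} for $T_2$ to the transpose of \eqref{equ:Gauss} for $T_2$ (using $\sigma_n^T=-\sigma_n$) gives $Y_2\geq i\sigma_n$, i.e.\ $M\geq i\sigma_n$. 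This is the paper's argument. Your observation that only two of the four hypotheses seem to be needed was the warning sign: the complete copositivity of $T_1$ and the channel condition on $T_2$ are genuinely used, and they enter precisely to make the real split work.
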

\begin{proof}
For $j\in\lset 1,2\rset$ let $X_j,Y_j\in \M_{2n}(\R)$ with $Y_j=Y_j^T$ denote the parameters of the Gaussian channel $T_j$ (see Theorem \ref{thm:Gauss}). Then the Gaussian channel $T_2\circ T_1$ corresponds to the transformation 
\[
\gamma \mapsto X_2X_1\gamma X_1^TX_2^T + X_2Y_1X_2^T + Y_2
\]
in terms of covariance matrices. To show that this channel is entanglement breaking we choose $N = X_2Y_1X_2^T$ and $M = Y_2$. Adding \eqref{equ:Gauss} and \eqref{equ:GausscoCP} for the channel $T_1$, and multiplying by $X_2$ from the left and $X_2^T$ from the right gives
\[
N = X_2Y_1X_2^T\geq iX_2X_1\sigma_n x_1^TX_2^T.
\] 
Similarly, adding \eqref{equ:GausscoCP} and the transpose of \eqref{equ:Gauss} for the channel $T_2$ gives
\[
M = Y_2\geq i\sigma_n,
\]
where we used that $Y_2=Y^T_2$. By \eqref{equ:CondGaussEB} the previous two inequalities show that the composition $T_2\circ T_1$ is entanglement breaking.

\end{proof}

\section{Conclusion}

We have shown how the Schmidt number can be used to quantify the number of compositions after which certain completely positive maps (or even more general $k$-positive maps) become entanglement breaking. For $n$-entanglement breaking maps, i.e.~maps that break the entanglement with respect to any ancilla system of dimension $n\geq 2$, the Schmidt number iteration technique puts a successively decreasing upper bound on the Schmidt number after each application of such a map to part of a bipartite positive matrix. This leads to an explicit bound on the number of compositions after which an $n$-entanglement breaking map becomes fully entanglement breaking. 

We presented some classes of maps where this technique can be applied, including completely positive maps between matrix algebras of dimension three that are also completely copositive. This proves the PPT squared conjecture in this dimension. Unfortunately, in higher dimensions not all completely positive maps that are completely copositive are even $2$-entanglement breaking. Therefore, further work has to be done to apply our techniques in the same generality as the PPT squared conjecture. A possibility could be to find a fixed number $N\in\N$ (possibly depending on the dimension $d$) such that for every completely positive map $T:\M_d\ra\M_d$ that is also completely copositive, the composition $T^N$ is $2$-entanglement breaking. Using the Schmidt number iteration technique from Section \ref{sec:SNTech} this would imply that the composition $T^{N(d-1)}$ is entanglement breaking. It should be noted that $N=2$ would already follow from a special case of the PPT squared conjecture: If for any completely positive maps $T:\M_d\ra\M_d$ and $S:\M_2\ra\M_d$ both of which completely copositive their composition $T\circ S$ is entanglement breaking, then the composition $T\circ T$ would be $2$-entanglement breaking in general. 

Finally, proving Conjecture \ref{conj:MaxSNPPT} on the Schmidt number of positive matrices with positive partial transposition would imply that for any completely positive map $T:\M_d\ra\M_d$ that is also completely copositive the composition $T^{2^{d-1}-1}$ is entanglement breaking.

\section*{Acknowledgements}
We thank Daniel Cariello for pointing out how his results \cite{cariello2014separability,cariello2015does} together with the techniques from \cite{huber2018high} imply Corollary \ref{cor:Cariello}. MC and AMH acknowledge financial support from the European Research Council (ERC Grant Agreement no 337603) and VILLUM FONDEN via the QMATH Centre of Excellence (Grant No. 10059). MW acknowledges the hospitality of the QMATH Centre.

\appendix

\section{Schmidt number bounds from block structure}
\label{sec:CWDEC}

To make our article selfcontained, we will review here some results introduced in \cite{chen2017schmidt} and \cite{huber2018high} to upper bound the Schmidt number of bipartite quantum states. These results are based on a technique (called Choi decomposition) from \cite{yang2016all} allowing to decompose a $k$-positive map for $k\geq 2$ into the sum of a completely positive map and a $(k-1)$-positive map with reduced input dimension. To present these results, we will need the notion of a trivial lifting of a linear map (see~\cite{yang2016all}).

\begin{defn}[Trivial lifting~\cite{yang2016all}]
 A linear map $L:\M_{d_1}\ra \M_{d_2}$ is called an \emph{$\mathcal{S}$-trivial lifting} for a set $\mathcal{S}\subseteq \lset 1,\ldots ,d_1\rset$ iff $L\lb\proj{i}{j}\rb=0$ whenever $i\in \mathcal{S}$ or $j\in \mathcal{S}$. 
\end{defn}

We will need the so called Choi decomposition from~\cite{yang2016all}.

\begin{thm}[Choi decomposition]
For $k\in \lset 2,\ldots, \min(d_1,d_2)\rset$ and $s\in \lset 1,\ldots ,d_1\rset$, any $k$-positive map $P:\M_{d_1}\ra \M_{d_2}$ can be written as 
\[
P = Q + T\, ,
\]
where $T:\M_{d_1}\ra \M_{d_2}$ is completely positive and $Q:\M_{d_1}\ra \M_{d_2}$ is a $(k-1)$-positive $\lset s\rset$-trivial lifting.
\label{thm:Decomp}
\end{thm}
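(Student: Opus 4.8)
The plan is to reproduce the ``peeling off a completely positive summand'' argument due to Choi, in the form used in~\cite{yang2016all}. By relabelling the computational basis of $\C^{d_1}$ we may assume $s=d_1$, and we abbreviate $\ket{e}:=\ket{d_1}$. Since $k\geq 2$ the map $P$ is in particular positive, hence $\rho:=P(\proj{e}{e})\in\M_{d_2}^{+}$; for $\ket{v}\in\C^{d_1}$ put $c_{v}:=P(\proj{v}{e})$, which is linear in $\ket{v}$ and satisfies $c_{v}^{\dagger}=P(\proj{e}{v})$ because positive maps preserve Hermiticity. The first step is to extract the structural consequence of $2$-positivity: evaluating $\ident_{2}\otimes P$ on $\ket{1}\otimes\ket{v}+\ket{2}\otimes\ket{e}$ shows that the operator block matrix $\begin{pmatrix}P(\proj{v}{v})&c_{v}\\ c_{v}^{\dagger}&\rho\end{pmatrix}$ is positive, so by the standard range criterion for positive block matrices (the same fact used in the main text) we get $\text{Ran}\lb c_{v}^{\dagger}\rb\subseteq\text{Ran}\lb\rho\rb$ for every $\ket{v}$. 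Writing $\rho^{-1}$ for the Moore--Penrose pseudoinverse and setting $A_{i}:=c_{i}\rho^{-1}\in\M_{d_2}$ (so that $A_{i}\rho=c_{i}$ and $A_{e}\rho=\rho$, by the range inclusion), we define
\[
T(X):=\sum_{i,j}\bracket{i}{X}{j}\,A_{i}\rho A_{j}^{\dagger}\qquad(X\in\M_{d_1}).
\]
This map is completely positive, being in Kraus form: diagonalising $\rho=\sum_{\mu}\proj{f_{\mu}}{f_{\mu}}$ and setting $B_{\mu}:=\sum_{i}\lb A_{i}\ket{f_{\mu}}\rb\bra{i}$ gives $T(X)=\sum_{\mu}B_{\mu}XB_{\mu}^{\dagger}$. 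A short computation with $A_{i}\rho=c_{i}$, $\rho A_{j}^{\dagger}=c_{j}^{\dagger}$ and $A_{e}\rho=\rho$ shows $T(\proj{i}{j})=P(\proj{i}{j})$ whenever $i=s$ or $j=s$, so $Q:=P-T$ kills every matrix unit touching the index $s$; that is, $Q$ is an $\lset s\rset$-trivial lifting, and it only remains to show $Q$ is $(k-1)$-positive.

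For the key step, since $Q$ vanishes on the $\ket{e}$-direction it is enough to test $\ident_{k-1}\otimes Q$ on vectors $\ket{\psi}=\sum_{a=1}^{k-1}\ket{a}\otimes\ket{v_{a}}$ with $\ket{v_{a}}\in\C^{d_1-1}$ and to verify $\sum_{a,b}\proj{a}{b}\otimes Q(\proj{v_{a}}{v_{b}})\geq 0$. The idea is to reinstate the deleted coordinate: let $\ket{\psi'}:=\ket{\psi}+\ket{k}\otimes\ket{e}\in\C^{k}\otimes\C^{d_1}$, so that $k$-positivity of $P$ makes $(\ident_{k}\otimes P)(\proj{\psi'}{\psi'})$ a positive $k\times k$ operator block matrix with top-left $(k-1)\times(k-1)$ block $\sum_{a,b}\proj{a}{b}\otimes P(\proj{v_{a}}{v_{b}})$, bottom-right block $\rho$, and $(a,k)$-block $c_{v_{a}}$. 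Taking the Schur complement with respect to the last block --- legitimate because $\text{Ran}(c_{v}^{\dagger})\subseteq\text{Ran}(\rho)$ --- yields
\[
\sum_{a,b}\proj{a}{b}\otimes\lb P(\proj{v_{a}}{v_{b}})-c_{v_{a}}\rho^{-1}c_{v_{b}}^{\dagger}\rb\geq 0 .
\]
The final check is the bookkeeping identity $c_{v}\rho^{-1}c_{w}^{\dagger}=T(\proj{v}{w})$ for all $\ket{v},\ket{w}$, which follows from $\sum_{i}\braket{i}{v}A_{i}=c_{v}\rho^{-1}$ (linearity of $\ket{v}\mapsto c_{v}$) together with $\rho^{-1}\rho\rho^{-1}=\rho^{-1}$; substituting it, the last display becomes exactly $(\ident_{k-1}\otimes Q)(\proj{\psi}{\psi})\geq 0$. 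As $\ket{\psi}$ was arbitrary (and $\ket{e}$-components of a general test vector contribute nothing to $Q$ by $\lset s\rset$-triviality), $Q$ is $(k-1)$-positive, finishing the proof.

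I expect the main obstacle to be precisely the matching in the last step: the Kraus data of $T$ must be chosen so that $T(\proj{i}{j})=A_{i}\rho A_{j}^{\dagger}$ simultaneously reproduces $P$ on all border matrix units and reproduces the Schur-complement correction term $c_{v_{a}}\rho^{-1}c_{v_{b}}^{\dagger}$; carrying this out forces one to track the pseudoinverse through several range identities ($A_{i}\rho=c_{i}$, $A_{e}\rho=\rho$, $\rho^{-1}\rho\rho^{-1}=\rho^{-1}$), which is where the only real care is needed since $\rho$ is singular in general. The degenerate case $\rho=0$ is handled at once: the range inclusion then forces $c_{v}=0$ for all $\ket{v}$, so $P$ is already an $\lset s\rset$-trivial lifting and one takes $T=0$. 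The hypotheses enter only lightly --- $k\geq 2$ to have $2$-positivity available and $k-1\geq 1$, and $k\leq\min(d_1,d_2)$ to keep $k$-positivity meaningful.
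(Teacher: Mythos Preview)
Your proof is correct and follows exactly the approach the paper intends: the paper does not give its own proof of this theorem but refers to the argument in~\cite{yang2016all} (with the sole remark that one should allow $T=0$, which you handle explicitly in the degenerate case $\rho=0$). Your write-up is a faithful and careful execution of that Choi-type ``peel off the $s$-th row/column via a Schur complement'' argument, with all pseudoinverse range conditions verified.
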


Note that the Choi decomposion in~\cite{yang2016all} is a slightly different statement with the completely positive map $T$ being non-zero and only giving the existence of an index $s\in \lset 1,\ldots ,d_1\rset$ with the stated properties. However, it is easy to see that the same proof also yields the previous result by allowing the completely positive map $T$ to be zero. As in~\cite{yang2016all} iterating the Choi decomposition yields the following: 

\begin{cor}
For $k\in \lset 2,\ldots, \min(d_1,d_2)\rset$ and any subset $\mathcal{S}\subseteq\lset 1,\ldots ,d_1\rset$ with $\left|\mathcal{S}\right| \leq k-1$, any $k$-positive map $P:\M_{d_1}\ra \M_{d_2}$ can be written as 
\[
P = Q + T\, ,
\]
where $T:\M_{d_1}\ra \M_{d_2}$ is completely positive and $Q:\M_{d_1}\ra \M_{d_2}$ is a $(k-\left|\mathcal{S}\right|)$-positive $\mathcal{S}$-trivial lifting.
\label{cor:Decomp}
\end{cor}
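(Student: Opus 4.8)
The plan is to prove Corollary~\ref{cor:Decomp} by induction on $m := \left|\mathcal{S}\right|$, applying the Choi decomposition (Theorem~\ref{thm:Decomp}) once for each element of $\mathcal{S}$ and, after each application, compressing the input algebra so that the trivial-lifting structure accumulated so far is preserved automatically. The base case $m = 0$ is immediate: take $Q = P$ and $T = 0$ (recall that Theorem~\ref{thm:Decomp} is here stated so as to allow the completely positive summand to vanish).

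The bookkeeping device I would isolate first is the following elementary equivalence. For a subset $\mathcal{R}\subseteq\lset 1,\ldots,d_1\rset$ let $W_{\mathcal{R}}:\C^{d_1-|\mathcal{R}|}\ra\C^{d_1}$ be an isometry with range $\mathrm{span}\lset\ket{i}:i\notin\mathcal{R}\rset$. Then a linear map $L:\M_{d_1}\ra\M_{d_2}$ is an $\mathcal{R}$-trivial lifting if and only if $L = \tilde{L}\circ\text{Ad}_{W_{\mathcal{R}}^\dagger}$ for some linear $\tilde{L}:\M_{d_1-|\mathcal{R}|}\ra\M_{d_2}$ (one may take $\tilde{L} = L\circ\text{Ad}_{W_{\mathcal{R}}}$). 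Because $\text{Ad}_{W_{\mathcal{R}}}$ and $\text{Ad}_{W_{\mathcal{R}}^\dagger}$ are completely positive, $\tilde{L}$ is $\ell$-positive (respectively completely positive) exactly when $L$ is; moreover, since $W_{\mathcal{R}}^\dagger\ket{i} = 0$ for $i\in\mathcal{R}$, if in addition $\tilde{L}$ is an $\mathcal{R}''$-trivial lifting for a subset $\mathcal{R}''$ of the relabelled index set $\lset 1,\ldots,d_1-|\mathcal{R}|\rset$, then $L$ is trivial on $\mathcal{R}$ together with the preimage of $\mathcal{R}''$.

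For the inductive step assume $1\le m\le k-1$, fix $s\in\mathcal{S}$, and apply Theorem~\ref{thm:Decomp} to $P$ with this $k$ and index $s$ -- legitimate since $k\ge m+1\ge 2$ and $\min(d_1,d_2)\ge k$ -- to get $P = T_0 + Q_0$ with $T_0$ completely positive and $Q_0$ a $(k-1)$-positive $\lset s\rset$-trivial lifting. Using the equivalence with $\mathcal{R} = \lset s\rset$, write $Q_0 = \tilde{Q}_0\circ\text{Ad}_{W_{\lset s\rset}^\dagger}$ with $\tilde{Q}_0:\M_{d_1-1}\ra\M_{d_2}$ a $(k-1)$-positive map; note $\min(d_1-1,d_2)\ge k-1$. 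The relabelled image of $\mathcal{S}\setminus\lset s\rset$ inside $\lset 1,\ldots,d_1-1\rset$ has size $m-1\le(k-1)-1$, so the induction hypothesis applies to $\tilde{Q}_0$ and yields $\tilde{Q}_0 = \tilde{T}_1 + \tilde{Q}_1$ with $\tilde{T}_1$ completely positive and $\tilde{Q}_1$ a $(k-m)$-positive trivial lifting for that set. Setting $T := T_0 + \tilde{T}_1\circ\text{Ad}_{W_{\lset s\rset}^\dagger}$ and $Q := \tilde{Q}_1\circ\text{Ad}_{W_{\lset s\rset}^\dagger}$ gives $P = T + Q$ with $T$ completely positive, $Q$ a $(k-m)$-positive map, and -- by the last clause of the equivalence -- $Q$ an $\mathcal{S}$-trivial lifting, which closes the induction.

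The only genuinely delicate point, and the reason I would route the argument through the compressions $\text{Ad}_{W_{\mathcal{R}}^\dagger}$ rather than merely asserting that one ``iterates Theorem~\ref{thm:Decomp}'', is making sure that each fresh application of the Choi decomposition does not destroy the trivial-lifting structure produced in earlier rounds; the compression reduces this to the observation that $W_{\mathcal{R}}^\dagger\ket{i}=0$ for $i\in\mathcal{R}$. Everything else -- the arithmetic $k\mapsto k-1\mapsto\cdots$ on positivity degrees and the dimension estimates $\min(d_1-j,d_2)\ge k-j$ -- is routine, and one should simply check that the statement and the argument still make sense in the extreme case $|\mathcal{S}| = k-1$, where $Q$ is only claimed to be $1$-positive, i.e.\ positive.
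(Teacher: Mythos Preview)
Your proof is correct and follows essentially the same approach as the paper, which does not give an explicit argument but simply states that the corollary follows by ``iterating the Choi decomposition'' as in~\cite{yang2016all}. You have made this iteration explicit and, via the compression device $\text{Ad}_{W_{\mathcal{R}}^\dagger}$, correctly handled the one point the paper leaves tacit---namely that each successive application of Theorem~\ref{thm:Decomp} must preserve the trivial-lifting structure accumulated in earlier rounds.
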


Now we can obtain the following theorem on the structure of positive bipartite matrices with Schmidt number greater than $2$. After completion of this work and of~\cite{huber2018high} we learned that this result can also be found in~\cite[Lemma 15]{chen2017schmidt}. We present a different proof based on the Choi decomposition.

\begin{thm}[Block structure from Schmidt number~\cite{chen2017schmidt}]
For $d_1\leq d_2$ consider a matrix $X\in \lb\M_{d_1}\otimes \M_{d_2}\rb^+$ written as
\[
X = \sum^{d_1}_{i,j=1} \proj{i}{j}\otimes X_{ij}.
\]
with blocks $X_{ij}\in\M_{d_2}$ for $i,j\in\lset 1,\ldots ,d_1\rset$. If $\text{SN}(X)\geq 2$ and $l\in\lset 1,\ldots ,\text{SN}(X)\rset$, then for any 
\[
\lset m_1,\ldots , m_{d_1 - l +2}\rset\subseteq \lset 1,\ldots,d_1\rset,
\]
the principal sub-block matrix 
\[
Y = \sum^{d_1 -l + 2}_{s,t=1} \proj{s}{t}\otimes X_{m_s m_t}\in \lb \M_{d_1 - l +2}\otimes \M_{d_2}\rb^+
\] 
satisfies $\text{SN}(Y)\geq \text{SN}(X)-l+2$. In particular, $Y$ is entangled for $l=\text{SN}(X)$.
\label{thm:SubBlockEntSN}
\end{thm}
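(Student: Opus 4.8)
The plan is to prove the contrapositive-style bound by induction on $l$, using the Choi decomposition from Theorem~\ref{thm:Decomp} as the engine. The case $l=1$ is vacuous ($Y$ is all of $X$ up to a permutation of indices, and the Schmidt number is unchanged under a local unitary that implements the reindexing, so the claim reads $\text{SN}(Y)\geq\text{SN}(X)+1$ only when $d_1-l+2=d_1+1>d_1$, which is impossible, so there is nothing to check; more usefully the base case for the induction should be $l=2$). For $l=2$ the statement says that for \emph{any} choice of $d_1$ indices — i.e.\ any principal sub-block of size $d_1\times d_1$ in the block structure — the Schmidt number does not drop, which when $d_1-l+2=d_1$ is again trivial, so the first genuine content appears when we delete one index, i.e.\ we want: deleting a single row/column block from $X$ drops $\text{SN}$ by at most $1$. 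I would isolate this as the key lemma and then iterate it $l-2$ times.

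The key single-step lemma is best proved by duality. Recall that $\text{SN}(X)\geq k$ if and only if there exists a $k$-positive map $P:\M_{d_2}\to\M_{d_1}$ (or $\M_{d_1}\to\M_{d_2}$, depending on convention) such that $(\ident\otimes P)(X)\not\geq 0$; this is the Schmidt-number analogue of the entanglement-witness duality and follows from the cone-duality circle of ideas (Theorem~\ref{thm:ConeDuality} and the mapping-cone formalism of \cite{skowronek2009cones}, together with \cite{terhal2000schmidt}). So suppose $\text{SN}(X)=m$ and let $Y$ be the sub-block obtained by deleting the index $i_0$; I want $\text{SN}(Y)\geq m-1$. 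Suppose for contradiction $\text{SN}(Y)\leq m-2$. Then for every $(m-1)$-positive $P$ we have $(\ident\otimes P)(Y)\geq 0$. Now take a $k$-positive witness $P$ for $X$ with $k=m$: we have $(\ident_{d_1}\otimes P)(X)\not\geq 0$. Apply the Choi decomposition (Theorem~\ref{thm:Decomp}) to $P$ with $s=i_0$: write $P=Q+T$ with $T$ completely positive and $Q$ an $(m-1)$-positive $\{i_0\}$-trivial lifting. Since $T$ is completely positive, $(\ident\otimes T)(X)\geq 0$, so $(\ident\otimes Q)(X)\not\geq 0$. But $Q$ is $\{i_0\}$-trivial, which means $(\ident_{d_1}\otimes Q)(X)$ only sees the blocks $X_{ij}$ with $i,j\neq i_0$ — i.e.\ it equals (after the obvious embedding) $(\ident_{d_1-1}\otimes Q')(Y)$ for the corresponding $(m-1)$-positive map $Q'$ on the smaller space. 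Hence $(\ident\otimes Q')(Y)\not\geq 0$, contradicting $\text{SN}(Y)\leq m-2$ via the witness characterization. This gives $\text{SN}(Y)\geq m-1$ for a single deletion.

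Iterating: given a target index set of size $d_1-l+2$, we remove $l-2$ indices one at a time, and each removal costs at most one unit of Schmidt number, so $\text{SN}(Y)\geq m-(l-2)=\text{SN}(X)-l+2$. Setting $l=\text{SN}(X)$ gives $\text{SN}(Y)\geq 2$, i.e.\ $Y$ is entangled, which is the ``in particular'' clause. The main obstacle I anticipate is making the duality characterization of the Schmidt number precise and correctly oriented (which space $P$ acts on, and that $k$-positivity is exactly the right cone), and — the more delicate point — verifying cleanly that a $\{i_0\}$-trivial lifting $Q$ really does factor through the deleted sub-block, i.e.\ that $(\ident\otimes Q)(X)$ depends only on $Y$ and that the $(m-1)$-positivity of $Q$ transfers to the induced map $Q'$ on $\M_{d_2}\to\M_{d_1-1}$. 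One must also be a little careful that the intermediate sub-blocks encountered during the iteration still have Schmidt number at least the value being propagated, but this is automatic since the single-step lemma was proved for an arbitrary matrix $X$ of the given block form, so it applies verbatim at each stage. A slicker alternative, avoiding the full duality machinery, would be to run the Choi decomposition directly on a suitable positive map built from a generic witness, but the inductive-deletion argument above is the cleanest route and mirrors exactly the iteration already used in \cite{yang2016all}.
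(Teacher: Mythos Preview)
Your approach is essentially the paper's: use the Schmidt-number witness characterization to obtain a $k$-positive map detecting $X$, apply the Choi decomposition (Theorem~\ref{thm:Decomp}) to peel off a completely positive part, and observe that the remaining trivial lifting only sees the sub-block $Y$. The paper does all $l-2$ deletions in a single stroke by invoking the iterated Corollary~\ref{cor:Decomp} for the whole set $\mathcal{S}=\{1,\dots,d_1\}\setminus\{m_1,\dots,m_{d_1-l+2}\}$, rather than inducting one index at a time, but this is a cosmetic difference since that corollary is itself proved by exactly your iteration.

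Two concrete corrections, both of which you half-anticipated. First, the orientation: the witness $P$ must act on the \emph{first} tensor factor, i.e.\ $P:\M_{d_1}\to\M_{d_2}$ with $(P\otimes\ident_{d_2})(X)\not\geq 0$, because the trivial-lifting condition in Theorem~\ref{thm:Decomp} kills matrix units on the \emph{input} side of $P$, and you need those to be the block indices $i,j\in\{1,\dots,d_1\}$. With your $(\ident_{d_1}\otimes P)$ convention the trivial lifting would zero out indices in $\M_{d_2}$, which is useless for the block structure you want. Second, the degree of positivity is off by one: $\text{SN}(X)=m$ means there is an $(m-1)$-positive (not $m$-positive) map $P$ that fails on $X$; after one Choi decomposition the lifting $Q$ is $(m-2)$-positive, which witnesses $\text{SN}(Y)\geq m-1$ as desired. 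Once these are fixed your argument goes through verbatim.
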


\begin{proof}
Setting $\text{SN}\lb X\rb = k$, there exists a $(k-1)$-positive map $P:\M_{d_1}\ra \M_{d_2}$ such that 
\begin{equation}
(P\otimes \mathrm{id}_{d_2})\lb X\rb\ngeq 0\, .
\label{equ:wit}
\end{equation}
For $l\in\lset 1,\ldots ,k\rset$ and distinct indices $m_1,\ldots , m_{d_1 - l +2}\in \lset 1,\ldots,d_1\rset$ we can apply Corollary~\ref{cor:Decomp} for the set $\mathcal{S} = \lset 1,\ldots,d_1\rset\setminus \lset m_1,\ldots , m_{d_1 - l +2}\rset$ with $\left|\mathcal{S}\right| = l-2$ such that
\[
P = Q + T\, ,
\]
where $T:\M_{d_1}\ra \M_{d_2}$ is completely positive and $Q:\M_{d_1}\ra \M_{d_2}$ is a $(k-l+1)$-positive $\mathcal{S}$-trivial lifting. Now we can conclude from \eqref{equ:wit} that 
\[
(Q\otimes \mathrm{id}_{d_2})\lb X\rb = \sum^{d_1}_{i,j=1} Q(\proj{i}{j})\otimes X_{ij} = \sum^{d_1-l+2}_{s,t=1} Q(\proj{m_s}{m_t})\otimes X_{m_sm_t} \ngeq 0\, , 
\]
where we used that $Q(\proj{i}{j})=0$ whenever $i\in \mathcal{S}$ or $j\in \mathcal{S}$. Since the map $Q$ is $(k-l+1)$-positive we have that the positive matrix
\[
Y'=\sum^{d_1-l+2}_{s,t=1} \proj{m_s}{m_t}\otimes X_{m_tm_s} = (V\otimes \one_{d_2})Y (V^\dagger\otimes \one_{d_2})\in \lb \M_{d_1 - l +2}\otimes \M_{d_2}\rb^+
\]
satisfies $\text{SN}(Y')\geq k-l+2$. Here $V:\C^{d_1-l+2}\ra \C^{d_1}$ is the isometry defined by $V\ket{s} = \ket{m_s}$ for $s\in \lset 1,\ldots ,d_1-l+2\rset$, and since the application of a local isometry preserves the Schmidt number the proof is finished.

\end{proof}

The previous theorem allows to convert certain statements about separability of positive matrices $X\in \lb\M_{d_1}\otimes \M_{d_2}\rb^+$ for $d_1=2$ into statements about the Schmidt number for arbitrary $d_1\leq d_2$. For example, it has been shown in \cite{kraus2000separability} that any positive matrix $X\in \lb\M_{2}\otimes \M_{d_2}\rb^+$ invariant under partial transposition on the first ($2$-dimensional) subsystem is separable. Applying the previous theorem in the case of maximal possible Schmidt number (and choosing $l$ maximal) to a general positive matrix invariant under partial transposition on the smaller of its two subsystems immediately implies the following theorem. 

\begin{thm}[Schmidt number of states invariant under partial transposition,~\cite{huber2018high}]
If a positive matrix $X\in \lb\M_{d_1}\otimes \M_{d_2}\rb^+$ with $2\leq d_1\leq d_2$ satisfies $(\vartheta_{d_1}\otimes \ident_{d_2})(X) = X$, then $\text{SN}\lb X\rb\leq d_1 -1$. 
\label{thm:SNPTInv}
\end{thm}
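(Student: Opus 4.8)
The plan is to argue by contradiction, combining the block-extraction bound of Theorem \ref{thm:SubBlockEntSN} with the separability result of Kraus, Cirac, Karnas and Lewenstein \cite{kraus2000separability} for $2\times d_2$ states that are invariant under the partial transpose of the qubit factor. First I would dispose of the trivial case: since $d_1\leq d_2$ we always have $\text{SN}(X)\leq \min(d_1,d_2)=d_1$, so if $\text{SN}(X)\leq d_1-1$ there is nothing to prove. Assume therefore that $\text{SN}(X)=d_1$; by hypothesis $d_1\geq 2$, so in particular $\text{SN}(X)\geq 2$ and Theorem \ref{thm:SubBlockEntSN} is applicable.

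Next I would translate the invariance hypothesis into block form. Writing $X=\sum_{i,j=1}^{d_1}\proj{i}{j}\otimes X_{ij}$ with blocks $X_{ij}\in\M_{d_2}$, a direct computation gives $(\vartheta_{d_1}\otimes\ident_{d_2})(X)=\sum_{i,j}\proj{i}{j}\otimes X_{ji}$, so $(\vartheta_{d_1}\otimes\ident_{d_2})(X)=X$ is equivalent to the symmetry $X_{ij}=X_{ji}$ for all $i,j$. Now apply Theorem \ref{thm:SubBlockEntSN} with $k=\text{SN}(X)=d_1$ and the maximal choice $l=k=d_1$, so that $d_1-l+2=2$. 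For any two-element subset $\lset m_1,m_2\rset\subseteq\lset 1,\ldots,d_1\rset$, the principal sub-block $Y=\sum_{s,t=1}^{2}\proj{s}{t}\otimes X_{m_sm_t}\in(\M_2\otimes\M_{d_2})^+$ then satisfies $\text{SN}(Y)\geq \text{SN}(X)-l+2=2$, i.e.~$Y$ is entangled.

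Finally I would observe that $Y$ inherits the partial-transpose invariance on the qubit factor: using $X_{m_sm_t}=X_{m_tm_s}$ we get $(\vartheta_2\otimes\ident_{d_2})(Y)=\sum_{s,t}\proj{s}{t}\otimes X_{m_tm_s}=Y$. By \cite{kraus2000separability} every such PPT-invariant state in $\M_2\otimes\M_{d_2}$ is separable, which contradicts $\text{SN}(Y)=2$. Hence the assumption $\text{SN}(X)=d_1$ is untenable and $\text{SN}(X)\leq d_1-1$.

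I do not expect a genuine obstacle here, since both nontrivial ingredients are already in place: the quantitative block-extraction inequality of Theorem \ref{thm:SubBlockEntSN}, and the $2\times d_2$ separability theorem of \cite{kraus2000separability}. The only points needing care are (i) verifying that the maximal choice of $l$ reduces the extracted principal sub-block to size exactly $2\times d_2$, so that the cited result applies, and (ii) checking that this sub-block is still invariant under the qubit partial transpose, which is immediate from the block symmetry $X_{ij}=X_{ji}$.
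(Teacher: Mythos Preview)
Your proposal is correct and follows exactly the paper's own argument: assume the Schmidt number is maximal, invoke Theorem~\ref{thm:SubBlockEntSN} with the maximal choice $l=d_1$ to extract an entangled $2\times d_2$ principal sub-block, observe that this sub-block inherits the partial-transpose invariance on the qubit factor via $X_{ij}=X_{ji}$, and contradict the separability result of \cite{kraus2000separability}. The paper states this proof in one sentence just before the theorem; you have simply spelled out the details.
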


\section{Operator Schmidt rank of Choi matrices}
\label{sec:Appendix2}

Recall that $\mathcal{R}\lb X\rb$ denotes the operator Schmidt rank (see Definition \ref{defn:OSRank}) of the bipartite matrix $X\in\M_{d_1}\otimes \M_{d_2}$. Here we will prove the following lemma, although it is probably well-known.

\begin{lem}
For a linear map $L:\M_{d_1}\ra\M_{d_2}$ we have $\text{rk}\lb L\rb = \mathcal{R}\lb C_L\rb$.
\label{lem:rkOSrank}
\end{lem}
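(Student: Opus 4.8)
The plan is to write down both quantities explicitly in terms of an operator Schmidt decomposition and then show that the decomposition on one side transfers directly to the other. First I would fix an orthonormal basis $\{E_\mu\}_{\mu=1}^{d_1^2}$ of $\M_{d_1}$ with respect to the Hilbert--Schmidt inner product; the key fact (from Lemma \ref{Lemma:tricks}, part 1, or by direct computation) is that the family $\{(E_\mu \otimes \one_{d_2})\ket{\Omega_{d_1}}\}$ behaves like a "conjugate basis", so that applying $\ident_{d_1}\otimes L$ to $\omega_{d_1}$ and re-expanding yields
\[
C_L = (\ident_{d_1}\otimes L)(\omega_{d_1}) = \sum_{\mu=1}^{d_1^2} \overline{E_\mu} \otimes L(E_\mu^\dagger),
\]
up to transpose/conjugate conventions that I would pin down carefully at the start. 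Since $\{\overline{E_\mu}\}$ is still an orthonormal basis of $\M_{d_1}$, this is "almost" an operator Schmidt decomposition except the operators $L(E_\mu^\dagger)$ on the second tensor factor need not be orthogonal.

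The main step is then a rank-counting argument. Write $L(E_\mu^\dagger) = \sum_\nu M_{\mu\nu} F_\nu$ for some fixed orthonormal basis $\{F_\nu\}$ of $\M_{d_2}$; the matrix $M = (M_{\mu\nu})$ is exactly the matrix of $L$ with respect to the bases $\{E_\mu^\dagger\}$ and $\{F_\nu\}$, so $\operatorname{rk}(L) = \operatorname{rk}(M)$. On the other hand, substituting gives $C_L = \sum_{\mu,\nu} M_{\mu\nu}\,\overline{E_\mu}\otimes F_\nu$, which is precisely the form to which one applies the standard fact that the operator Schmidt rank of a bipartite matrix expanded in a product of orthonormal bases equals the rank of its coefficient matrix (this is the operator analogue of the statement that the Schmidt rank of a bipartite vector equals the rank of its coefficient matrix, proved by singular value decomposition of $M$). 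Hence $\mathcal{R}(C_L) = \operatorname{rk}(M) = \operatorname{rk}(L)$. To make the SVD argument explicit: writing $M = U\Sigma V^\dagger$, set $A_i = \sum_\mu \overline{(U)}_{\mu i}\,\overline{E_\mu}$ and $B_i = \sum_\nu (V)_{\nu i} F_\nu$ scaled by the singular values; orthonormality of the $E_\mu$ and $F_\nu$ together with unitarity of $U,V$ shows the $A_i$ are mutually orthogonal, the $B_i$ are mutually orthogonal, and $C_L = \sum_{i=1}^{\operatorname{rk}(M)} A_i \otimes B_i$, so $\mathcal{R}(C_L) \le \operatorname{rk}(M)$; the reverse inequality follows because any shorter product decomposition of $C_L$ would, by the same orthogonality bookkeeping, express $M$ as a sum of fewer rank-one matrices.

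I expect the only real obstacle to be bookkeeping: getting the transposes, complex conjugates, and the placement of $\dagger$ consistent between the Choi-matrix convention used in the paper (namely $C_L = (\ident_{d_1}\otimes L)(\omega_{d_1})$ with $\omega_{d_1} = \proj{\Omega_{d_1}}{\Omega_{d_1}}$) and the expansion above. Once the identity $C_L = \sum_{\mu,\nu} M_{\mu\nu}\,\overline{E_\mu}\otimes F_\nu$ with $M$ the matrix of $L$ is established, the equality of ranks is immediate from the SVD. Since the paper explicitly notes the lemma "is probably well-known," I would keep the write-up short and lean on Lemma \ref{Lemma:tricks} for the one nontrivial computational input.
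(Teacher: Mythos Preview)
Your proposal is correct and follows essentially the same idea as the paper: both arguments amount to applying the singular value decomposition to $L$ (the paper does this directly to $L$ as a linear map between Hilbert--Schmidt spaces, while you route through its coefficient matrix $M$ in explicit orthonormal bases, which is the same SVD) and then reading off the operator Schmidt decomposition of $C_L$ from the resulting expression $C_L = \sum_i \overline{A}_i \otimes B_i$. The only difference is that the paper's write-up is shorter because it skips the intermediate basis-expansion and coefficient-matrix step; your anticipated bookkeeping issue (it should be $L(E_\mu)$ rather than $L(E_\mu^\dagger)$ in the expansion $C_L = \sum_\mu \overline{E_\mu}\otimes L(E_\mu)$) is indeed the only thing to pin down and does not affect the argument.
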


\begin{proof}
Applying the singular value decomposition to $L:\M_{d_1}\ra\M_{d_2}$ as a linear map on the Hilbert-Schmidt inner product space, we find sets of non-zero mutually orthogonal operators $\lset A_i\rset^{\text{rk}\lb L\rb}_{i=1}\subset \M_{d_1}$ and $\lset B_i\rset^{\text{rk}\lb L\rb}_{i=1}\subset\M_{d_2}$ such that 
\[
L(X) = \sum^{\text{rk}\lb L\rb}_{i=1} \text{Tr}\lbr A^\dagger_i X\rbr B_i,
\]
for any $X\in\M_{d_1}$. Now, a simple computation shows that 
\[
C_L = \sum^{\text{rk}\lb L\rb}_{i=1} \overline{A}_i\otimes B_i,
\]
and therefore $\text{rk}\lb L\rb = \mathcal{R}(C_L)$.
\end{proof}

\bibliographystyle{IEEEtran}
\bibliography{wheneb}

\end{document}